\renewcommand{\paragraph}{\roman{paragraph}}
\renewcommand\title[1]{\gdef\@title{\reset@font\Large\bfseries #1}}
\renewcommand\section{\@startsection {section}{1}{\z@}%
                                   {-3.5ex \@plus -1ex \@minus -.2ex}%
                                   {2.3ex \@plus.2ex}%
                                   {\normalfont\large\bfseries}}
\renewcommand\subsection{\@startsection{subsection}{2}{\z@}%
                                     {-3ex\@plus -1ex \@minus -.2ex}%
                                     {1.5ex \@plus .2ex}%
                                     {\normalfont\normalsize\bfseries}}
\renewcommand\subsubsection{\@startsection{subsubsection}{3}{\z@}%
                                     {-2.5ex\@plus -1ex \@minus -.2ex}%
                                     {1.5ex \@plus .2ex}%
                                     {\normalfont\normalsize\bfseries}}
\def\@runningauthor{}\newcommand{\runningauthor}[1]{\def\runningauthor{#1}}
\def\@runningtitle{}\newcommand{\runningtitle}[1]{\def\runningtitle{#1}}
\renewcommand{\ps@plain}{%
\renewcommand{\@evenhead}{\footnotesize\scshape \hfill\runningauthor\hfill}
\renewcommand{\@oddhead}{\footnotesize\scshape \hfill\runningtitle\hfill}}
\newcommand{\F}{\mathbb{F}}
\newcommand {\C}{{\mathcal{C}}}
\newcommand {\ccc}{{\mathbf{c}}}
\g@addto@macro\bfseries{\boldmath}
\theoremstyle{plain}
\newtheorem{theorem}{Theorem}
\newtheorem{lemma}[theorem]{Lemma}
\newtheorem{corollary}[theorem]{Corollary}
\newtheorem{proposition}[theorem]{Proposition}
\theoremstyle{definition}
\newtheorem{example}[theorem]{Example}
\theoremstyle{remark}
\newtheorem{remark}[theorem]{Remark}
\title{The $q$-ary antiprimitive BCH codes
\thanks{This research is supported by National Natural Science Foundation of China (12071001,61672036,12001175), Excellent Youth Foundation of Natural Science Foundation of Anhui Province (1808085J20), The Research Council of Norway under grant  (247742/O70) and Academic fund for outstanding talents in
universities (gxbjZD03).
}}
\runningtitle{The $q$-ary antiprimitive BCH codes}
\author{Hongwei Zhu \thanks{ School of Mathematical Sciences, Anhui University, Hefei, China. E-mail: zhwgood66@163.com}
\and Minjia Shi\thanks{School of Mathematical Sciences, Anhui University, Hefei, China. E-mail: smjwcl.good@163.com}
\and Xiaoqiang Wang\thanks{Hubei Key Laboratory of Applied Mathematics, Faculty of Mathematics and Statistics, Hubei University, Wuhan 430062, China.
email: waxiqq@163.com}
\and Tor Helleseth\thanks{Department of Informatics, University of Bergen, Bergen, Norway. E-mail: tor.helleseth@uib.no}\\
}
\runningauthor{}
\date{}
\begin{document}

\maketitle

\thispagestyle{empty}

\begin{abstract}
It is well-known that cyclic codes have efficient encoding and decoding algorithms. In recent years, antiprimitive BCH codes have attracted a lot of attention.
The objective of this paper is to study BCH codes of this type over finite fields and analyse their parameters. Some lower bounds on the minimum distance of antiprimitive BCH codes are given. The BCH codes presented in this paper have good parameters in general, containing many optimal linear codes. 
In particular, two open problems about the minimum distance of BCH codes of this type are partially solved in this paper.
\end{abstract}
{\bf Keywords:}  BCH code, cyclic code, coset leader, elementary symmetric polynomial.\\
{\bf MSC(2010):} 94 B15, 11 T71, 14 G50

\section{Introduction}
Throughout this paper, let $q$ be a power of a prime $p$ and let $\F_q$ denote a finite field with $q$ elements. A linear $[n,k,d]$ code $\C$ over $\F_q$ is a $k$-dimensional subspace of $\F_q^n$ with minimum distance $d$.
A very important class of linear codes is the class of {\it cyclic codes}, which are invariant
under cyclic shifts
\begin{equation*}
  (c_0,c_1,\ldots,c_{n-1})\longmapsto (c_{n-1},c_0,\ldots,c_{n-2})
\end{equation*}
of the coordinates.
 By identifying any vector $(c_0,c_1,\ldots,c_{n-1})\in\F_q^n$ with $c_0+c_1x+\cdots+c_{n-1}x^{n-1}\in\frac{\F_q[x]}{(x^n-1)},$ any code $\C$ of length $n$ over $\F_q$ corresponds to a subset of the principal quotient ring $\frac{\F_q[x]}{(x^n-1)}.$
 Let $\C=\langle g(x)\rangle$ be a cyclic code, where $g(x)$ is a monic polynomial of smallest degree among all the generators of $\C$. Then $g(x)$ is unique and called the {\it generator polynomial}, and $h(x)=\frac{x^n-1}{g(x)}$ is denoted the {\it parity-check polynomial} of $\C$.

Let $m={\rm ord}_n(q)$ be the order of $q$ modulo $n$, and let $\alpha$ be a generator of $\F_{q^m}^*.$ Let $\beta=\alpha^{\frac{q^m-1}{n}},$ then $\beta$ is a primitive $n$-th root of unity in $\F_{q^m}^*$. The {\it minimal polynomial} $\mathbb{M}_{\beta^s}(x)$ of $\beta^{s}$ over $\F_q$ is the monic polynomial of smallest degree over $\F_q$ with $\beta^s$ as a root.  Let $\delta$ be an integer with $2\leq\delta\leq n$ and let $b$ be an integer. A {\it BCH code} over $\F_q$ with length $n$ and designed distance $\delta$, denoted by $\C_{(q,n,\delta,b)},$ is a cyclic code with generator polynomial
\begin{equation*}\label{g}
  g_{(q,n,\delta,b)}(x)={\rm lcm}(\mathbb{M}_{\beta^{b}}(x),\mathbb{M}_{\beta^{b+1}}(x),\ldots,\mathbb{M}_{\beta^{b+\delta-2}}(x))
\end{equation*}
where the least common multiple is computed over $\F_q[x]$.
 When $b=1$, the code $\C_{(q,n,\delta,1)}$ is called a {\it narrow-sense BCH code}. If $n=q^l-1$, then $\C_{(q,n,\delta,b)}$ is referred to as a {\it primitive BCH code}.
 If $n=q^l+1$, it is called an {\it antiprimitive BCH code} by Ding in \cite{D55}.

BCH codes have been extensively studied in \cite{Aug,Augot,Ber,Bose,Charp,Charp1,DingL10,DingA,DingB,Desa,Fujiwa,Hocq,Hel,Kra,Ker,Kas,Kas1,Kas2,Kas3,Li1,LiL18,Liu1,LY1,Schoof,YueL26,YueL27,Zhu}.
Nevertheless, their parameters are known for
only a few special classes. As pointed out by Charpin in \cite{Charp} and Ding in \cite{D55},
the dimension and minimum distance of BCH codes are difficult to determine in general.
Until now, we have very limited knowledge of BCH codes, as the dimension and minimum distance of BCH codes are open in general. For a given alphabet, the dimension and minimum distance of a BCH code are known only for some special lengths and designed distances.

 In recent years, many researchers have paid considerable attention to antiprimitive BCH codes. In \cite{Li1,Liu1,YanL24}, the parameters of some classes of antiprimitive BCH codes are determined, and there are indeed many codes which have nice parameters. Linear complementary dual (LCD for short) codes are linear codes that intersect with their dual trivially.
   LCD codes can be used to protect against side-channel attacks and fault noninvasive attacks \cite{Car}.
    In particular, the antiprimitive BCH codes are LCD codes.
    Liu et al. studied the dimension of binary and nonbinary antipritimive BCH codes for some $m$ and $\delta$ in \cite{LY1,LY2}. It is worth mentioning that Ding and Tang also proved that some antiprimitive BCH codes hold several families of $t$-designs in \cite{D5,Tang1}. This has recently renewed the interest in determining the parameters of these codes.

The first objective of this paper is to give a  necessary and sufficient condition for $1\leq a\leq q^m$ being a
 coset leader modulo $q^m+1$, which is useful for us to study antiprimitive BCH codes with small dimension and to analyse their parameters.
 The second objective of this paper is to consider the minimum distance of antiprimitive BCH codes. To investigate the optimality
 of the codes studied in this paper, we compare them with the tables of the best known linear codes maintained in \cite{Gra}, and
some of  the proposed codes are optimal. In particular, two open problems recently proposed are partially solved in this paper.

The rest of this paper is organized as follows. Section II introduces some preliminaries. Section III
gives the necessary and sufficient condition for $1\leq a\leq q^m$ being a coset leader modulo $q^m+1$.
 Section IV
gives a
 lower bound on the minimum distance of $\C_{(q,q+1,\delta,1)}$, a sufficient condition for antiprimitive BCH codes to be MDS and some improved lower bounds on the minimum distance of the codes $\C_{(q,q^l+1,\delta,0)}$ and the codes $\C_{(q,q^l+1,\delta,b)}$, respectively. Two open problems in \cite{D5,Li1} are also partially solved in this section. Section V concludes the paper.


\section{Preliminaries}
In this section, we introduce some
basic concepts and known results on BCH codes, which will be used later in this paper.
\subsection{Notation used starting from now on}
Starting from now on, we adopt the following notation unless otherwise stated:
\begin{itemize}
\item $n=q^m+1$ and $\delta_i$ is the $i$-th largest coset leader modulo $n$.
\item ${\rm Tr}_q^{q^u}(\cdot)$ is the trace function from $\F_{q^u}$ to $\F_q$.
\item $\lfloor x \rfloor$ denotes the largest integer less than or equal to $x$.
\item $\beta$ is a primitive $n$-th root of unity in the finite field $\F_{q^{2m}}$.
\item  $m_i(x)$ denotes the minimal polynomial of $\beta^i$ over $\F_q$, where $1\leq i\leq n$.
\end{itemize}
\subsection{Several important linear codes and bounds}
For a vector $\ccc=(c_1,c_2,\ldots,c_n)$ in $\F_q^n$, we denote by
$wt(\ccc)=d(\ccc,\mathbf{0})$,
the {\it Hamming weight} of $\ccc$, where $d(\ccc,\mathbf{0})$ denotes the distance between $\ccc$ and $\mathbf{0}$.
The
{\it support} of a vector is the set of coordinates where the vector is
nonzero. We denote it as
$supp(\ccc)=\{i\in\{1,2,\ldots,n\}|c_i\neq0\}$.
In particular, $|supp(\ccc)|=wt(\ccc)$.

Let $\C$ be an $[n,k,d]$ code over $\F_q$, where $q=r^h$ for some prime power $r$ and some positive integer $h$. The {\it subfield subcode} of $\C$ over $\F_r$, denoted by $\C|_{\F_r}$, is defined by
$\C|_{\F_r}=\{\ccc\in \C: \ccc\in \F_r^n\}=\C\cap\F_r^n.$

If $\C$ is a linear code of length $n$ over $\F_q$, its dual code is defined by $\C^{\perp}=\{\mathbf{x}\in\F_q^n|(\mathbf{x},\mathbf{y})=\mathbf{x}\cdot\mathbf{y}^T=0~{\rm for~ all}~\mathbf{y}\in \C\},$ where $\mathbf{y}^T$ denotes the transposition of the vector $\mathbf{y}=(y_1,\ldots,y_n).$ A linear code $\C$ is called {\it linear complementary dual (LCD for short)} if $\C\bigcap \C^{\perp}=\{\mathbf{0}\}$. Let $f(x)=f_hx^h+f_{h-1}x^{h-1}+\cdots+f_1x+f_0$ be a polynomial over $\F_q$ with $f_h\neq0$ and $f_0\neq0$. The {\it reciprocal} $f^*(x)$ of $f(x)$ is defined by $f^*(x)=f_0^{-1}x^hf(x^{-1})$. A polynomial is {\it self-reciprocal} if it coincides with its reciprocal.
A code $\C$ is called {\it reversible} if $(c_0,c_1,\ldots,c_{n-1})\in \C$ implies that $(c_{n-1},c_{n-2},\ldots,c_0)\in\C.$
The following lemma which are known in \cite{Yang} and \cite[p.206]{MacW}
exhibits a relation between an LCD code and its generator polynomial $g(x)$.
\begin{lemma}\cite{Yang},\cite[p.206]{MacW}\label{LCD}
Let $\C$ be a cyclic code of length $n$ over $\F_q$ with generator polynomial $g(x)$. Then the following statements are equivalent.
\begin{itemize}
  \item $\C$ is an LCD code.
  \item $g(x)$ is self-reciprocal.
  \item $\beta^{-1}$ is root of $g(x)$ for every root $\beta$ of $g(x)$ over the splitting field of $g(x)$.
\end{itemize}
Furthermore, if $-1$ is a power of $q$ mod $n$, then every cyclic code over $\F_q$ of length $n$ is reversible.
\end{lemma}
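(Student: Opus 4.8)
I would prove the three listed statements equivalent by running the cycle (self-reciprocal generator) $\Leftrightarrow$ (roots of $g$ closed under inversion) $\Leftrightarrow$ (LCD), and then read off the ``furthermore'' clause from the defining-set description of a cyclic code. Throughout I would use that $\gcd(n,q)=1$ (true for every length relevant in this paper, in particular $n=q^m+1$; note the conclusion genuinely fails otherwise), so that $x^n-1$ is squarefree over $\F_q$ and hence a monic divisor of $x^n-1$ is completely determined by its set of roots among the $n$-th roots of unity; this fact is used repeatedly.

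The equivalence of the second and third statements is a short computation with the reciprocal polynomial. Write $g(x)=g_hx^h+\cdots+g_1x+g_0$; since $g(x)\mid x^n-1$ and $x\nmid x^n-1$ we have $g_0=g(0)\neq 0$, so $g^*(x)=g_0^{-1}(g_h+g_{h-1}x+\cdots+g_1x^{h-1}+g_0x^h)$ is again monic of degree $h$ with $g^*(0)=g_0^{-1}g_h\neq 0$. For $\gamma\neq 0$ one then has $g^*(\gamma)=g_0^{-1}\gamma^{h}g(\gamma^{-1})$, so $\gamma$ is a root of $g^*$ exactly when $\gamma^{-1}$ is a root of $g$; as $g$ and $g^*$ are both monic of degree $h$, $g=g^*$ if and only if they have the same roots, i.e. if and only if the roots of $g$ are closed under inversion.

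For (self-reciprocal) $\Leftrightarrow$ (LCD) I would pass to the dual code. With $h(x)=(x^n-1)/g(x)$ the parity-check polynomial, recall the classical fact (see, e.g., \cite{MacW}) that $\C^{\perp}$ is the cyclic code generated by the monic reciprocal $h^*(x)$, where $h^*(x)\mid(x^n-1)^*=x^n-1$. Since the intersection of two cyclic codes of length $n$ is generated by the least common multiple of their generator polynomials, $\C\cap\C^{\perp}=\langle\,{\rm lcm}(g(x),h^*(x))\,\rangle$, a cyclic code of dimension $n-\deg{\rm lcm}(g,h^*)$; hence $\C$ is LCD, i.e. $\C\cap\C^{\perp}=\{\mathbf{0}\}$, precisely when $\deg{\rm lcm}(g,h^*)=n$. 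Because $\deg g+\deg h^*=\deg g+\deg h=n$ and $\deg{\rm lcm}(g,h^*)+\deg\gcd(g,h^*)=\deg g+\deg h^*$, this is equivalent to $\gcd(g(x),h^*(x))=1$. The remaining step is to check $\gcd(g,h^*)=1\iff g=g^*$: taking reciprocals in $g(x)h(x)=x^n-1$ (the reciprocal being multiplicative on polynomials with nonzero constant term and fixing $x^n-1$) gives $g^*(x)h^*(x)=x^n-1=g(x)h(x)$. If $g=g^*$, cancelling $g$ yields $h=h^*$, and squarefreeness of $x^n-1$ gives $\gcd(g,h^*)=\gcd(g,h)=1$. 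Conversely, if $\gcd(g,h^*)=1$, then $g$ and $h^*$ are coprime monic divisors of $x^n-1$ with $\deg g+\deg h^*=n$, so $g(x)h^*(x)=x^n-1=g(x)h(x)$, forcing $h^*=h$ and hence $g^*=(x^n-1)/h^*=(x^n-1)/h=g$. This closes the cycle.

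For the ``furthermore'': if $-1\equiv q^{j}\pmod n$ for some integer $j$, then the defining set $T=\{\,i:\beta^{i}\text{ is a root of }g\,\}$ of an arbitrary cyclic code $\C$ of length $n$ over $\F_q$ is a union of $q$-cyclotomic cosets modulo $n$, hence stable under multiplication by $q^{j}$, hence stable under $i\mapsto -i$; so the roots of $g$ are closed under inversion and, by the equivalences above, $\C$ is LCD. Reversibility then follows because the reversal of $\C$ is the cyclic code with defining set $-T$ --- a reversed codeword corresponds to $x^{n-1}c(x^{-1})\bmod(x^n-1)$, which vanishes at $\beta^{i}$ exactly when $c$ vanishes at $\beta^{-i}$ --- so $\C$ coincides with its reversal iff $T=-T$. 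I expect the only step needing genuine care to be the use of the dual-generator identity $\C^{\perp}=\langle h^*(x)\rangle$ combined with the degree bookkeeping that converts $\C\cap\C^{\perp}=\{\mathbf{0}\}$ into $\gcd(g,h^*)=1$, together with keeping the monic normalizations of the various reciprocal polynomials consistent; the rest is routine.
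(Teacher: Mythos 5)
Your proof is correct. The paper does not actually prove this lemma --- it is quoted from Yang--Massey \cite{Yang} and MacWilliams--Sloane \cite[p.206]{MacW} --- and your argument is essentially the standard one from those sources: the root-set computation for $g^*$ gives the equivalence of the second and third items, the identification $\C\cap\C^{\perp}=\langle\operatorname{lcm}(g,h^*)\rangle$ together with the degree bookkeeping and the factorization $g^*h^*=x^n-1=gh$ gives the equivalence with LCD, and the ``furthermore'' clause follows from stability of the defining set under $i\mapsto q^{j}i=-i$. You are also right to flag the implicit hypothesis $\gcd(n,q)=1$ (needed for squarefreeness of $x^n-1$), which the paper leaves unstated but which holds for all lengths $n=q^m+1$ considered there.
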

It then follows from Lemma \ref{LCD} that every cyclic code of length $q^m+1$ over $\F_q$ is reversible. Then the antiprimitive BCH codes are LCD.

An $[n,k,n-k+1]$ linear code is called {\it maximum distance separable}, abbreviated MDS. An $[n,k,n-k]$ linear code is said to be {\it almost maximum distance separable} (AMDS for short). A code is said to be {\it near maximum distance sparable} (NMDS for short) if the code and its dual code are both AMDS.

The minimum distance of BCH codes has the following well-known bound.

\begin{lemma}\label{lemma2}
(The BCH bound) Let $\mathcal{C}$ be a cyclic code with generator polynomial $g(x)$ such that for some integers $b$ and $\delta\geq 2$,
$$g(\gamma^b)=g(\gamma^{b+1})=\cdots=g(\gamma^{b+\delta-2})=0,$$
i.e. the code has a string of $\delta-1$ consecutive powers of $\gamma$ as zeroes. Then the minimum distance of the code is at least $\delta$.
\end{lemma}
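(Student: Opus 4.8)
The plan is to argue by contradiction using the classical generalized Vandermonde argument. Suppose $\mathcal{C}$ contains a nonzero codeword $c(x)=\sum_{i=0}^{n-1}c_ix^i$ whose Hamming weight $w$ satisfies $w\le\delta-1$, and let $\{j_1,\ldots,j_w\}$ with $0\le j_1<j_2<\cdots<j_w\le n-1$ be its support, so $c_{j_k}\neq0$ for each $k$. Since $\mathcal{C}=\langle g(x)\rangle$ and every $\gamma^{b+i}$ with $0\le i\le\delta-2$ is a root of $g(x)$, it is also a root of every codeword polynomial; in particular $c(\gamma^{b+i})=0$ for $0\le i\le\delta-2$. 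Writing these out gives, for each such $i$,
\begin{equation*}
\sum_{k=1}^{w}c_{j_k}\,\gamma^{(b+i)j_k}=0.
\end{equation*}

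Next I would isolate the first $w$ of these relations (legitimate because $w\le\delta-1$, so there are at least $w$ consecutive exponents available) and view them as a homogeneous linear system $M\mathbf{v}=\mathbf{0}$, where $\mathbf{v}=(c_{j_1},\ldots,c_{j_w})^{T}\neq\mathbf{0}$ and $M=(M_{i,k})_{1\le i,k\le w}$ with $M_{i,k}=\gamma^{(b+i-1)j_k}$. The key observation is the factorization $M_{i,k}=\gamma^{bj_k}\,(\gamma^{j_k})^{i-1}$, i.e. $M=V\cdot\mathrm{diag}(\gamma^{bj_1},\ldots,\gamma^{bj_w})$, where $V$ is the Vandermonde matrix in the nodes $\gamma^{j_1},\ldots,\gamma^{j_w}$. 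Because $\gamma$ is a primitive $n$-th root of unity and $0\le j_1<\cdots<j_w\le n-1$, these nodes are pairwise distinct, so $\det V=\prod_{1\le k<l\le w}(\gamma^{j_l}-\gamma^{j_k})\neq0$; the diagonal factor has determinant $\prod_{k=1}^{w}\gamma^{bj_k}\neq0$. Hence $M$ is invertible and $\mathbf{v}=\mathbf{0}$, contradicting $c_{j_k}\neq0$. Therefore no nonzero codeword has weight at most $\delta-1$, and the minimum distance is at least $\delta$.

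I do not expect a serious obstacle here; the proof is essentially the structural fact that a nonzero vector supported on $w$ coordinates cannot vanish at $w$ distinct points under a Vandermonde-type pairing. The only point requiring a little care is bookkeeping: making sure that $\delta-1\ge w$ so that enough consecutive zeros $\gamma^{b},\ldots,\gamma^{b+\delta-2}$ are available to extract a square system, and invoking that the length $n$ equals the order of $\gamma$ so that the exponents $j_1,\ldots,j_w$, being distinct integers in $\{0,\ldots,n-1\}$, yield genuinely distinct nodes $\gamma^{j_k}$. Both are immediate in the cyclic-code setting, so the argument goes through cleanly.
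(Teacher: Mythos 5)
Your argument is correct: it is the classical Vandermonde proof of the BCH bound, and the factorization $M=V\cdot\mathrm{diag}(\gamma^{bj_1},\ldots,\gamma^{bj_w})$ together with the distinctness of the nodes $\gamma^{j_k}$ (which does require $\gamma$ to have order $n$, as you note) gives exactly the needed contradiction. The paper itself states this lemma as a well-known result and gives no proof, so there is nothing to compare against; your write-up is the standard textbook argument and is complete.
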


For some BCH codes, we have the following bound, which is much better than Lemma \ref{lemma2} when $\delta$ is getting large.

\begin{lemma}\label{lem:disance}\cite[Theorem 17]{Li1}
 Let $\C_{(q,q^m+1,\delta,b)}$ denote
the cyclic code of length $q^m+1$ with generator polynomial $g_{(q,q^m+1,\delta,b)}(x)$,
 then the code $\mathcal{C}_{(q,q^m+1,\delta,0)}$ has minimum distance $d\geq 2(\delta-1)$.
\end{lemma}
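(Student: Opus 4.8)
The plan is to exhibit, among the roots of the generator polynomial of $\C_{(q,q^m+1,\delta,0)}$, a long run of consecutive powers of $\beta$, and then invoke the BCH bound (Lemma \ref{lemma2}). Put $n=q^m+1$ and let $T=\{\,0\le i\le n-1 : g_{(q,q^m+1,\delta,0)}(\beta^i)=0\,\}$ be the defining set of the code. By construction of the generator polynomial, $T$ is the union of the $q$-cyclotomic cosets modulo $n$ of the integers $0,1,\dots,\delta-2$, so in particular $\{0,1,\dots,\delta-2\}\subseteq T$.

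The key step is the symmetry $C_{-i}=C_i$ of the $q$-cyclotomic cosets modulo $n$, which holds because $-i\equiv iq^m\pmod n$ and $q^m$ is a power of $q$; this is exactly the reversibility of cyclic codes of length $q^m+1$ recorded after Lemma \ref{LCD} (equivalently, the generator polynomial is self-reciprocal). Hence $-i\in T$ whenever $i\in T$, and combining with $\{0,1,\dots,\delta-2\}\subseteq T$ yields
$$\{\,n-\delta+2,\ n-\delta+3,\ \dots,\ n-1,\ 0,\ 1,\ \dots,\ \delta-2\,\}\ \subseteq\ T .$$
If $2\delta-2\le n$, these are $2\delta-3$ distinct and consecutive residues modulo $n$, so $\beta^{-(\delta-2)},\beta^{-(\delta-3)},\dots,\beta^{\delta-2}$ is a string of $2\delta-3$ consecutive powers of $\beta$ at which the generator polynomial vanishes; Lemma \ref{lemma2} then gives $d\ge(2\delta-3)+1=2(\delta-1)$. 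If $2\delta-2>n$, the displayed set is already all of $\Z_n$, hence $g_{(q,q^m+1,\delta,0)}(x)=x^n-1$, the code is $\{\mathbf 0\}$, and the bound holds vacuously. In all cases $d\ge 2(\delta-1)$.

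The argument is short because the content lies entirely in the reflective structure $C_{-i}=C_i$, which doubles the guaranteed block of consecutive roots around the anchor $\beta^0$. The only point needing care is the index bookkeeping showing that the reflected exponents $-(\delta-2),\dots,\delta-2$ genuinely form a block of consecutive residues modulo $n$ and do not overrun $\Z_n$ — which is precisely why the case split on $2\delta-2$ versus $n$ appears; beyond that I anticipate no obstacle.
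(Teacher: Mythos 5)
Your proof is correct and is essentially the same argument the paper relies on (via the cited Theorem 17 of Li--Ding, and as sketched again in the proof of Theorem \ref{TH10}): the defining set of $\C_{(q,q^m+1,\delta,0)}$ is closed under $i\mapsto -i$ because $q^m\equiv -1\pmod{q^m+1}$, so the anchor block $\{0,1,\dots,\delta-2\}$ reflects to a run of $2\delta-3$ consecutive residues and the BCH bound gives $d\geq 2(\delta-1)$. Your index bookkeeping and the degenerate case $2\delta-2>n$ are handled correctly, so there is nothing to add.
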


\subsection{Known results on parameters of $\C_{(q,q^m+1,\delta,1)}$ and $\C_{(q,q^m+1,\delta,0)}$ }
Some known results on the parameters of the codes $\C_{(q,q^m+1,\delta,1)}$ and $\C_{(q,q^m+1,\delta,0)}$ in \cite{Li1,Liu1} are listed by the following lemmas.
\begin{lemma}\cite{Li1}\label{Li11}
Let $q$ be a prime power and $m$, $\delta$ be positive integers. Then the code $\C_{(q,q^m+1,\delta,0)}$ has minimum distance $d\geq 2(\delta-1)$.
Moreover, for any integer $\delta$ with $3\leq \delta\leq q^{\lfloor\frac{m-1}{2}\rfloor}+3$, the reversible code $\C_{(q,q^m+1,\delta,0)}$ has parameters $$\left[q^m+1,q^m-2m\left(\delta-2-\left\lfloor\frac{\delta-2}{q}
\right\rfloor
\right),d\geq 2(\delta-1)\right].$$
\end{lemma}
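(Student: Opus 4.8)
The plan is to read off the length and the reversibility directly from the construction, to obtain $d\ge 2(\delta-1)$ from Lemma~\ref{lem:disance}, and to do the only real work on the dimension via a count of $q$-cyclotomic cosets modulo $n=q^m+1$.

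First, the length is $q^m+1$ by definition, and since $q^m\equiv-1\pmod{q^m+1}$ the code is reversible by Lemma~\ref{LCD}. The bound $d\ge 2(\delta-1)$ is exactly Lemma~\ref{lem:disance}; for a self-contained argument one observes that, the code being reversible, $g_{(q,q^m+1,\delta,0)}(x)$ is self-reciprocal, so it has $\beta^{-1},\beta^{-2},\dots,\beta^{-(\delta-2)}$ among its zeros in addition to $\beta^{0},\beta^{1},\dots,\beta^{\delta-2}$; this exhibits the $2\delta-3$ consecutive powers $\beta^{-(\delta-2)},\dots,\beta^{\delta-2}$ among its roots, and the BCH bound (Lemma~\ref{lemma2}) gives $d\ge 2(\delta-1)$.

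For the dimension I would use $\dim\C_{(q,q^m+1,\delta,0)}=(q^m+1)-\deg g_{(q,q^m+1,\delta,0)}(x)=(q^m+1)-|Z|$, where $Z=\bigcup_{i=0}^{\delta-2}C_i$ and $C_i$ denotes the $q$-cyclotomic coset of $i$ modulo $n=q^m+1$, and count $|Z|$ in three parts: (i) $C_0=\{0\}$ contributes $1$; (ii) whenever $q\mid i$ with $1\le i\le\delta-2$, writing $i=q\cdot(i/q)$ with $i<n$ gives $C_i=C_{i/q}$ and $1\le i/q<i$, so iterating shows $C_i=C_j$ for some $j\in\{1,\dots,\delta-2\}$ with $q\nmid j$; hence the $\lfloor\frac{\delta-2}{q}\rfloor$ multiples of $q$ in $\{1,\dots,\delta-2\}$ are redundant; (iii) for each of the remaining $\delta-2-\lfloor\frac{\delta-2}{q}\rfloor$ indices $i\in\{1,\dots,\delta-2\}$ with $q\nmid i$, the coset $C_i$ has cardinality exactly $2m$, and these cosets are pairwise distinct, hence disjoint, and disjoint from $\{0\}$. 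Granting (iii) we obtain $|Z|=1+2m\bigl(\delta-2-\lfloor\frac{\delta-2}{q}\rfloor\bigr)$, so $\dim\C_{(q,q^m+1,\delta,0)}=q^m-2m\bigl(\delta-2-\lfloor\frac{\delta-2}{q}\rfloor\bigr)$, and combining this with the second paragraph yields the stated parameters.

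The crux, and the step I expect to be the main obstacle, is (iii). Since $\mathrm{ord}_n(q)=2m$ and $q^{m}\equiv-1\pmod n$, one has $C_i=\{\,\pm iq^{s}\bmod n:0\le s\le m-1\,\}$, so $|C_i|<2m$ can happen only through a congruence $iq^{s}\equiv\pm i\pmod n$ with $1\le s\le m-1$, and $C_i=C_j$ with $i\ne j$ only through $j\equiv\pm iq^{s}\pmod n$ for some such $s$. For each such putative congruence I would reduce the left-hand side modulo $n=q^m+1$ (allowing more than one wrap-around, since $iq^s$ may exceed $n$) and show, using $i,j\le q^{\lfloor(m-1)/2\rfloor}+1$ together with $s\le m-1$, that the reduced value cannot match the target unless $s=0$ with the plus sign (which forces $i=j$), or unless the resulting $j$ fails to satisfy both $1\le j\le\delta-2$ and $q\nmid j$, contrary to the choice of $j$. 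This case analysis is elementary but needs the parities of $m$ treated separately, and it is precisely here that the threshold $q^{\lfloor(m-1)/2\rfloor}+3$ is forced; once it is carried out, the rest is bookkeeping.
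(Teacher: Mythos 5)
The paper itself offers no proof of this statement: it is imported verbatim from \cite{Li1}, so there is no in-house argument to compare against. Your overall strategy --- read off the length, get reversibility from Lemma~\ref{LCD}, get $d\ge 2(\delta-1)$ by doubling the root run and applying the BCH bound (Lemma~\ref{lemma2}), and compute the dimension by counting $q$-cyclotomic cosets --- is exactly the standard route and is what the cited source does. Your steps (i) and (ii) and the distance argument are correct and complete as written.

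The genuine gap is step (iii), which you correctly identify as the crux and then do not carry out; it is not a technical footnote but the entire content of the dimension formula. Worse, the claim in (iii) is actually false at the boundary of the stated range, so the case analysis you propose cannot be closed as described. Take $q=2$, $m=3$, $\delta=5$, which is admissible since $q^{\lfloor(m-1)/2\rfloor}+3=5$. Then $n=9$, the index $i=3$ satisfies $1\le i\le\delta-2=3$ and $q\nmid i$, yet $3\cdot 2^2=12\equiv 3\pmod 9$, so $C_3=\{3,6\}$ has cardinality $2$, not $2m=6$. Consequently $C_0\cup C_1\cup C_2\cup C_3=\Z_9$, the code $\C_{(2,9,5,0)}$ is the zero code of dimension $0$, while the stated formula returns $8-6\cdot 2=-4$. (The obstruction is exactly $\gcd(q^m+1,q^2-1)=q+1$ for $m=3$, cf.\ Lemma~\ref{lemma5}: an index $i$ divisible by $(q^m+1)/(q^{\gcd(m,t)}+1)$ can have a short coset, and for $q=2$, $m=3$ this threshold is $3\le\delta-2$.) So before attempting the congruence analysis you sketch, you must either tighten the range of $\delta$, exclude this corner case, or verify that the normalization of $\delta$ in \cite{Li1} differs from the one quoted here; as submitted, the decisive step is both missing and, over the full stated range, unprovable.
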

\begin{lemma}\cite{Liu1}\label{Liu11}
Let $m\geq 3$ be an integer and $h=\left\lfloor\frac{m-1}{2}\right\rfloor$. There are the following results for $2\leq\delta\leq q^{h+1}$:
\begin{itemize}
  \item If $m\geq 4$ is an even integer, then $\C_{(q,q^m+1,\delta,1)}$ has parameters
      $$\left[q^m+1,q^m+1-2m
      \left(\delta-1-\left\lfloor\frac{\delta-1}{q}
      \right\rfloor\right),d\geq \delta\right]$$ and $\C_{(q,q^m+1,\delta+1,0)}$ has parameters $$\left[q^m+1,q^m-2m\left(\delta-\left\lfloor\frac{\delta}{q}\right\rfloor\right),d\geq 2\delta\right].$$
  \item If $m\geq3$ is an odd integer, for $2\leq \delta\leq q^{h+1}$, $\C_{(q,q^m+1,\delta,1)}$ has dimension
  \begin{equation*}
    k=\left\{
        \begin{array}{ll}
          q^m+1-2m\left(\delta-1-\left\lfloor\frac{\delta-1}{q}
          \right\rfloor\right), & \hbox{if $\delta\leq q^{h+1}-q$;} \\
          q^m+1-2m\left(q^{h+1}-q-\left\lfloor\frac{\delta-1}{q}
          \right\rfloor\right), & \hbox{if $q^{h+1}-q+1\leq\delta\leq q^{h+1}$.}
        \end{array}
      \right.
  \end{equation*}
\end{itemize}
\end{lemma}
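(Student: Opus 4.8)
The plan is to separate the claim into a dimension computation and a minimum-distance computation. For the dimension, recall that $\dim\C_{(q,n,\delta,b)}=n-\deg g_{(q,n,\delta,b)}(x)=n-\bigl|\bigcup_{i=b}^{b+\delta-2}C_i\bigr|$, where $C_i$ denotes the $q$-cyclotomic coset of $i$ modulo $n=q^m+1$, so the whole problem reduces to counting a union of $\delta-1$ consecutive cyclotomic cosets. First I would record the structural facts that control these cosets: from $q^m\equiv-1\pmod n$ we get $q^{2m}\equiv 1\pmod n$, hence $|C_i|\mid 2m$ and $C_i=C_{-i}=C_{n-i}$; moreover $C_{qi}=C_i$, so among $1,2,\dots,\delta-1$ the $\lfloor\frac{\delta-1}{q}\rfloor$ multiples of $q$ never contribute a new coset, leaving at most $\delta-1-\lfloor\frac{\delta-1}{q}\rfloor$ candidate coset leaders. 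Since the $b=0$ defining set is $C_0\cup\bigcup_{i=1}^{\delta-1}C_i$ with $|C_0|=1$, the generator of $\C_{(q,q^m+1,\delta+1,0)}$ is $(x-1)$ times the generator of $\C_{(q,q^m+1,\delta,1)}$, which is why the two codes are handled in tandem.

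The heart of the argument is to show that for $\delta$ in the stated range the integers $s\in\{1,\dots,\delta-1\}$ with $q\nmid s$ lie in pairwise distinct cyclotomic cosets and to compute each $|C_s|$. I would expand each such $s$ in base $q$, observe that $0<s<q^{h+1}$ confines its nonzero digits to at most $\lceil m/2\rceil$ of the $m$ low positions, and then analyse when a congruence $s\,q^{t}\equiv\pm s'\pmod{q^m+1}$ can hold: multiplication by $q$ cyclically permutes digit positions, while a digit that reaches exponent $\ge m$ is rewritten through $q^m\equiv-1$, i.e. with a sign flip and a borrow. Because the digit windows of $s$ and $s'$ together occupy at most half of the $m$ available positions, this rewriting cannot identify two distinct small integers; that yields distinctness, and for $m$ even it also forces $|C_s|=2m$, so $\dim\C_{(q,q^m+1,\delta,1)}=q^m+1-2m\bigl(\delta-1-\lfloor\frac{\delta-1}{q}\rfloor\bigr)$. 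For $m$ odd one has $h+1=\frac{m+1}{2}$, and once $s$ exceeds $q^{h+1}-q$ the window of $s\,q^{(m-1)/2}$ overlaps the low window by a single digit; tracking this overlap shows that each such large $s$ satisfies $C_s=C_{s'}$ for a strictly smaller $s'$ produced by the $q^m\equiv-1$ rewriting, so no genuinely new coset appears and the count of full-size cosets freezes, which gives the two-case dimension formula. Carrying out this carry-and-overlap bookkeeping — pinning down exactly which $s$ near the top of the range fold back, and excluding the cosets of size a proper divisor of $2m$ that do exist modulo $q^m+1$ but (for $m$ large) lie outside the relevant range — is the step I expect to be the main obstacle; the rest is routine.

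For the minimum distance, $\C_{(q,q^m+1,\delta,1)}$ has the $\delta-1$ consecutive zeros $\beta^{1},\dots,\beta^{\delta-1}$, so $d\ge\delta$ by the BCH bound (Lemma~\ref{lemma2}); and $\C_{(q,q^m+1,\delta+1,0)}$ is of the form $\C_{(q,q^m+1,\delta',0)}$ with $\delta'=\delta+1$, so Lemma~\ref{lem:disance} (equivalently Lemma~\ref{Li11}) gives $d\ge 2(\delta'-1)=2\delta$. Combining these bounds with the coset count of the previous paragraph, and removing the single root $\beta^0$ of $C_0$ in the $b=0$ case, yields the stated parameters; for $m$ odd only the dimension is asserted, so no additional distance argument is required there.
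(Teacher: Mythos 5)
This lemma is not proved in the paper at all: it is quoted verbatim from \cite{Liu1}, so there is no in‑paper argument to compare yours against and your proposal must stand on its own. As an outline it correctly identifies the reduction (dimension $=n-\bigl|\bigcup_i C_i\bigr|$, the $\lfloor\frac{\delta-1}{q}\rfloor$ correction coming from $C_{qi}=C_i$, the BCH bound for $b=1$, and Lemma~\ref{lem:disance} for $b=0$), but the entire mathematical content of the statement is precisely the claim you defer: that for $q\nmid s$ and $1\leq s\leq\delta-1$ the cosets $C_s$ are pairwise distinct of size exactly $2m$ when $m$ is even, together with the exact description, when $m$ is odd and $\delta>q^{h+1}-q$, of which cosets coincide and the fact that the surviving ones still have size $2m$. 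Your digit‑window heuristic is the right picture, but ``the windows occupy at most half of the positions, so the $q^m\equiv-1$ rewriting cannot identify two distinct small integers'' is not an argument: for $m$ odd the windows have length $h+1=\frac{m+1}{2}$ and genuinely do overlap (which is exactly why the second dimension formula freezes), and even for $m$ even one must exclude $sq^{t}\equiv\pm s'\pmod{q^m+1}$ for \emph{every} $0<t<m$, not only for $t$ near $m/2$, before concluding $|C_s|=2m$. You concede that this carry‑and‑overlap bookkeeping is ``the main obstacle''; since it is the whole lemma, what you have is a plan rather than a proof.

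There is also a concrete inconsistency you would run into if you executed the plan. Treating the two codes in tandem commits you to $\dim\C_{(q,q^m+1,\delta+1,0)}=\dim\C_{(q,q^m+1,\delta,1)}-1=q^m-2m\bigl(\delta-1-\lfloor\frac{\delta-1}{q}\rfloor\bigr)$, because the defining sets differ exactly by $C_0=\{0\}$. That agrees with Lemma~\ref{Li11} (substitute $\delta\mapsto\delta+1$ there) and with direct computation --- for $q=2$, $m=4$, $\delta=3$ the code $\C_{(2,17,4,0)}$ has defining set $\{0\}\cup C_1$ of size $9$ and hence dimension $8$ --- but it is \emph{not} the displayed formula $q^m-2m\bigl(\delta-\lfloor\frac{\delta}{q}\rfloor\bigr)$, which gives $0$ in that example and which differs from the other expression by $2m$ whenever $q\nmid\delta$. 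So your approach, carried out, would not reproduce the second display as printed; you should flag that discrepancy explicitly (the printed formula appears to be mis‑transcribed from \cite{Liu1}) rather than silently derive a different expression.
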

\section{A necessary and sufficient condition for $0\leq a\leq q^m$ being a coset leader }\label{sec-MDSqeven}

Let $\mathbb{Z}_L$ denote the ring of integers modulo $L$.
Let $s$ be an integer with $0\leq s<L$. The {\it $q$-cyclotomic coset} of $s$ is defined by
$$C_s=\{s,sq,sq^2,\cdots,sq^{\ell_{s-1}}\}\,\, {\text\,\,mod \,\,L\subseteq \mathbb{Z}_L, }$$
where $\ell_s$ is the smallest positive integer such that $s\equiv sq^{\ell_s} \pmod L$, and is the size of the $q$-cyclotomic coset.
 The smallest integer in $C_s$ is called the {\it coset leader} of $C_s$.

It is known that coset leaders provide information on the Bose distance and dimension of BCH codes. In this section, we give a necessary and sufficient condition for $0\leq a\leq q^m$ being a coset leader. We call that $x$ is a coset leader means that $x$ is the coset leader of $C_x$ modulo $n$. Firstly, we consider the case that $q$ is an even prime power.

\begin{proposition}\label{dim:constaq=2}
Let $m\geq 2$ and $q$ be an even prime power. Let $1\leq i<m$, $l$ and $h$ be integers satisfying
$$1\leq l\leq \left\lfloor \frac{(q^i-1)q}{2(q+1)}\right\rfloor\,\,
\text{and}\,\,-\frac{l(q^{m-i}-1)}{q^i+1}< h<\frac{l(q^{m-i}+1)}{q^i-1}.$$  Then $0\leq a\leq q^m$ is a coset leader if and only if
$0\leq a\leq \left\lfloor\frac{q^{m+1}+q}{2(q+1)}\right\rfloor$ and $a\neq lq^{m-i}+h$.
\end{proposition}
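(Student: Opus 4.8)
The plan is to rephrase ``$a$ is a coset leader'' as an explicit interval–membership condition and then to match that condition with the one in the statement. Since $q$ is even, $n=q^{m}+1$ is odd and $q^{m}\equiv-1\pmod n$, so the $q$-cyclotomic coset of $a$ is $C_{a}=\{\,\pm aq^{j}\bmod n:0\le j\le m-1\,\}$. Writing $\|x\|=\min\{x\bmod n,\;n-(x\bmod n)\}$, one sees immediately that, for $0\le a\le q^{m}$, $a$ is a coset leader if and only if $a\le (n-1)/2$ (otherwise $n-a\in C_{a}$ is smaller) and $\|aq^{j}\|\ge a$ for every $j\in\{1,\dots,m-1\}$ (for $a\le (n-1)/2$ one has $\|a\|=a$ and $\min C_{a}=\min_{0\le j\le m-1}\|aq^{j}\|$, using $-aq^{j}\equiv aq^{m+j}$).

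The technical heart is the identity: for $1\le j\le m-1$ and $1\le a\le (n-1)/2$,
\[
\|aq^{j}\|<a\iff a\in I_{j,l}:=\Bigl(\tfrac{ln}{q^{j}+1},\tfrac{ln}{q^{j}-1}\Bigr)\ \text{for some integer }l\ge 1 .
\]
I would prove this by writing $a=l_{0}q^{m-j}+h_{0}$ with $0\le h_{0}<q^{m-j}$, so that $aq^{j}\equiv h_{0}q^{j}-l_{0}\pmod n$. The bound $a\le (n-1)/2$ forces $l_{0}\le q^{j}/2$, hence $h_{0}q^{j}>l_{0}$ as soon as $h_{0}\ge 1$, and there are exactly three cases. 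If $h_{0}=0$ then $\|aq^{j}\|=l_{0}<a=l_{0}q^{m-j}$ and $a\in I_{j,l_{0}}$. If $h_{0}\ge 1$ and $h_{0}q^{j}-l_{0}\le (n-1)/2$, then $\|aq^{j}\|=h_{0}q^{j}-l_{0}$, and clearing denominators shows $\|aq^{j}\|<a$ is equivalent to $a<\tfrac{l_{0}n}{q^{j}-1}$, while $a>\tfrac{l_{0}n}{q^{j}+1}$ is automatic; so it is equivalent to $a\in I_{j,l_{0}}$. If $h_{0}q^{j}-l_{0}>(n-1)/2$, then $a$ is just below $(l_{0}+1)q^{m-j}$, $\|aq^{j}\|=n-(h_{0}q^{j}-l_{0})$, and $\|aq^{j}\|<a$ is equivalent to $a>\tfrac{(l_{0}+1)n}{q^{j}+1}$, with $a<\tfrac{(l_{0}+1)n}{q^{j}-1}$ automatic; so it is equivalent to $a\in I_{j,l_{0}+1}$. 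Since $\tfrac{a(q^{j}+1)}{n}-\tfrac{a(q^{j}-1)}{n}=\tfrac{2a}{n}<1$, at most one $l$ can satisfy $a\in I_{j,l}$, so the equivalence is clean. Combined with the reformulation, this gives: $a$ is a coset leader $\iff$ $0\le a\le (n-1)/2$ and $a\notin S:=\bigcup_{j=1}^{m-1}\bigcup_{l\ge 1}I_{j,l}$. Finally, $a=lq^{m-i}+h$ with $-\tfrac{l(q^{m-i}-1)}{q^{i}+1}<h<\tfrac{l(q^{m-i}+1)}{q^{i}-1}$ is precisely the statement $a\in I_{i,l}$, so it only remains to reconcile ``$a\le (n-1)/2$ and $a\notin S$'' with the condition of the proposition.

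Put $B=\lfloor\tfrac{q^{m+1}+q}{2(q+1)}\rfloor=\lfloor\tfrac{qn}{2(q+1)}\rfloor$, $L_{j}=\lfloor\tfrac{(q^{j}-1)q}{2(q+1)}\rfloor$, and $S'=\bigcup_{j=1}^{m-1}\bigcup_{l=1}^{L_{j}}I_{j,l}\subseteq S$; the remaining task is to show that $\{a:0\le a\le (n-1)/2,\ a\notin S\}$ and $\{a:0\le a\le B,\ a\notin S'\}$ coincide. For one inclusion: if $a>B$ and $a\le (n-1)/2$, then $a\ge B+1>\tfrac{qn}{2(q+1)}$ (as $\lfloor x\rfloor+1>x$) while $a\le (n-1)/2<\tfrac{qn}{2(q-1)}$, so $a\in I_{1,q/2}$ --- here $q$ even is used so that $q/2$ is a legitimate value of $l$ --- contradicting $a\notin S$; hence $a\le B$, and $a\notin S'$ follows from $S'\subseteq S$. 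For the other inclusion: $B\le\tfrac{qn}{2(q+1)}\le\tfrac{n-1}{2}$ (using $n\ge q+1$), and if $l>L_{j}$ then the left endpoint of $I_{j,l}$ is at least $\tfrac{(L_{j}+1)n}{q^{j}+1}\ge\tfrac{qn}{2(q+1)}\ge B\ge a$, so $a\notin I_{j,l}$; the inequality $L_{j}+1\ge\tfrac{q(q^{j}+1)}{2(q+1)}$ needed here says the fractional part of $\tfrac{q^{j+1}-q}{2(q+1)}$ is at most $\tfrac{1}{q+1}$, which holds because $q^{j+1}-q$ is even, $q+1$ is odd, and $q^{j+1}-q\equiv(-1)^{j+1}+1\pmod{q+1}$ is $0$ or $2$, so $(q^{j+1}-q)\bmod 2(q+1)\in\{0,2\}$. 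Combining the three steps then yields the proposition.

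I expect the main obstacle to be the identity of the second paragraph: one must carefully track the sign of $h_{0}q^{j}-l_{0}$, the possible overflow past $n/2$, and absorb the boundary cases $h_{0}=0$ and $l_{0}=0$ (the latter reappearing disguised as $l=1$ with $h<0$). The second delicate point is the exact calibration of $B$: the equality of the two sets is tight and fails without the final residue computation, which is exactly where the hypothesis that $q$ is even is essential --- presumably the reason this proposition is stated only for even $q$, the odd case requiring a parallel but shifted argument.
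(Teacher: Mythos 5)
Your proposal is correct and follows essentially the same route as the paper: your intervals $I_{j,l}=\bigl(\tfrac{ln}{q^{j}+1},\tfrac{ln}{q^{j}-1}\bigr)$ are exactly the forbidden ranges the paper extracts from $aq^{i}\equiv b\pmod{n}$ via the decomposition $a=l_{0}q^{m-i}+h_{0}$ in its Cases 1 and 2, your condition $a\le (n-1)/2$ is its case $i=m$, and your calibration of $B$ and $L_{j}$ through $I_{1,q/2}$ reproduces its absorption of Condition (II) and the restriction $l\le\lfloor\frac{(q^{i}-1)q}{2(q+1)}\rfloor$. If anything, your version is tighter on the two delicate points --- the ``only if'' direction of the interval characterization and the floor calibration via $(q^{j+1}-q)\bmod 2(q+1)\in\{0,2\}$ --- which the paper passes over more quickly.
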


\begin{proof}
It is easy to see that $a=0$ is a coset leader. In the following, we only consider the case $0<a\leq q^m$.
 By definition, $a$ is not a coset leader if and only if
there exist an integer $0\leq b< a$ and a positive integer $i$ such that
\begin{equation}\label{eq:cyc-01}
aq^i\equiv b \pmod{n}.
\end{equation}
Since $0\leq b<a\leq q^m$, we can assume that $1\leq i\leq 2m$.
Firstly, we consider the easy case $i=m$. Note that
$
aq^m\equiv -a \pmod{n},
$
then Eq.(\ref{eq:cyc-01}) can be written as
$
a+b\equiv 0 \pmod{n},
$
which is the same as
\begin{equation}\label{eq:abcq}
a+b=n.
\end{equation}
 It is easily seen that Eq.(\ref{eq:abcq}) holds if and only if $\frac{q^m}{2}+1\leq a\leq q^m$ since $ b< a$,
  then $a$ is not a coset leader if $\frac{q^m}{2}+1\leq a\leq q^m$.

Hence, in order to obtain the desired claims,
we only need to consider a necessary and sufficient condition for $a$ being not a coset leader when $0<a\leq \frac{q^m}{2}$ and $i\neq m$.
Obviously, Eq.(\ref{eq:cyc-01}) is equivalent to
$aq^i= b+nl_0,$
which is the same as
\begin{equation}\label{eq:cyc-02}
b+q^ml_0+l_0-aq^i=0,
\end{equation}
where $l_0$ is an integer. The proof will be carried out by distinguishing the following two cases.

\noindent {\bf Case 1}. $1\leq i<m$. In this case, Eq.(\ref{eq:cyc-02}) becomes
$b=(a-l_0q^{m-i})q^i-l_0.
$
Since $b\geq 0$, $a$ can be expressed as $a=l_0q^{m-i}+h_0$, where $h_0$ is a positive integer.
 Hence, $b=h_0q^i-l_0$. From $0\leq b<a\leq \frac{q^m}{2}$, we have
\begin{eqnarray*}
\left\{
\begin{array}{l}
l_0q^{m-i}+h_0\leq  \frac{q^m}{2},\\
h_0q^i-l_0\geq 0,\\
h_0q^i-l_0<l_0q^{m-i}+h_0.
\end{array}
\right.
\end{eqnarray*}
Solving these inequalities, we obtain $1\leq l_0\leq \frac{q^i}{2}-1$ and $1\leq h_0<\frac{l(q^{m-i}+1)}{q^i-1}$.
Then Eq.(\ref{eq:cyc-01}) holds if $a=l_0q^{m-i}+h$, where
$1\leq l_0\leq \frac{q^i}{2}-1$ and $1\leq h_0<\frac{l(q^{m-i}+1)}{q^i-1}$.

\noindent {\bf Case 2}. $m< i< 2m$. It is clear that
$aq^i \equiv -aq^{i-m} \pmod{n}.$
Then $aq^i$ can be expressed as $aq^i=t(q^m+1)-aq^{i-m}$, where $t$ is a positive integer. Let $l_1=t-l_0$, then Eq.(\ref{eq:cyc-02}) becomes
\begin{equation}\label{eq:cyc=1}
b=nl_1-aq^{i-m}.
\end{equation}
It is obvious that $l_1>0$ since $b>0$.
By definition and Eq.(\ref{eq:cyc=1}) we have
\begin{eqnarray*}
\left\{
\begin{array}{l}
a>nl_1-aq^{i-m}, \\
nl_1-aq^{i-m}>0.
\end{array}
\right.
\end{eqnarray*}
Solving these inequalities, the range of the value of $a$ can be given as
\begin{equation}\label{eq:ddibm-01}
\frac{l_1n}{q^{i-m}+1}<a< \frac{l_1n}{q^{i-m}}.
\end{equation}
It is easily seen that
\begin{equation}\label{eq:ddibm-02}
\frac{l_1n}{q^{i-m}+1}=\frac{l_1(q^{2m-i}(q^{i-m}+1)-q^{2m-i}+1)}{q^{i-m}+1}=l_1q^{2m-i}-\frac{l_1(q^{2m-i}-1)}{q^{i-m}+1}
\end{equation}
and
\begin{equation}\label{eq:ddibm-03}
\frac{l_1n}{q^{i-m}}=l_1q^{2m-i}+\frac{l_1}{q^{i-m}}.
\end{equation}
Then we have $\left\lfloor l_1q^{2m-i}+\frac{l_1}{q^{i-m}}\right\rfloor\leq \frac{q^m}{2}$ since $a\leq \frac{q^m}{2}$. Hence, $l_1\leq \frac{q^{i-m}}{2}$.

From the inequality in (\ref{eq:ddibm-01}), Eqs.(\ref{eq:ddibm-02}) and (\ref{eq:ddibm-03}), we know that $a$ can be written as $a=l_1q^{2m-i}-h_1$, where
 $1 \leq l_1\leq \frac{q^{i-m}}{2}$ and $0\leq h_1<\frac{l_1(q^{2m-i}-1)}{q^{i-m}+1}$.
Hence, Eq.(\ref{eq:cyc-01}) holds if $a=l_1q^{2m-i}-h_1$, where
 $1 \leq l_1\leq \frac{q^{i-m}}{2}$ and $0\leq h_1<\frac{l_1(q^{2m-i}-1)}{q^{i-m}+1}$.

Summarizing the discussions above, we know that $a$ is a coset leader  if and only if $a$ does not satisfy any of the following conditions:
\begin{itemize}
\item[(1)] $\frac{q^m}{2}+1\leq a\leq q^m$;
\item[(2)] $a=l_0q^{m-i}+h_0$, where $1\leq i<m$, $1\leq l_0\leq \frac{q^i}{2}-1$ and $1\leq h_0<\frac{l_0(q^{m-i}+1)}{q^i-1}$;
\item[(3)] $a=l_1q^{2m-i}-h_1$, where $m< i< 2m$, $1\leq l_1\leq \frac{q^{i-m}}{2}$ and $0\leq h_1<\frac{l_1(q^{2m-i}-1)}{q^{i-m}+1}$.
\end{itemize}
Obviously, $a$ does not satisfy Condition (2) and Condition (3) if and only if $a$ does not satisfy any of the following conditions:
\begin{itemize}
\item[(I)]  $a=lq^{m-i}+h$, where $1\leq i<m$, $1\leq l\leq \frac{q^i}{2}-1$ and $-\frac{l(q^{m-i}-1)}{q^i+1}< h<\frac{l(q^{m-i}+1)}{q^i-1}$;
\item[(II)] $a=q^{m}/2-h$, where $m< i< 2m$ and $0\leq h<\frac{q^m-q^{i-m}}{2(q^{i-m}+1)}$.
\end{itemize}
For any $m+2\leq i< 2m$, we can check that
$$\frac{q^m}{2}-\frac{q^m-q}{2(q+1)}<\frac{q^m}{2}- \frac{q^m-q^{i-m}}{2(q^{i-m}+1)}.$$
 Then the range of the value of $a$ is $$\frac{q^{m+1}+q}{2(q+1)}<a\leq \frac{q^m}{2},$$
if $a$ satisfies Condition (II).
Hence, from the discussions above, we know that $a$ is a coset leader if and only if $a$ does not satisfy any of the following conditions:
\begin{itemize}
\item $\frac{q^{m+1}+q}{2(q+1)}< a\leq q^m$;
\item $a=lq^{m-i}+h$, where $1\leq i<m$,
$1\leq l\leq \frac{q^i}{2}-1$ and $-\frac{l(q^{m-i}-1)}{q^i+1}< h<\frac{l(q^{m-i}+1)}{q^i-1}$,
\end{itemize}
which is the same as that $a$ is a coset leader if and only if
\begin{itemize}
\item $a\leq \left\lfloor\frac{q^{m+1}+q}{2(q+1)}\right\rfloor$;
\item $a\neq lq^{m-i}+h$, where $1\leq i<m$,
$1\leq l\leq \frac{q^i}{2}-1$ and $-\frac{l(q^{m-i}-1)}{q^i+1}< h<\frac{l(q^{m-i}+1)}{q^i-1}$.
\end{itemize}
Since $a\leq \left\lfloor\frac{q^{m+1}+q}{2(q+1)}\right\rfloor$, we have
\begin{equation}\label{eq:limit}
lq^{m-i}+\frac{l(q^{m-i}+1)}{q^i-1}\leq \left\lfloor\frac{q^{m+1}+q}{2(q+1)}\right\rfloor.
\end{equation}
Solving the inequality in (\ref{eq:limit}), we have $l\leq \left\lfloor \frac{(q^i-1)q}{2(q+1)}\right\rfloor$.
Hence, the range of the value of $l$ can be restricted to $1\leq l\leq \left\lfloor \frac{(q^i-1)q}{2(q+1)}\right\rfloor$.
The desired claims then follows.
\end{proof}

\begin{example}
Let $q=8$ and $m=2$. If $0\leq a\leq 65$, from Proposition \ref{dim:constaq=2}, we have that $a$ is a coset leader if and only if
$$a \in \{0,1,2,3,4,5,6,7,10,11,12,13,14,19,20,21,28\}.$$
 This result is verified by Magma programs.
\end{example}

We here give an easy application of Proposition \ref{dim:constaq=2}.

\begin{lemma}\label{dim:evenm=2}
Let $m=2$ and $q>2$, then $\delta_1=\frac{q^2-q}{2}$ and $\delta_i=\frac{q^2-3q+6-2i}{2}$, where $2\leq i\leq 4$.
\end{lemma}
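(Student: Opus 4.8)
The plan is to specialize Proposition~\ref{dim:constaq=2} to $m=2$ and then extract its four largest survivors. Since $q\ge 4$ is an even prime power, the only admissible index there is $i=1$; using $q^2-q=(q+1)(q-2)+2$ and $q^3+q=q(q+1)(q-1)+2q$ one checks $\left\lfloor\frac{(q-1)q}{2(q+1)}\right\rfloor=\frac{q-2}{2}$ and $\left\lfloor\frac{q^{m+1}+q}{2(q+1)}\right\rfloor=\frac{q^2-q}{2}$. So the Proposition becomes: for $0\le a\le q^2$, the integer $a$ is a coset leader modulo $q^2+1$ if and only if $a\le\frac{q^2-q}{2}$ and $a\neq lq+h$ for every pair $(l,h)$ with $1\le l\le\frac{q-2}{2}$ and $-\frac{l(q-1)}{q+1}<h<\frac{l(q+1)}{q-1}$. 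Everything below is elementary estimation against this description.

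First I would pin down $\delta_1$. From $h<\frac{l(q+1)}{q-1}$ we get $lq+h<l\cdot\frac{q^2+1}{q-1}\le\frac{q-2}{2}\cdot\frac{q^2+1}{q-1}=\frac{q^2-q-2}{2}+\frac{q-2}{q-1}<\frac{q^2-q}{2}$, so no excluded value reaches $\frac{q^2-q}{2}$; as $\frac{q^2-q}{2}$ sits exactly at the upper threshold, it is a coset leader, hence $\delta_1=\frac{q^2-q}{2}$. I would then show the entire integer interval $[\frac{q^2-3q+4}{2},\,\frac{q^2-q}{2}-1]$ is excluded: for $l=\frac{q-2}{2}$ one has $\frac{l(q+1)}{q-1}=\frac{q}{2}-\frac{1}{q-1}$ and $\frac{l(q-1)}{q+1}=\frac{q-4}{2}+\frac{3}{q+1}$, so the admissible $h$ are exactly the consecutive integers $\frac{4-q}{2},\dots,\frac{q-2}{2}$, whence $lq+h$ runs through every integer from $\frac{q^2-3q+4}{2}$ to $\frac{q^2-q}{2}-1$. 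Since $\frac{q^2-3q+4}{2}=\delta_2+1$, there is no coset leader strictly between $\delta_2$ and $\delta_1$.

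It then remains to verify that the three consecutive integers $\delta_2=\frac{q^2-3q+2}{2}$, $\delta_3=\delta_2-1$, $\delta_4=\delta_2-2$ (all clearly positive and below $\frac{q^2-q}{2}$ when $q\ge 4$) are themselves coset leaders; being consecutive, that makes them the 2nd, 3rd and 4th largest. Suppose one of them, written $a=\frac{(q-1)(q-2)}{2}-j$ with $j\in\{0,1,2\}$, equalled $lq+h$ for an admissible pair. Substituting $h=a-lq$ into the two bounds on $h$ and clearing denominators (via $q+\frac{q+1}{q-1}=\frac{q^2+1}{q-1}$ and $q-\frac{q-1}{q+1}=\frac{q^2+1}{q+1}$) would force $\frac{a(q-1)}{q^2+1}<l<\frac{a(q+1)}{q^2+1}$. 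A short computation gives $\frac{a(q-1)}{q^2+1}-\frac{q-4}{2}=\frac{2q+1-j(q-1)}{q^2+1}>0$ and $\frac{q-2}{2}-\frac{a(q+1)}{q^2+1}=\frac{q-2+j(q+1)}{q^2+1}>0$ for each $j\in\{0,1,2\}$, so $l$ would be an integer strictly between the consecutive integers $\frac{q-4}{2}$ and $\frac{q-2}{2}$, which is impossible. Hence each of $\delta_2,\delta_3,\delta_4$ is a coset leader, and the lemma follows.

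The only delicate point is the interval bookkeeping: one must confirm that the two floors above, the endpoints of the $h$-range for $l=\frac{q-2}{2}$, and the two bounds on $l$ all fall in the right unit interval, using nothing more than $q$ even and $q\ge 4$ (so $0<\frac{1}{q+1},\frac{1}{q-1},\frac{3}{q+1}<1$). I do not expect a genuine obstacle; it is worth a separate glance at $q=4$, where $\frac{q-4}{2}=0$ collapses the $l$-constraint to $l<1$ and the $h$-ranges degenerate, but the conclusion still holds (the coset leaders $\le\frac{q^2-q}{2}=6$ being $0,1,2,3,6$).
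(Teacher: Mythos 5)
Your proposal is correct and follows essentially the same route as the paper: both specialize Proposition~\ref{dim:constaq=2} to $m=2$, reduce to the single index $i=1$, and determine the four largest survivors by elementary interval arithmetic on the excluded values $lq+h$. The only difference is presentational — the paper converts the $h$-range to the closed integer interval $1-l\le h\le l$ uniformly in $l$ and reads off the gaps, while you verify $\delta_2,\delta_3,\delta_4$ by squeezing $l$ strictly between the consecutive integers $\frac{q-4}{2}$ and $\frac{q-2}{2}$; your write-up is in fact somewhat more complete than the paper's.
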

\begin{proof}
From Proposition \ref{dim:constaq=2}, it is easily seen that $0\leq a\leq q^2$ is a coset leader if and only
if $0\leq a\leq \left\lfloor\frac{q^3+q}{2(q+1)}\right\rfloor$ and $a\neq lq+h$, where $1\leq l\leq\frac{q-2}{2}$ and $-\frac{l(q-1)}{q+1}< h< \frac{l(q+1)}{q-1}$.
Obviously, we have
$$\frac{l(q-1)}{q+1}=\frac{l(q+1)-2l}{q+1}=l-\frac{2l}{q+1}>l-1$$
and
$$\frac{l(q+1)}{q-1}=\frac{l(q-1)+2l}{q-1}=l+\frac{2l}{q-1}<l+1,$$
then $-\frac{l(q-1)}{q+1}\leq h\leq \frac{l(q+1)}{q-1}$ is the same as $1-l\leq h\leq l$ since $h$ is an integer.
Hence, we obtain $$\delta_1=\left\lfloor\frac{q^3+q}{2(q+1)}\right\rfloor=\left\lfloor\frac{q^2(q+1)-q(q+1)-2q}{2(q+1)}\right\rfloor=\frac{q^2-q}{2}$$
and
$\delta_i=\frac{q^2-3q+6-2i}{2}$, where $2\leq i\leq 4$.
\end{proof}

The following is a well-known result, which will be used to determine the dimension of BCH codes.
\begin{lemma}\label{lemma5}
Let $u,v,a$ be positive integers, then
\[\gcd(a^u+1,a^v-1)=\left\{ \begin{array}{lll}
           1, & \, \,\, \text{if $v_2(v)\leq v_2(u)$ and $a$ is even}, \\
           {2}, & \, \,\, \text{if $v_2(v)\leq v_2(u)$ is odd and $a$ is odd}, \\
           {a^{\gcd(u,v)}+1}, & \, \,\, \text{if $v_2(v)> v_2(u)$ is even}. \end{array}  \right.\]
\end{lemma}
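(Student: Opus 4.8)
The plan is to reduce everything to the classical identity $\gcd(a^m-1,a^n-1)=a^{\gcd(m,n)}-1$, together with the elementary fact that $x+1$ divides $x^k+1$ whenever $k$ is odd, and then to split into cases according to how $v_2(v)$ compares with $v_2(u)$. First I would set $g=\gcd(a^u+1,a^v-1)$ and $d=\gcd(u,v)$, and use that $a^u+1$ divides $a^{2u}-1$ to get $g\mid\gcd(a^{2u}-1,a^v-1)=a^{\gcd(2u,v)}-1$. A short computation with $p$-adic valuations then shows $\gcd(2u,v)=d$ when $v_2(v)\le v_2(u)$, and $\gcd(2u,v)=2d$ when $v_2(v)> v_2(u)$; this is precisely where the hypothesis on the $2$-adic valuations enters.

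In the case $v_2(v)\le v_2(u)$ we have $g\mid a^d-1$, and since $d\mid u$ also $a^d-1\mid a^u-1$, so $g\mid\gcd(a^u-1,a^u+1)=\gcd(a^u+1,2)$, i.e. $g\mid 2$. If $a$ is even then $a^u+1$ is odd, forcing $g=1$; if $a$ is odd then $a^u+1$ and $a^v-1$ are both even, forcing $g=2$. This recovers the first two displayed lines.

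In the case $v_2(v)> v_2(u)$ one has $v_2(d)=v_2(u)$, hence $u/d$ is odd and $v/d$ is even. Then $a^d+1\mid a^u+1$ (since $x+1\mid x^{u/d}+1$) and $a^d+1\mid a^v-1$ (since $a^d\equiv -1\pmod{a^d+1}$ and $v/d$ is even), so $a^d+1\mid g$. Conversely $g\mid a^{2d}-1=(a^d-1)(a^d+1)$, so the quotient $g/(a^d+1)$ divides $a^d-1$; as $\gcd(g,a^d-1)\mid 2$, this quotient lies in $\{1,2\}$, and a $2$-adic valuation argument (namely $v_2(a^u+1)=v_2(a^d+1)$ because $u/d$ is odd, combined with $g\mid a^u+1$) excludes the value $2$. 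Therefore $g=a^d+1=a^{\gcd(u,v)}+1$, which is the third displayed line.

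The routine parts are the gcd identity and the valuation arithmetic for $\gcd(2u,v)$. The delicate point I expect to be the main obstacle is the tight converse in the last case: showing that $g/(a^d+1)$ cannot pick up an extra factor of $2$, which is exactly where the lifting-the-exponent bookkeeping for the prime $2$ is needed; everything else is bookkeeping.
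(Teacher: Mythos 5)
Your proof is correct and complete. Note that the paper itself offers no proof of this lemma: it is stated as ``a well-known result'' and used as a black box, so there is no argument of the authors' to compare yours against. Your derivation is the standard one --- reduce to $\gcd(a^{2u}-1,a^v-1)=a^{\gcd(2u,v)}-1$, compute $\gcd(2u,v)$ via $2$-adic valuations to split into the two regimes, and in the regime $v_2(v)>v_2(u)$ pin down the possible extra factor of $2$ by observing that $u/\gcd(u,v)$ odd forces $v_2(a^u+1)=v_2(a^{\gcd(u,v)}+1)$. All the individual steps check out, including the slightly delicate converse bound $g/(a^d+1)\mid\gcd(g,a^d-1)\mid 2$. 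One small remark: the lemma as printed is garbled (the clauses ``is odd'' and ``is even'' are misattached to the valuation inequalities), but you have read it in the only sensible way, and your three cases are exhaustive.
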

\begin{lemma}\label{m=2coset}
Let $\delta_1$, $\delta_2$, $\delta_3$ and $\delta_4$ be given as in Lemma \ref{dim:evenm=2}. If $a\in \{\delta_1, \delta_2, \delta_3, \delta_4\}$, then $|C_a|=4$.
\end{lemma}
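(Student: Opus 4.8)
The plan is to observe that for an even prime power $q$ the modulus $n=q^{2}+1$ is odd, and to exploit this together with the elementary identity $q^{4}-1=(q^{2}-1)(q^{2}+1)$. First I would note that this identity gives $q^{4}\equiv 1\pmod n$, hence $aq^{4}\equiv a\pmod n$ for every integer $a$; consequently the size $|C_{a}|$ of the $q$-cyclotomic coset of $a$ divides $4$, so $|C_{a}|\in\{1,2,4\}$. It therefore suffices to show that $|C_{a}|\neq 1$ and $|C_{a}|\neq 2$ for each $a\in\{\delta_{1},\delta_{2},\delta_{3},\delta_{4}\}$.

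Next I would check, using the explicit values in Lemma~\ref{dim:evenm=2}, that $0<\delta_{i}<n$ for $1\le i\le 4$: for an even prime power $q>2$ we have $q\ge 4$, so $q^{2}-3q-2>0$, which makes all four numbers positive, and each is visibly smaller than $q^{2}+1$. Fix $a\in\{\delta_{1},\delta_{2},\delta_{3},\delta_{4}\}$. If $|C_{a}|=1$, then $aq\equiv a\pmod n$, that is, $n\mid a(q-1)$; since $q^{2}+1=(q-1)(q+1)+2$ with $q-1$ odd, Lemma~\ref{lemma5} (or this identity directly) gives $\gcd(n,q-1)=1$, forcing $n\mid a$, which contradicts $0<a<n$. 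If $|C_{a}|=2$, then $aq^{2}\equiv a\pmod n$; using $a(q^{2}-1)=a(q^{2}+1)-2a\equiv -2a\pmod n$, this means $n\mid 2a$, and as $n$ is odd we again get $n\mid a$, a contradiction.

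Combining the two steps, $|C_{a}|$ divides $4$ but equals neither $1$ nor $2$, so $|C_{a}|=4$, proving the claim. There is no serious obstacle in this argument: in fact it shows that \emph{every} residue $a$ with $1\le a\le q^{2}$ has a $q$-cyclotomic coset of full size $4$ modulo $q^{2}+1$ when $q$ is even, so the only thing that really needs checking is that the specific numbers $\delta_{1},\dots,\delta_{4}$ lie strictly between $0$ and $q^{2}+1$, which is immediate from Lemma~\ref{dim:evenm=2}.
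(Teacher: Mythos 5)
Your proof is correct and follows essentially the same route as the paper: both arguments note that $|C_a|$ divides $4$ and then rule out sizes $1$ and $2$ by showing $q^2+1$ is coprime to $q-1$ and $q^2-1$ when $q$ is even, so that $(q^2+1)\mid \delta_i(q^t-1)$ would force $(q^2+1)\mid \delta_i$, contradicting $0<\delta_i<q^2+1$. The only cosmetic difference is that you compute these gcds directly from elementary congruences, whereas the paper invokes its Lemma~\ref{lemma5}; your closing observation that every $1\le a\le q^2$ in fact has a full-size coset when $q$ is even is also correct.
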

\begin{proof} It is known that ord$_n(q)=4$ when $m=2$, then $|C_{a}|$ is a divisor of $4$. Assume that $|C_{a}|=t$, then
$(q^2+1)\,|\,\delta_i(q^t-1)$. From Lemma \ref{lemma5}, it is easy to get the desired results.
\end{proof}

From Lemma \ref{lem:disance} and the previous conclusions, one can get the following results.
\begin{theorem}\label{Theorem-01}
Let $\delta_1$, $\delta_2$, $\delta_3$ and $\delta_4$ be given as in Lemma \ref{dim:evenm=2}.
Let $\delta$ be an integer. Then the BCH code $\mathcal{C}_{(q,q^2+1,\delta+1,0)}$ has parameters
 $[q^2+1,4i,d\geq 2\delta_i]$ if $\delta_{i+1}+1\leq \delta \leq \delta_i~(i=1,2,3)$ and $\mathcal{C}_{(q,q^2+1,\delta_4+1,0)}$
  has parameters $[q^2+1,16,d\geq 2\delta_4]$.
\end{theorem}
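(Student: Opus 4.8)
The plan is to assemble the claim from three facts already in hand: the distance estimate of Lemma~\ref{lem:disance}, the description of the four largest coset leaders in Lemma~\ref{dim:evenm=2}, and the coset-size computation of Lemma~\ref{m=2coset}. First I would record the elementary bookkeeping: the set of roots of $g_{(q,q^2+1,\delta+1,0)}(x)$ is $\bigcup_{j=0}^{\delta-1}C_j$, which is precisely the union of all $q$-cyclotomic cosets modulo $n=q^2+1$ whose coset leader is at most $\delta-1$. Writing $\mathbb{Z}_n$ as the disjoint union of its cyclotomic cosets, it follows that the degree of the generator polynomial is the sum of $|C_\ell|$ over coset leaders $\ell\le\delta-1$, hence $\dim\mathcal{C}_{(q,q^2+1,\delta+1,0)}$ equals the sum of $|C_\ell|$ over the remaining coset leaders, namely those with $\ell\ge\delta$. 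So the whole problem reduces to identifying which coset leaders are $\ge\delta$.

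Next I would fix $i\in\{1,2,3\}$ and $\delta$ with $\delta_{i+1}+1\le\delta\le\delta_i$. By Lemma~\ref{dim:evenm=2}, $\delta_1>\delta_2>\delta_3>\delta_4$ are the four largest coset leaders, with consecutive gaps $\delta_1-\delta_2=q-1$ and $\delta_2-\delta_3=\delta_3-\delta_4=1$, all at least $1$; in particular no coset leader lies strictly between $\delta_{i+1}$ and $\delta_i$. Since $\delta-1\in[\delta_{i+1},\delta_i-1]$, the coset leaders that are $\ge\delta$ are exactly $\delta_1,\dots,\delta_i$. This has two consequences: first, the defining set of $\mathcal{C}_{(q,q^2+1,\delta+1,0)}$ does not depend on $\delta$ in this interval and equals that of $\mathcal{C}_{(q,q^2+1,\delta_i+1,0)}$, so these are the same code; second, by Lemma~\ref{m=2coset} (each $|C_{\delta_j}|=4$) its dimension is $\sum_{j=1}^{i}|C_{\delta_j}|=4i$. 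The extra case $\mathcal{C}_{(q,q^2+1,\delta_4+1,0)}$ is handled identically: the coset leaders $\ge\delta_4$ are $\delta_1,\delta_2,\delta_3,\delta_4$, so its dimension is $4\cdot 4=16$.

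For the minimum distance I would invoke the code identification $\mathcal{C}_{(q,q^2+1,\delta+1,0)}=\mathcal{C}_{(q,q^2+1,\delta_i+1,0)}$ and apply Lemma~\ref{lem:disance} to the designed distance $\delta_i+1$, which gives $d\ge 2(\delta_i+1-1)=2\delta_i$; likewise Lemma~\ref{lem:disance} applied to $\mathcal{C}_{(q,q^2+1,\delta_4+1,0)}$ gives $d\ge 2\delta_4$. The length $q^2+1$ is immediate, so all the asserted parameters follow.

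The only subtle point — and the step I would flag as the crux — is this code identification. Applying Lemma~\ref{lem:disance} directly to the designed distance $\delta+1$ would yield merely $d\ge 2\delta$, which is weaker than the claimed $d\ge 2\delta_i$ whenever $\delta<\delta_i$. The stronger bound is justified precisely because, by Lemma~\ref{dim:evenm=2}, there is no coset leader strictly between $\delta_{i+1}$ and $\delta_i$, so raising the designed distance from $\delta+1$ up to $\delta_i+1$ leaves the generator polynomial unchanged. Everything else is routine counting of cyclotomic-coset sizes.
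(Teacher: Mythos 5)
Your proposal is correct and follows exactly the route the paper intends: the paper's proof is the single line ``From Lemma~\ref{lem:disance} and the previous conclusions, one can get the following results,'' and your argument simply makes explicit the two ingredients behind it, namely that the defining set (hence the code) is unchanged as $\delta$ ranges over $[\delta_{i+1}+1,\delta_i]$ because $\delta_1,\dots,\delta_4$ are the largest coset leaders, so Lemma~\ref{lem:disance} may be applied with designed distance $\delta_i+1$, and that the dimension is $\sum_{j\le i}|C_{\delta_j}|=4i$ by Lemma~\ref{m=2coset}. The only cosmetic remark is that ``no coset leader lies strictly between $\delta_{i+1}$ and $\delta_i$'' is immediate from the definition of the $i$-th largest coset leader and does not need the gap computation you include.
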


\begin{example}\label{example-01}
The following numerical examples were calculated by Magma programs, which coincide with Theorem \ref{Theorem-01}.
\begin{itemize}
\item Let $q=4$ and $4\leq \delta\leq 6$, then the code $\mathcal{C}_{(4,17,\delta+1,0)}$ has parameters $[17,4,12]$.
\item Let $q=4$ and $\delta=3$, then the code $\mathcal{C}_{(4,17,4,0)}$ has parameters $[17,8,6]$.
\item Let $q=8$ and $22\leq\delta\leq 28$, then the code $\mathcal{C}_{(8,65,\delta+1,0)}$ has parameters  $[65,4,56]$.
\item Let $q=8$ and $\delta=19$, then the code $\mathcal{C}_{(8,65,20,0)}$ has parameters $[65,16,38]$.
\end{itemize}
Note that all the four codes are optimal according to the tables of best known codes in \cite{Gra}.
\end{example}

The odd prime power case is different from the even prime power case, and has to be treated separately.  In the following,
 we consider the case that $q$ is an odd prime power.

\begin{proposition}\label{dim:qoddprime}
Let $q$ be an odd prime power. Let $1\leq i<m$, $l$ and $h$ be integers satisfying
$$1\leq l\leq \frac{q^i-1}{2}\,\,
\text{and}\,\,-\frac{l(q^{m-i}-1)}{q^i+1}< h<\frac{l(q^{m-i}+1)}{q^i-1}.$$  Then $0\leq a\leq q^m$ is a coset leader if and only if
$ a\leq \frac{n}{2}$ and $a\neq lq^{m-i}+h$.
\end{proposition}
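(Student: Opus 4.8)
The plan is to follow the proof of Proposition~\ref{dim:constaq=2} almost line by line, the one essential new ingredient being that for odd $q$ both $n=q^m+1$ and all the quantities $q^i\pm1$ are even, so every range produced below is an exact interval with no floor; this is the reason the threshold stays at $n/2$ rather than collapsing to $\lfloor\frac{q^{m+1}+q}{2(q+1)}\rfloor$ as in the even case. As before, $a=0$ is a coset leader, so I fix $0<a\le q^m$; then $a$ is not a coset leader if and only if $aq^i\equiv b\pmod{n}$ for some $0\le b<a$ and some $1\le i\le 2m$ (here $q^{2m}\equiv1\pmod{n}$ and $0\le b<a<n$ bound $i$). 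For $i=m$ one has $aq^m\equiv-a\pmod{n}$, so the congruence reads $a+b=n$; since $n$ is even, this has a solution with $b<a$ exactly when $a>n/2$, and conversely $b=n-a$ then works. Hence the ``$i=m$'' witnesses produce exactly the non-leaders with $a>n/2$, which is the source of the clause $a\le n/2$.

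It then remains to treat, for $0<a\le n/2$ and $i\ne m$, the equation $aq^i=b+nl_0$. In Case~1 ($1\le i<m$) one obtains $b=(a-l_0q^{m-i})q^i-l_0$, so writing $a=l_0q^{m-i}+h_0$ with $h_0\ge1$ (forced by $b\ge0$ and $l_0\ge1$) gives $b=h_0q^i-l_0$; imposing $0\le b<a\le n/2$ and using the identity $l_0q^{m-i}+\frac{l_0(q^{m-i}+1)}{q^i-1}=\frac{l_0n}{q^i-1}$ yields $1\le l_0\le\frac{q^i-1}{2}$ and $1\le h_0<\frac{l_0(q^{m-i}+1)}{q^i-1}$. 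In Case~2 ($m<i<2m$), using $q^m\equiv-1\pmod{n}$ I write $aq^i\equiv-aq^{i-m}\pmod{n}$, i.e.\ $aq^i=tn-aq^{i-m}$, set $l_1=t-l_0$, and get $b=nl_1-aq^{i-m}$; then $0\le b<a$ becomes $\frac{nl_1}{q^{i-m}+1}<a<\frac{nl_1}{q^{i-m}}$, and the identities $\frac{nl_1}{q^{i-m}+1}=l_1q^{2m-i}-\frac{l_1(q^{2m-i}-1)}{q^{i-m}+1}$ and $\frac{nl_1}{q^{i-m}}=l_1q^{2m-i}+\frac{l_1}{q^{i-m}}$ let me write $a=l_1q^{2m-i}-h_1$ with $0\le h_1<\frac{l_1(q^{2m-i}-1)}{q^{i-m}+1}$; finally combining $a\le n/2$ with $a>\frac{nl_1}{q^{i-m}+1}$ forces $l_1<\frac{q^{i-m}+1}{2}$, hence $1\le l_1\le\frac{q^{i-m}-1}{2}$ because $q$ is odd.

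Finally I would glue the two cases. Re-indexing Case~2 by $i'=i-m\in\{1,\dots,m-1\}$ (so $q^{2m-i}=q^{m-i'}$), with $l=l_1$ and $h=-h_1$, turns its solution set into $\{\,a=lq^{m-i'}+h:\ 1\le l\le\frac{q^{i'}-1}{2},\ -\frac{l(q^{m-i'}-1)}{q^{i'}+1}<h\le0\,\}$, while Case~1 (with $i'=i$, $l=l_0$, $h=h_0$) is the same family but with $1\le h<\frac{l(q^{m-i'}+1)}{q^{i'}-1}$. The two $l$-ranges coincide exactly (this is where oddness of $q$ is used; in the even case one extra value of $l$ survives, and it produces the region responsible for lowering the threshold there). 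Hence the union over all admissible $i$ is precisely $\{\,a=lq^{m-i}+h:\ 1\le i<m,\ 1\le l\le\frac{q^i-1}{2},\ -\frac{l(q^{m-i}-1)}{q^i+1}<h<\frac{l(q^{m-i}+1)}{q^i-1}\,\}$, so $a$ is a coset leader iff $a\le n/2$ and $a$ lies outside this set, which is the assertion. The step I expect to be the main obstacle is the bookkeeping that pins down the $l_0$- and $l_1$-ranges exactly (strict versus non-strict inequalities, and checking that every admissible triple really does yield a witness $i$), together with verifying that every value arising in Case~1 or Case~2 already satisfies $a<n/2$, so that nothing slips past the $a\le n/2$ clause; in particular one must be careful that for those $(i,l)$ where Case~1 admits no positive $h$, the values with $h\le 0$ are still correctly supplied by Case~2.
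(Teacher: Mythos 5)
Your proposal is correct and follows essentially the same route as the paper's proof: the reduction to $aq^i\equiv b\pmod n$ with $1\le i\le 2m$, the $i=m$ case giving the threshold $a\le n/2$, the two cases $1\le i<m$ and $m<i<2m$ with the same parametrizations $a=l_0q^{m-i}+h_0$ and $a=l_1q^{2m-i}-h_1$ and the same resulting ranges, and the final merging of Case~2 into the negative-$h$ part of the single family $a=lq^{m-i}+h$. Your observation that the coincidence of the two $l$-ranges (both equal to $\frac{q^i-1}{2}$) is exactly what keeps the threshold at $n/2$ in the odd case, in contrast to the extra value $l=\frac{q^{i-m}}{2}$ in the even case, is a correct reading of why the two propositions differ.
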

\begin{proof}
It is easy to see that $a=0$ is a coset leader. In the following, we only consider the case $0<a\leq q^m$.
Similar to the discussion in Proposition \ref{dim:constaq=2}, we have that $a$ is not a coset leader if and only if
there exist an integer $0\leq b< a$ and a positive integer $i$ such that
\begin{equation}\label{eq:cyc-qodd}
aq^i\equiv b \pmod{n},
\end{equation}
where $1\leq i\leq 2m$. Clearly, Eq.(\ref{eq:cyc-qodd}) can be written as
$a+b=n$
if $i=m$. Then $a$ is not a coset leader if $\frac{n}{2}< a\leq q^m$.
For any positive integer $i$, Eq.(\ref{eq:cyc-qodd}) can be expressed as
\begin{equation}\label{eq:cyc-02odd}
b+q^ml_0+l_0-aq^i=0,
\end{equation}
where $l_0$ is an integer.
We now prove a necessary and sufficient condition for $a$ being not a coset leader when $a\leq \frac{n}{2}$ and $i\neq m$
by distinguishing the following two cases.

\noindent {\bf Case 1}. If $1\leq i<m$, then Eq.(\ref{eq:cyc-02odd}) becomes
\begin{equation*}
b=(a-l_0q^{m-i})q^i-l_0.
\end{equation*}
Since $b\geq 0$, $a$ can be expressed as $a=l_0q^{m-i}+h_0$, where $h_0$ is a positive integer. Hence, $b=h_0q^i-l_0$.
 From $0\leq b<a\leq \frac{n}{2}$, we obtain
\begin{eqnarray*}
\left\{
\begin{array}{l}
l_0q^{m-i}+h_0\leq \frac{n}{2},\\
h_0q^i-l_0\geq 0,\\
h_0q^i-l_0<l_0q^{m-i}+h_0.
\end{array}
\right.
\end{eqnarray*}
Solving these inequalities, we have $1\leq l_0\leq \frac{q^i-1}{2}$ and $1\leq h_0<\frac{l_0(q^{m-i}+1)}{q^i-1}$.
Then $a=l_0q^{m-i}+h_0$ is not a coset leader, where
$1\leq l_0\leq \frac{q^i-1}{2}$ and $1\leq h_0<\frac{l_0(q^{m-i}+1)}{q^i-1}$.

\noindent {\bf Case 2}. If $m< i< 2m$, similar to the discussion in Proposition \ref{dim:constaq=2}, we have that $a$ is not a coset leader if
\begin{equation*}
l_0q^{2m-i}-\frac{l_0(q^{2m-i}-1)}{q^{i-m}+1}<a< l_0q^{2m-i}-\frac{l_0}{q^{i-m}}.
\end{equation*}
 Since $0<a\leq \frac{n}{2}$, we have $l_0< q^{i-m}$ and
\begin{equation*}
\left\lfloor l_0q^{2m-i}-\frac{l_0}{q^{i-m}}\right\rfloor\leq \frac{n}{2}.
\end{equation*}
Hence,
\begin{equation*}
\begin{split}
l_0<\frac{q^i+q^{i-m}}{2(q^m-1)}&=\frac{(q^{i-m}-1)(q^m-1)+2q^{i-m}+2q^m-1}{2(q^m-1)}\\
&=\frac{q^{i-m}-1}{2}+\frac{2(q^{i-m}+q^m)-1}{2(q^m-1)}\\
&<\frac{q^{i-m}-1}{2}+1.
\end{split}
\end{equation*}
Let $a$ be expressed as $a=l_0q^{2m-i}-h_0$, then $a$ is not a coset leader if
$0<l_0\leq \frac{q^{i-m}-1}{2}$ and $1\leq h_0<\frac{l_0(q^{2m-i}-1)}{q^{i-m}+1}$.

Summarizing all the cases, we know that $a$ is a coset leader  if and only if $a$ does not satisfy any of the following conditions:
\begin{itemize}
\item[(1)] $\frac{n}{2}< a\leq q^m$;
\item[(2)] $a= l_0 q^{m-i}+h_0$, where $1\leq i<m$, $1\leq l_0\leq \frac{q^i-1}{2}$ and $1\leq h_0<\frac{l_0(q^{m-i}+1)}{q^i-1}$;
\item[(3)]  $a= l_0 q^{2m-i}-h_0$, where $m<i <2m$, $0<l_0\leq \frac{q^{i-m}-1}{2}$ and $1\leq h_0<\frac{l_0(q^{2m-i}-1)}{q^{i-m}+1}$.
\end{itemize}
It is clear that $a$ does not satisfy Condition (2) and Condition (3) is the same as $a\neq lq^{m-i}+h$,
where $1\leq i<m$, $1\leq l\leq \frac{q^i-1}{2}$ and $-\frac{l(q^{m-i}-1)}{q^i+1}< h<\frac{l(q^{m-i}+1)}{q^i-1}$.

Hence, $a$ is a coset leader  if and only if $a$ satisfies the following two conditions:
\begin{itemize}
\item[(1)] $ a\leq \frac{n}{2}$;
\item[(2)] $a\neq l q^{m-i}+h$, where $1\leq i< m$, $1\leq l\leq \frac{q^i-1}{2}$
and $-\frac{l_0(q^{m-i}-1)}{q^i+1}< h<\frac{l(q^{m-i}+1)}{q^i-1}$.
\end{itemize}
The desired claim then follows.
\end{proof}

\begin{example}
Let $q=7$ and $m=2$. If $0\leq a\leq 50$, from Proposition \ref{dim:constaq=2}, we have that $a$ is a coset leader if and only if
$$a \in \{0,1,2,3,4,5,6,9,10,11,12,17,18,25\}.$$
 This result is verified by Magma programs.
\end{example}

 It is easy to see that the largest value of $a= l q^{m-i}+h$ for $1\leq l\leq \frac{q^i-1}{2}$
and $-\frac{l(q^{m-i}-1)}{q^i+1}< h<\frac{l(q^{m-i}+1)}{q^i-1}$ is less than
$$\frac{q^i-1}{2}\cdot q^{m-i}+\frac{q^i-1}{2}\cdot\frac{(q^{m-i}+1)}{q^i-1}=\frac{n}{2}.$$
Hence, $\delta_1=\frac{n}{2}$ for any $m$. It is also very easy to check that $|C_{\delta_1}|=1$.

 We now give an easy application of Proposition \ref{dim:qoddprime}.

\begin{lemma}\label{dim:oddm=2}
Let $m=2$, then $\delta_2=\frac{(q-1)^2}{2}$, $\delta_3=\frac{q^2-2q-1}{2}$ and $\delta_4=\frac{(q-3)(q-1)}{2}$.
\end{lemma}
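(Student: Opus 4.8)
The plan is to specialize Proposition~\ref{dim:qoddprime} to $m=2$, describe precisely which integers in $[0,\frac{n}{2}]$ fail to be coset leaders, and then read off the four largest coset leaders. When $m=2$ the only admissible value of $i$ is $i=1$, so by Proposition~\ref{dim:qoddprime} an integer $0\le a\le q^2$ is a coset leader exactly when $a\le\frac{n}{2}$ and $a\ne lq+h$ for every pair $(l,h)$ with $1\le l\le\frac{q-1}{2}$ and $-\frac{l(q-1)}{q+1}<h<\frac{l(q+1)}{q-1}$. First I would rewrite the constraint on $h$ as in the proof of Lemma~\ref{dim:evenm=2}: since $\frac{l(q-1)}{q+1}=l-\frac{2l}{q+1}\in(l-1,l)$ and $\frac{l(q+1)}{q-1}=l+\frac{2l}{q-1}\in(l,l+1]$ for $1\le l\le\frac{q-1}{2}$, the allowed integer values of $h$ are exactly $1-l\le h\le l$. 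Hence the set of non-coset-leaders inside $[0,\frac{n}{2}]$ is
$$E=\bigcup_{l=1}^{(q-1)/2}\big\{\,l(q-1)+1,\ l(q-1)+2,\ \ldots,\ l(q+1)\,\big\},$$
a disjoint union of $\frac{q-1}{2}$ blocks of consecutive integers, the $l$-th block running from $l(q-1)+1$ up to $l(q+1)$; a short computation shows that the gap between the $l$-th and $(l+1)$-st block consists of the $q-1-2l$ integers $l(q+1)+1,\ldots,(l+1)(q-1)$.

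Next, since it is already known that $\delta_1=\frac{n}{2}$, it remains to locate the next three largest elements of $[0,\frac{n}{2}]\setminus E$. The block with $l=\frac{q-1}{2}$ runs from $\frac{(q-1)^2}{2}+1$ up to $\frac{q^2-1}{2}=\frac{n}{2}-1=\delta_1-1$, so every integer strictly between $\frac{(q-1)^2}{2}$ and $\delta_1$ is excluded. For every $l\le\frac{q-3}{2}$ the $l$-th block ends at $l(q+1)\le\frac{q-3}{2}(q+1)=\frac{q^2-2q-3}{2}$, which is strictly below $\frac{(q-1)^2}{2}$; therefore the two integers $\frac{(q-1)^2}{2}$ and $\frac{(q-1)^2}{2}-1=\frac{q^2-2q-1}{2}$ lie in the gap between the $l=\frac{q-3}{2}$ and $l=\frac{q-1}{2}$ blocks (that gap has exactly $q-1-2\cdot\frac{q-3}{2}=2$ elements), so both are coset leaders. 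This gives $\delta_2=\frac{(q-1)^2}{2}$ and $\delta_3=\frac{q^2-2q-1}{2}$.

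Finally, $\delta_3-1=\frac{q^2-2q-3}{2}=\frac{q-3}{2}(q+1)$ is the top of the $l=\frac{q-3}{2}$ block, whose bottom is $\frac{q-3}{2}(q-1)+1=\frac{q^2-4q+5}{2}$, so that whole block lies in $E$; the value just below it, $\frac{q^2-4q+3}{2}=\frac{(q-3)(q-1)}{2}$, sits strictly above the tops of all blocks with $l\le\frac{q-5}{2}$ (those tops are at most $\frac{q-5}{2}(q+1)=\frac{q^2-4q-5}{2}$) and strictly below the bottoms of both remaining blocks, hence is a coset leader, giving $\delta_4=\frac{(q-3)(q-1)}{2}$. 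I do not anticipate a genuine obstacle here; the only point that needs care is the endpoint bookkeeping showing that consecutive blocks are separated exactly by the claimed gaps, so that none of $\delta_1,\ldots,\delta_4$ is skipped or repeated. The degenerate case $q=3$ (where $\frac{q-3}{2}=0$, only the block $l=1$ occurs, and $E=\{3,4\}$) is checked directly and agrees with all three formulas.
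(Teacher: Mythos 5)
Your proposal is correct and follows essentially the same route as the paper: specialize Proposition~\ref{dim:qoddprime} to $m=2$, reduce the constraint on $h$ to $1-l\le h\le l$, and read off the largest integers not of the form $lq+h$. Your version merely makes the block/gap bookkeeping more explicit than the paper does.
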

\begin{proof}
Let $1\leq l\leq\frac{q-1}{2}$, it can be verified that
\begin{equation}\label{eq:q-1q+1}
\frac{l(q-1)}{q+1}=\frac{l(q+1)-2l}{q+1}=\left\lfloor l-\frac{2l}{q+1}\right\rfloor=l-1
\end{equation}
and
\begin{equation}\label{eq:q-1q+11}
\frac{l(q+1)}{q-1}=\frac{l(q-1)+2l}{q-1}=\left\lfloor l+\frac{2l}{q-1}\right\rfloor=\left\{ \begin{array}{lll}
           l+1, & {\rm if}\, \,\, l=\frac{q-1}{2}, \vspace{3mm} \\
           l, & {\rm if}\, \,\, 1\leq l\leq\frac{q-3}{2}. \end{array}  \right.
\end{equation}

From Proposition \ref{dim:qoddprime}, it is easily seen that $0\leq a\leq q^2$ is a coset leader if and only if
$0\leq a\leq \frac{q^2+1}{2}$ and $a\neq lq+h$, where $1\leq l\leq\frac{q-1}{2}$ and $-\frac{l(q-1)}{q+1}< h< \frac{l(q+1)}{q-1}$.
Then from Eqs.(\ref{eq:q-1q+1}) and (\ref{eq:q-1q+11}), we have
\begin{equation*}
  ~\left\{
     \begin{array}{ll}
       \delta_2=\frac{(q-1)^2}{2}, \\
       \delta_3=\frac{(q-1)^2}{2}-1, \\
       \delta_4=\delta_3-(q-2).
     \end{array}
   \right.
\end{equation*}
The desired conclusion then follows.
\end{proof}

\begin{lemma}
Let $\delta_2$, $\delta_3$ and $\delta_4$ be given as in Lemma \ref{dim:oddm=2}. If $a\in \{\delta_2, \delta_3, \delta_4\}$, then $|C_a|=4$.
\end{lemma}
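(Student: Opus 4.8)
The plan is to use that, for $m=2$, the order of $q$ modulo $n=q^2+1$ equals $4$, so that $|C_a|$ divides $4$ and hence $|C_a|\in\{1,2,4\}$. Thus it suffices to rule out $|C_a|=1$ and $|C_a|=2$. In either of these cases one has $aq^{2}\equiv a\pmod n$ (when $|C_a|=1$ this is clear, and when $|C_a|=2$ it holds because $aq^{2}\in C_a=\{a,aq\}$ while $aq^{2}\not\equiv aq$, since $\gcd(q,n)=1$). So the whole lemma reduces to the single assertion that $n\nmid a(q^{2}-1)$ for every $a\in\{\delta_2,\delta_3,\delta_4\}$, and I would record this reduction first.

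Next I would compute $\gcd(n,q^{2}-1)=\gcd(q^{2}+1,q^{2}-1)$. Since $q$ is odd, Lemma \ref{lemma5} applied with base $q$ and $u=v=2$ (so $v_2(v)\le v_2(u)$) yields $\gcd(q^{2}+1,q^{2}-1)=2$; this is also immediate from $(q^{2}+1)-(q^{2}-1)=2$ together with $2\mid q^{2}+1$. Using the elementary fact that $n\mid a(q^{2}-1)$ is equivalent to $\tfrac{n}{\gcd(n,q^{2}-1)}\mid a$, this shows that $n\mid a(q^{2}-1)$ holds if and only if $\tfrac{q^{2}+1}{2}\mid a$.

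Finally I would bound the three candidates. By Lemma \ref{dim:oddm=2} we have $0<\delta_4\le\delta_3\le\delta_2=\tfrac{(q-1)^{2}}{2}$, and since $(q-1)^{2}=q^{2}-2q+1<q^{2}+1$, every $a\in\{\delta_2,\delta_3,\delta_4\}$ satisfies $0<a<\tfrac{q^{2}+1}{2}$. Hence $\tfrac{q^{2}+1}{2}\nmid a$, so $n\nmid a(q^{2}-1)$, and therefore $|C_a|=4$. The argument is essentially a one-line gcd estimate once the divisor structure of $|C_a|$ is in place, so I do not expect a genuine obstacle; the only point to watch is the degenerate case $q=3$, where $\delta_4=0$ and $|C_{\delta_4}|=1$, which is tacitly excluded by working in the regime where $\delta_2,\delta_3,\delta_4$ are genuine positive coset leaders.
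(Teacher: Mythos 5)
Your proof is correct and follows essentially the same route as the paper's: reduce to $|C_a|\mid 4$ via $\mathrm{ord}_n(q)=4$, then rule out $t\in\{1,2\}$ from the divisibility $(q^2+1)\mid a(q^t-1)$ together with the gcd computation (Lemma \ref{lemma5} or the direct observation $\gcd(q^2+1,q^2-1)=2$) and the bound $0<a\le\delta_2=\tfrac{(q-1)^2}{2}<\tfrac{q^2+1}{2}$. Your caveat about the degenerate case $q=3$, where $\delta_4=0$ and $|C_{\delta_4}|=1$, is a valid point that the paper's terse proof overlooks.
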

\begin{proof} The proof is similar to that of Lemma \ref{m=2coset}, we omit the details.
\end{proof}

From Lemma \ref{lem:disance} and the previous conclusions, one can get the following results.

\begin{theorem}\label{Theorem-04}
Let $\delta_1=\frac{n}{2}$, $\delta_2$, $\delta_3$ and  $\delta_4$ be given as in Lemma \ref{dim:oddm=2}. Let $\delta$ be an integer.
Then the BCH code $\mathcal{C}_{(q,q^2+1,\delta+1,0)}$ has parameters $[q^2+1,4i-3,d\geq 2\delta_i]$ if
$\delta_{i+1}+1\leq \delta \leq \delta_i~(i=1,2,3)$ and $\mathcal{C}_{(q,q^2+1,\delta_4+1,0)}$ has parameters $[q^2+1,13,d\geq 2\delta_4]$.
\end{theorem}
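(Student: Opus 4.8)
The plan is to obtain the dimension by a $q$-cyclotomic coset count and to read off the minimum-distance bound from Lemma~\ref{lem:disance}. Recall that the dimension of $\mathcal{C}_{(q,q^2+1,\delta+1,0)}$ equals $(q^2+1)-\deg g_{(q,q^2+1,\delta+1,0)}(x)$, where $g_{(q,q^2+1,\delta+1,0)}(x)={\rm lcm}(\mathbb{M}_{\beta^0}(x),\mathbb{M}_{\beta^1}(x),\ldots,\mathbb{M}_{\beta^{\delta-1}}(x))$, and that $\deg g_{(q,q^2+1,\delta+1,0)}(x)=\left|\bigcup_{j=0}^{\delta-1}C_j\right|$. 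Since each $q$-cyclotomic coset is closed under multiplication by $q$, the set $\bigcup_{j=0}^{\delta-1}C_j$ is precisely the union of those cosets $C_a$ whose coset leader $a$ satisfies $a\le\delta-1$, so $\left|\bigcup_{j=0}^{\delta-1}C_j\right|=\sum|C_a|$ with the sum running over all coset leaders $a$ with $0\le a\le\delta-1$.

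First I would show that, for each fixed $i\in\{1,2,3\}$, the code $\mathcal{C}_{(q,q^2+1,\delta+1,0)}$ does not change as $\delta$ ranges over $\delta_{i+1}+1\le\delta\le\delta_i$. By the remark preceding Lemma~\ref{dim:oddm=2} (that $\delta_1=n/2$ is the largest coset leader) together with Lemma~\ref{dim:oddm=2}, the four largest coset leaders are $\delta_1>\delta_2>\delta_3>\delta_4$, and in particular there is no coset leader strictly between $\delta_{i+1}$ and $\delta_i$. Consequently, for every integer $j$ with $\delta\le j\le\delta_i-1$ the coset leader of $j$ is at most $\delta_{i+1}\le\delta-1$, so $C_j\subseteq\bigcup_{t=0}^{\delta-1}C_t$; therefore $\bigcup_{j=0}^{\delta-1}C_j=\bigcup_{j=0}^{\delta_i-1}C_j$ throughout the interval, and $\mathcal{C}_{(q,q^2+1,\delta+1,0)}=\mathcal{C}_{(q,q^2+1,\delta_i+1,0)}$.

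Next I would count. Since $\delta_1,\ldots,\delta_i$ are the $i$ largest coset leaders and $\delta_i-1\ge\delta_{i+1}$, the coset leaders not exceeding $\delta_i-1$ are exactly all coset leaders other than $\delta_1,\ldots,\delta_i$. As the cosets $C_{\delta_1},\ldots,C_{\delta_i}$ are pairwise disjoint and the $q$-cyclotomic cosets partition $\mathbb{Z}_{q^2+1}$, this gives $\deg g_{(q,q^2+1,\delta+1,0)}(x)=(q^2+1)-\sum_{j=1}^i|C_{\delta_j}|$, hence $\dim\mathcal{C}_{(q,q^2+1,\delta+1,0)}=\sum_{j=1}^i|C_{\delta_j}|$. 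Plugging in $|C_{\delta_1}|=1$ and $|C_{\delta_2}|=|C_{\delta_3}|=|C_{\delta_4}|=4$ (established just above) yields $1+4(i-1)=4i-3$. Running the same bookkeeping with all of $\delta_1,\delta_2,\delta_3,\delta_4$ removed yields $\dim\mathcal{C}_{(q,q^2+1,\delta_4+1,0)}=1+4+4+4=13$. For the distance, Lemma~\ref{lem:disance} applied to $\mathcal{C}_{(q,q^2+1,\delta_i+1,0)}$ (designed distance $\delta_i+1$) gives $d\ge 2\delta_i$, and since by the previous paragraph this code equals $\mathcal{C}_{(q,q^2+1,\delta+1,0)}$ for all $\delta$ in the interval, the bound $d\ge 2\delta_i$ holds throughout; likewise $d\ge 2\delta_4$ for $\mathcal{C}_{(q,q^2+1,\delta_4+1,0)}$. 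Collecting these facts gives the stated parameters.

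The one point that needs care is the assertion that no coset leader lies strictly between $\delta_{i+1}$ and $\delta_i$: this is what keeps the code constant along each $\delta$-interval, and it is indispensable for the lower bound on $d$, since a priori enlarging the designed distance only shrinks the code, which is the wrong direction for a lower bound. That assertion is exactly what Lemma~\ref{dim:oddm=2} supplies, so it costs nothing here, and the remainder is routine coset arithmetic — modulo the tacit assumption that $q$ is large enough ($q\ge 5$) for $\delta_4\ge 1$, so that all the designed distances appearing are at least $2$.
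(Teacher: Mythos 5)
Your proposal is correct and takes essentially the same route as the paper, which gives no written proof beyond citing Lemma~\ref{lem:disance} together with Lemma~\ref{dim:oddm=2} and the coset-size lemma; you have simply supplied the coset-counting bookkeeping (the code is constant on each interval because $\delta_1>\delta_2>\delta_3>\delta_4$ are the largest coset leaders, so $\dim=\sum_{j\le i}|C_{\delta_j}|=4i-3$, and the distance bound comes from applying Lemma~\ref{lem:disance} at the top of each interval) that the authors left implicit. Your closing caveat about small $q$ is a fair, minor observation and does not affect the argument.
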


\begin{example}\label{example-03}
The following numerical examples were calculated by Magma programs, and coincide with Theorem \ref{Theorem-04}.
\begin{itemize}
\item Let $q=9$, then the code $\mathcal{C}_{(9,82,32,0)}$ has parameters $[82,9,62]$.
\item Let $q=5$, then the code $\mathcal{C}_{(5,26,8,0)}$ has parameters $[26,9,14]$.
\item Let $q=7$, then the code $\mathcal{C}_{(7,50,19,0)}$ has parameters  $[50,5,38]$.
\item Let $q=7$, then the code $\mathcal{C}_{(7,50,18,0)}$ has parameters  $[50,9,34]$.
\end{itemize}
Note that all the four codes are optimal according to the tables of best known codes in \cite{Gra}.
\end{example}

\begin{remark}
In \cite{LY2}, when $m=2$, the authors determined the largest coset leader for $q$ being an even prime power, and determined the first and the second largest coset leaders for $q$ being an odd prime power.
\end{remark}

\section{The lower bounds on the minimum distance of antiprimitive BCH codes}
In this section,  our main task is to study the minimum distance of antiprimitive BCH codes. Let us start with two classes of important polynomials.
\subsection{Two classes of important polynomials}
We define $[l]:=\{1,2,\ldots,l\}$. The {\it multivariate homogeneous polynomial} of degree $r$ in $l$ variables $x_1, x_2,\ldots, x_{l}$, written $P_{r,l}(x_1,\ldots,x_l)$, is defined recursively by
\begin{equation*}\label{homo1}
  P_{r,l}(x_1,x_2,\ldots,x_l)=x_lP_{r-1,l}
  (x_1,x_2,\ldots,x_l)+P_{r,l-1}(x_1,x_2,\ldots,x_{l-1});
\end{equation*}
\begin{equation*}\label{homo2}
  P_{r,1}(x_1)=x_1^r;~~~~P_{0,l}(x_1,x_2,\ldots,x_l)\equiv1;~~~~P_{-s,l}(x_1,x_2,\ldots,x_l)\equiv0,~s>0.
\end{equation*}
Equally efficient notation is $P_{r,l}$.

The {\it elementary symmetric polynomial} (ESP for short) of degree $r$ in $l$ variables $x_1, x_2,\ldots, x_{l}$, written $\sigma_{l,r}(x_1, x_2,\ldots, x_{l})$, is defined by
\begin{equation*}\label{element}
  \sigma_{l,r}(x_1, x_2,\ldots, x_{l})=\sum\limits_{I\subseteq [l],\#I=r}\prod\limits_{j\in I}x_j.
\end{equation*}

An $m\times m$ matrix
$$V_{i_1,i_2,\ldots,i_{m-1}}(a_1,a_2,\ldots,a_m)=\left(\begin{array}{cccc}
1&1&\cdots&1\\
a_1^{i_1}&a_2^{i_1}&\cdots&a_m^{i_1}\\
\vdots&\vdots&\vdots&\vdots\\
a_1^{^{i_{m-1}}}&a_2^{^{i_{m-1}}}&\cdots&a_m^{^{i_{m-1}}}
\end{array}\right)
$$
is called a {\it generalized Vandermonde matrix}, where the elements are all different, that is $s\neq t$ implies $a_s\neq a_t.$ Moreover, it is called a {\it Vandermonde matrix}, and written simply as $V(a_1,a_2,\ldots,a_m)$, if $i_1=i_2-1=\cdots=i_{m-1}-(m-2)=1$.

The following lemma establishes a useful connection between the Vandermonde determinant and the generalized Vandermonde determinant.
\begin{lemma}\label{gvand}\cite[Sec. 338]{Muir}\cite{HVDV}
Let $D_{k_1,\ldots,k_m}(x_1,\ldots,x_{m+1})$ denote the determinant of the matrix $$\left(\begin{array}{cccc}
P_{k_j-1,2}&P_{k_j-2,3}&\cdots&P_{k_j-m,m+1}
\end{array}\right)_{1\leq j\leq m},$$
where $(P_{k_j-1,2}~~P_{k_j-2,3}~~\cdots~~P_{k_j-m,m+1})$ denotes the $j$-th row of this matrix.
Then the determinant of $V_{k_1,\ldots,k_m}(x_1,\ldots,x_{m+1})$ is equal to
\begin{equation*}\label{didi}
  D_{k_1,\ldots,k_m}(x_1,\ldots,x_{m+1})\det(V(x_1,\ldots,x_{m+1})).
\end{equation*}
\end{lemma}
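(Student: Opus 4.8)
We want to show that the determinant of the generalized Vandermonde matrix $V_{k_1,\dots,k_m}(x_1,\dots,x_{m+1})$ factors as $D_{k_1,\dots,k_m}(x_1,\dots,x_{m+1})\cdot\det(V(x_1,\dots,x_{m+1}))$, where $D_{k_1,\dots,k_m}$ is the determinant of the matrix built from the homogeneous polynomials $P_{k_j-1,2},P_{k_j-2,3},\dots,P_{k_j-m,m+1}$. (Note there is a mismatch in sizes between an $m\times m$ matrix and $m+1$ variables; I will read $V_{k_1,\dots,k_m}(x_1,\dots,x_{m+1})$ as the $(m+1)\times(m+1)$ generalized Vandermonde whose rows are $1$ and the powers $x_j^{k_1},\dots,x_j^{k_m}$, matching the convention in the statement of Lemma~\ref{gvand}.)

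\medskip

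\textbf{Plan of proof.} The key is to perform column operations on the generalized Vandermonde matrix that successively strip off the ordinary Vandermonde factor, producing at each stage the homogeneous polynomials $P_{r,l}$ via their defining recursion. Concretely, I would proceed as follows.

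\textbf{Step 1: set up the recursion on the number of variables.} Fix the exponents $k_1,\dots,k_m$ and induct on the number $m+1$ of variables (equivalently on the matrix size). The base case is a $2\times 2$ matrix with rows $(1,1)$ and $(x_1^{k_1},x_2^{k_1})$, whose determinant is $x_2^{k_1}-x_1^{k_1}$. Since $P_{k_1-1,2}(x_1,x_2)=x_1^{k_1-1}+x_1^{k_1-2}x_2+\cdots+x_2^{k_1-1}$ and $\det V(x_1,x_2)=x_2-x_1$, the identity $x_2^{k_1}-x_1^{k_1}=(x_2-x_1)P_{k_1-1,2}(x_1,x_2)$ is exactly the base case, using $P_{0,2}\equiv 1$ and the convention $P_{-s,l}\equiv 0$.

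\textbf{Step 2: the column-operation step.} For the inductive step, take the $(m+1)\times(m+1)$ matrix and subtract $x_1$ times each column from the next one (from right to left), i.e. replace column $C_j$ by $C_j-x_1 C_{j-1}$ for $j=m+1,m,\dots,2$. The first row becomes $(1,0,0,\dots,0)$ after this (since $1-x_1\cdot1=1-x_1$; more care is needed — one should instead factor row by row, or expand along the first row). The cleanest bookkeeping is: expand $\det V_{k_1,\dots,k_m}(x_1,\dots,x_{m+1})$ as a polynomial in $x_1$; it vanishes when $x_1=x_j$ for $j\ge 2$, so $\prod_{j\ge2}(x_j-x_1)$ divides it, and iterating over all variables yields that $\det V(x_1,\dots,x_{m+1})$ divides $\det V_{k_1,\dots,k_m}$. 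The quotient is a symmetric polynomial; the real content is identifying it with $D_{k_1,\dots,k_m}$.

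\textbf{Step 3: identify the quotient via the $P_{r,l}$ recursion.} To pin down the quotient, I would show directly that the matrix entries transform correctly. After dividing the Vandermonde factor out column-group by column-group, each original entry $x_j^{k_i}$ is replaced by a Schur-like polynomial, and the recursion $P_{r,l}=x_l P_{r-1,l}+P_{r,l-1}$ is precisely what records one step of "divide by $(x_l - \text{previous})$ and pass to $l$ variables." So I would prove by a secondary induction (on $m$, the number of nontrivial rows) that after removing the Vandermonde determinant in $x_1,\dots,x_{m+1}$, row $j$ of the residual matrix is $(P_{k_j-1,2},P_{k_j-2,3},\dots,P_{k_j-m,m+1})$ — matching the definition of $D_{k_1,\dots,k_m}$ — where the $P_{r,l}$ appearing in column $l$ is evaluated at $x_1,\dots,x_l$. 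The conventions $P_{0,l}\equiv1$ and $P_{-s,l}\equiv0$ guarantee the bottom rows behave correctly and that the first row collapses to $(1,0,\dots,0)$, so the $D$-determinant has the stated form.

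\medskip

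\textbf{Main obstacle.} The nontrivial part is \emph{not} showing the Vandermonde determinant divides $\det V_{k_1,\dots,k_m}$ (that is the standard root-counting argument), but rather verifying that the cofactor is exactly $D_{k_1,\dots,k_m}$ with the precise indexing $P_{k_j-i,i+1}$ in column $i$. Getting the index shifts right — why column $i$ carries $P_{\,\bullet\,,i+1}$ in $i+1$ variables and how the recursion $P_{r,l}=x_lP_{r-1,l}+P_{r,l-1}$ matches a single elementary column operation "$C_{i+1}\mapsto C_{i+1}-x_{i+1}C_i$" (or the appropriate variant) — is the delicate bookkeeping, and I expect this to consume most of the proof. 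A clean way to organize it is to cite the classical reference \cite[Sec.~338]{Muir} for the combinatorial identity and supply the inductive verification of the index pattern; this is the approach I would take. Everything else reduces to the base case in Step~1 plus the divisibility observation in Step~2.
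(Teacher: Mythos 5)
First, a point of reference: the paper does not prove this lemma at all — it is quoted from \cite[Sec.~338]{Muir} and \cite{HVDV} — so there is no in‑paper argument to measure your attempt against. Judged on its own terms, your outline picks the right classical route (base case; divisibility of $\det V_{k_1,\ldots,k_m}$ by $\det V$; identification of the symmetric quotient), and Steps 1 and 2 are sound as far as they go. The problem is Step 3. Showing that the quotient is precisely $D_{k_1,\ldots,k_m}$ — that is, that column $i$ of the cofactor matrix carries $P_{k_j-i,\,i+1}$ evaluated at $x_1,\ldots,x_{i+1}$ — is the entire content of the lemma, and you do not carry it out: you describe what you ``would show,'' concede that the index bookkeeping ``consumes most of the proof,'' and then propose to cite \cite[Sec.~338]{Muir} ``for the combinatorial identity.'' That last move is circular, since the combinatorial identity is exactly the statement being proved. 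The root‑counting argument of Step 2 only tells you the quotient is \emph{some} symmetric polynomial; without Step 3 nothing has been established about its determinantal form, so as written this is a plan rather than a proof.

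The gap can be closed cleanly with divided differences, which is the precise version of the column operations you gesture at. For $j=1,\ldots,m+1$ replace column $j$ of $V_{k_1,\ldots,k_m}(x_1,\ldots,x_{m+1})$ by the order‑$j$ divided difference of each row at the nodes $x_1,\ldots,x_j$. This is a triangular column transformation whose determinant equals $\prod_{j}\prod_{t<j}(x_j-x_t)^{-1}=\det(V(x_1,\ldots,x_{m+1}))^{-1}$; the row of ones becomes $(1,0,\ldots,0)$, and the row with exponent $k$ becomes $(x^{k}[x_1],\,x^{k}[x_1,x_2],\,\ldots,\,x^{k}[x_1,\ldots,x_{m+1}])$, where $x^{k}[x_1,\ldots,x_l]=P_{k-l+1,\,l}(x_1,\ldots,x_l)$. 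That last identity is the one genuinely new fact you need, and it is a short induction on $l$ using exactly the defining recursion $P_{r,l}=x_lP_{r-1,l}+P_{r,l-1}$ together with the conventions $P_{0,l}\equiv 1$ and $P_{-s,l}\equiv 0$. Expanding the transformed determinant along its first row $(1,0,\ldots,0)$ deletes the first column and leaves the $m\times m$ matrix with $(j,i)$ entry $P_{k_j-i,\,i+1}$, i.e.\ $D_{k_1,\ldots,k_m}$, which gives the lemma. With that divided‑difference computation supplied, your Steps 1--3 assemble into a complete proof; without it, the key identification remains unproved.
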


Let $M_{\eta,l}(x_1,\ldots, x_{\eta+l})$ be the matrix defined as
\begin{equation*}\label{Mk}
  M_{\eta,l}(x_1,\ldots, x_{\eta+l})=
  \left(\begin{array}{cccccc}
    x_1^{-l} & \cdots &   x_{\eta}^{-l}    & x_{\eta+1}^{-l} & \cdots & x_{\eta+l}^{-l} \\
    x_1^{-l+1} & \cdots &   x_{\eta}^{-l+1}    & x_{\eta+1}^{-l+1} & \cdots & x_{\eta+l}^{-l+1} \\
    \vdots &\vdots &\vdots &\vdots &\vdots &\vdots  \\
    x_1^{-1}     & \cdots & x_{\eta}^{-1} &x_{\eta+1}^{-1} & \cdots &  x_{\eta+l}^{-1}\\
    x_1^{1}     & \cdots & x_{\eta}^{1} &x_{\eta+1}^{1}    & \cdots &  x_{\eta+l}^{1}\\
    \vdots &\vdots &\vdots &\vdots &\vdots &\vdots  \\
    x_1^{\eta}     & \cdots & x_{\eta}^{\eta} &x_{\eta+1}^{\eta}&\cdots &  x_{\eta+l}^{\eta}
\end{array}\right)
\end{equation*}
with $1\leq l\leq \eta$ and $x_1,x_2,\ldots,x_{\eta+l}$ being pairwise distinct.

According to the above auxiliary lemma, we obtain the following result.
\begin{lemma}\label{le2}
Let $l$ be a positive integer and $1\leq l\leq \eta$. Then
\begin{equation*}\label{MM}
  \det(M_{\eta,l}(x_1,\ldots, x_{\eta+l}))
=\frac{\sigma_{\eta+l,\eta}(x_1,\ldots,x_{\eta+l})}
{\sigma_{\eta+l,\eta+l}^{l}(x_1,\ldots, x_{\eta+l})}
\det(V(x_1,\ldots,x_{\eta+l})).
\end{equation*}

\end{lemma}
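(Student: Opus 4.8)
The plan is to strip the negative exponents off $M_{\eta,l}$, turn it into a generalized Vandermonde matrix, and then invoke Lemma~\ref{gvand}. Concretely, I would factor $x_j^{-l}$ out of the $j$-th column of $M_{\eta,l}(x_1,\dots,x_{\eta+l})$ for every $j$; this produces the scalar $\prod_{j=1}^{\eta+l}x_j^{-l}=\sigma_{\eta+l,\eta+l}^{-l}(x_1,\dots,x_{\eta+l})$ and leaves the $(\eta+l)\times(\eta+l)$ matrix whose $j$-th column is $(x_j^{0},x_j^{1},\dots,x_j^{l-1},x_j^{l+1},\dots,x_j^{\eta+l})^{T}$. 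This is exactly the generalized Vandermonde matrix $V_{k_1,\dots,k_m}(x_1,\dots,x_{\eta+l})$ with $m=\eta+l-1$ and $\{k_1<\dots<k_m\}=\{1,2,\dots,\eta+l\}\setminus\{l\}$, i.e. the single exponent $l$ is the one omitted. Applying Lemma~\ref{gvand} then gives $\det(M_{\eta,l})=\sigma_{\eta+l,\eta+l}^{-l}\,D_{k_1,\dots,k_m}(x_1,\dots,x_{\eta+l})\,\det(V(x_1,\dots,x_{\eta+l}))$, so the lemma is equivalent to the identity $D_{k_1,\dots,k_m}(x_1,\dots,x_{\eta+l})=\sigma_{\eta+l,\eta}(x_1,\dots,x_{\eta+l})$.

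To prove that identity I would use the block structure of the defining matrix $(P_{k_j-t,\,t+1})_{1\le j,t\le m}$ of $D_{k_1,\dots,k_m}$. For $1\le j\le l-1$ we have $k_j=j$, so the $(j,t)$-entry is $P_{j-t,\,t+1}$, which is $0$ whenever $t>j$ (because $P_{-s,\cdot}\equiv0$) and is $1$ when $t=j$ (because $P_{0,\cdot}\equiv1$); in particular the first $l-1$ rows are identically zero on the last $\eta$ columns, and the top-left $(l-1)\times(l-1)$ block is lower unitriangular. Hence the determinant equals that of the bottom-right $\eta\times\eta$ block $C$, whose $(j',t')$-entry, after the reindexing $j=l-1+j'$, $t=l-1+t'$, is $C_{j',t'}=P_{j'-t'+1,\,l+t'}$ with $1\le j',t'\le\eta$. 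It remains to show $\det C=\sigma_{\eta+l,\eta}(x_1,\dots,x_{\eta+l})$. Here I would note that $P_{r,k}$ is the complete homogeneous symmetric polynomial in $x_1,\dots,x_k$ and that, by iterating the defining recursion in the form $P_{r,k}=P_{r,k+1}-x_{k+1}P_{r-1,k+1}$, one can expand each column of $C$ (which uses only $l+t'$ of the variables) as a combination, with coefficients that are elementary symmetric polynomials in $x_{l+t'+1},\dots,x_{l+\eta}$, of columns $t',t'+1,\dots,\eta$ of the matrix $\widetilde C=(P_{j'-t'+1,\,l+\eta})_{1\le j',t'\le\eta}$ in which every entry uses the full variable set. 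Performing these as elementary column operations (processing $t'=1,2,\dots,\eta$ in turn, each time adding to column $t'$ a combination of columns strictly to its right) is determinant-preserving, so $\det C=\det\widetilde C$; and $\widetilde C$ is, up to transposition, the Jacobi--Trudi matrix of the column partition $(1^{\eta})$, whence $\det\widetilde C=\sigma_{\eta+l,\eta}(x_1,\dots,x_{\eta+l})$. (Alternatively, $\det C=\sigma_{\eta+l,\eta}$ can be proved by induction on $\eta$, expanding $\det C$ along its first row, whose only nonzero entries are $P_{1,l+1}=\sigma_{l+1,1}$ and $P_{0,l+2}=1$, so that deleting the first row and column produces the block $C$ for the parameters $(\eta-1,l+1)$.)

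Combining the two reductions yields $\det(M_{\eta,l})=\sigma_{\eta+l,\eta+l}^{-l}\,\sigma_{\eta+l,\eta}\,\det(V)$, which is the asserted formula. The routine parts are the column-scaling and the block collapse; the real obstacle is the last identification $D_{k_1,\dots,k_m}=\sigma_{\eta+l,\eta}$, i.e. evaluating the $\eta\times\eta$ determinant $C$ in which the number of active variables increases with the column index. This is exactly where the column-operation argument (or the induction) is needed, and one must check carefully that the operations used are genuinely of the elementary, determinant-preserving type and that the expansion coefficients coming from the $P_{r,k}$-recursion are the elementary symmetric polynomials claimed.
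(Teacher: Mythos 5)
Your proposal is correct and follows the same skeleton as the paper's proof: factor $x_j^{-l}$ out of each column to pick up the scalar $\sigma_{\eta+l,\eta+l}^{-l}$, recognize the remaining matrix as $V_{1,\ldots,l-1,l+1,\ldots,\eta+l}(x_1,\ldots,x_{\eta+l})$, apply Lemma~\ref{gvand}, and then evaluate the resulting determinant of complete homogeneous symmetric polynomials. Where you diverge is in the two technical steps that the paper leaves implicit or handles differently. First, you explicitly justify the collapse of the $(\eta+l-1)\times(\eta+l-1)$ matrix $D_{k_1,\ldots,k_m}$ to the $\eta\times\eta$ block $C=(P_{j'-t'+1,\,l+t'})$ via the lower-unitriangular top-left block; the paper simply writes down the $\eta\times\eta$ determinant without comment, so your version is more complete here. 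Second, for the key identity $\det C=\sigma_{\eta+l,\eta}(x_1,\ldots,x_{\eta+l})$, the paper argues by induction on $\eta$, expanding the determinant to get the recursion $F(\eta)=\sum_{a}(-1)^{a-1}P_{a,l+\eta-a+1}F(\eta-a)$ and then invoking (without proof) the staircase Newton-type identity $\sum_a(-1)^{a-1}P_{a,l+\eta-a+1}\sigma_{\eta-a+l,\eta-a}=\sigma_{\eta+l,\eta}$; you instead use the recursion $P_{r,k}=P_{r,k+1}-x_{k+1}P_{r-1,k+1}$ to perform unitriangular column operations that replace every entry by its value in the full variable set, reducing to the dual Jacobi--Trudi determinant $\det(h_{j'-t'+1}(x_1,\ldots,x_{\eta+l}))=e_\eta(x_1,\ldots,x_{\eta+l})$. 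Your route imports a standard symmetric-function identity but avoids the unproved staircase identity on which the paper's induction step rests, so it is arguably the more self-contained of the two; the one point to keep tidy is the order in which the column operations are performed (right to left, or equivalently the observation that the transition matrix is unitriangular), and the fact that $e_s(x_{l+t'+1},\ldots,x_{l+\eta})$ vanishes for $s>\eta-t'$, which is exactly what keeps all correction terms among the legitimate columns of $\widetilde C$.
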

\begin{proof}
Observe that
$\det(M_{\eta,l}(x_1,\ldots,x_{\eta+l}))=\frac{\det(V_{1,2,\ldots,l-1,l+1,\ldots,\eta+l}(x_1,\ldots,x_{\eta+l}))}
{\sigma_{\eta+l,\eta+l}^{l}(x_1,\ldots,x_{\eta+l})}$.  Lemma \ref{didi} implies that
\begin{eqnarray*}
  \det(V_{1,2,\ldots,l-1,l+1,\ldots,\eta+l}(x_1,\ldots,x_{\eta+l}))=&~& ~~~~~~~~~~~~~~~~~~~~~~~~~~~~~~~~~ \\
  \det\left(\begin{array}{ccccc}
    P_{1,l+1}
     & P_{0,l+2}
      &   0    & \cdots &0 \\
    P_{2,l+1}
     & P_{1,l+2}
     &P_{0,l+3}
     &\cdots&0 \\
    \vdots &\vdots &\vdots &\vdots &\vdots  \\
    P_{\eta-1,l+1}
    &P_{\eta-2,l+2}
    &P_{\eta-3,l+3}
    &\cdots &P_{0,l+\eta}
    \\
    P_{\eta,l+1}
    &P_{\eta-1,l+2}
    &P_{\eta-2,l+3}
     & \cdots &P_{1,l+\eta}
\end{array}\right) &\cdot& \det(V(x_1,\ldots,v_{\eta+l})).
\end{eqnarray*}

It is easy to verify that $P_{0,l}=1$, $P_{1,l}=\sigma_{l,1}(x_1,\ldots,x_l)$ and $P_{r,l}=\sum\limits_{i_1+\cdots+i_l=r}\prod\limits_{j=1}^{l} x_j^{i_j}$, where $i_1,\ldots,i_l$ are nonnegative integers.
The following equality
\begin{equation}\label{importan}
  \det(V_{1,2,\ldots,l-1,l+1,\ldots,\eta+l}(x_1,\ldots,x_{\eta+l}))
  =\sigma_{\eta+l,\eta}(x_1,\ldots,x_{\eta+l}),
\end{equation} is proved by induction on $\eta$. The initial case $\eta=1$ is trivial. When $\eta=2$,
\begin{footnotesize}
\begin{eqnarray*}
  \det\left(V_{1,\ldots,l-1,l+1,l+2}(x_1,\ldots,x_{l+2})\right) &=& \det\left(
                                                   \begin{array}{cc}
                                                     \sigma_{l+1,1}(x_1,\ldots,x_{l+1}) & 1 \\
                                                     P_{2,l+1} & \sigma_{l+2,1}(x_1,\ldots,x_{l+2}) \\
                                                   \end{array}
                                                 \right) \\
  ~ &=& \sigma_{l+1,1}(x_1,\ldots,x_{l+1})\sigma_{l+2,1}(x_1,\ldots,x_{l+2})
-\sum\limits_{i_1+\cdots+i_{l+1}=2}\prod\limits_{j=1}^{l+1} x_j^{i_j} \\
  ~ &=& \sigma_{l+2,2}(x_1,\ldots,x_{l+2}).
\end{eqnarray*}
\end{footnotesize}
 Assuming Eq.(\ref{importan}) holds with $\eta$ replaced by $\eta-a$, where $a\in\{1,2,\ldots,\eta-1\}.$ For the sake of convenience, let
 $\det(V_{1,2,\ldots,l-1,l+1,\ldots,\eta+l}(x_1,\ldots,x_{\eta+l})):=F(\eta)$
and let $\sigma_{k,l}$ denote $\sigma_{k,l}(x_1,\ldots,x_{k})$ in this proof.
 It then follows from $F(\eta-a)=\sigma_{\eta-a+l,\eta-a}$ that
\begin{small}
\begin{eqnarray*}
  F(\eta) &=& P_{1,l+\eta}F(\eta-1)-P_{2,l+\eta-1}F(\eta-2)+\ldots+(-1)^{\eta-2}P_{\eta-1,l+2}F(1)+(-1)^{\eta-1}P_{\eta,l+1}F(0) \\
  ~ &=& P_{1,l+\eta}\sigma_{\eta-1+l,\eta-1}-P_{2,l+\eta-1}\sigma_{\eta-2+l,\eta-2}+\ldots+(-1)^{\eta-2}P_{\eta-1,l+2}\sigma_{l+1,1}
+(-1)^{\eta-1}P_{\eta,l+1} \\
  ~ &=& \sigma_{\eta+l,\eta}.
\end{eqnarray*}
\end{small}
The desired conclusion is therefore obtained.
\end{proof}

\subsection{The lower bound on the minimum distance of $\C_{(q,q+1,\delta,1)}$}
In this subsection, we give a lower bound on the minimum distance of $\C_{(q,q+1,\delta,1)}$ under some conditions. Besides, a sufficient
condition for the antiprimitive BCH codes to be AMDS (even MDS) is given.
When $l=1$, ${\rm ord}_{q+1}(q)=2$.
Let $\eta=\delta-1$, $U_{q+1}$ be the subgroup of $\F_{q^2}^*$ of order $q+1$ and ${{U_{q+1}}\choose {i}}$ denote the set of all $i$-subsets of $U_{q+1}$.

\begin{theorem}\label{TH3}
Let $q=p^s>2\eta$ with $s$ being a positive integer. Then the narrow-sense BCH code $\C_{(q,q+1,\delta,1)}$ over $\F_q$ has parameters $[q+1,q-2\eta+1,d]$, where $d\geq \eta+l$ if $\sigma_{\eta+w,\eta}(u_1,\ldots,u_{\eta+w})\neq0$ for any $\{u_1,\ldots,u_{\eta+w}\}\in {{U_{q+1}}\choose {\eta+w}}$ with $1\leq w<l$.
\end{theorem}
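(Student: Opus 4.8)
The dimension part is routine: since $q = p^s > 2\eta$ and the defining set of $\C_{(q,q+1,\delta,1)}$ is $\bigcup_{j=1}^{\delta-1} C_j$ with ${\rm ord}_{q+1}(q)=2$, each $q$-cyclotomic coset $C_j = \{j, -j\} \pmod{q+1}$ for $1 \le j \le \eta = \delta-1$ has size $2$ (using $\eta < q/2 < q+1$ to ensure $j \not\equiv -j$ and that these cosets are distinct), so the generator polynomial has degree $2\eta$ and $k = q+1-2\eta$. So I would concentrate on the distance bound $d \ge \eta+l$.

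**Main argument.** The idea is the standard "suppose a codeword of small weight exists, then a certain matrix built from powers of its support elements is both singular (because the codeword lies in the code) and nonsingular (because it factors through a generalized Vandermonde determinant that we can show is nonzero)." Concretely, suppose $\ccc$ is a nonzero codeword of weight $w' = \eta + w$ with $1 \le w < l$; let $u_1,\dots,u_{\eta+w} \in U_{q+1}$ be the $(\eta+w)$-th roots corresponding to its support, and let $c_1,\dots,c_{\eta+w} \in \F_q^*$ be the nonzero coordinate values. The condition that $\ccc \in \C_{(q,q+1,\delta,1)}$ says $\sum_{t} c_t u_t^{j} = 0$ for $j = 1, 2, \dots, \eta$ (the consecutive zeros $\beta^1,\dots,\beta^{\eta}$ of the generator polynomial), and because the code is reversible (Lemma~\ref{LCD}, since $-1 \equiv q^{?}$... actually since $\C$ is LCD here, $g$ is self-reciprocal, so $\beta^{-1},\dots,\beta^{-\eta}$ are also roots), we also get $\sum_t c_t u_t^{-j} = 0$ for $j=1,\dots,\eta$. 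That gives $2\eta$ homogeneous linear equations in the $\eta+w \le \eta + l - 1 < 2\eta$ unknowns $c_t$. Pick any $\eta+w$ of these $2\eta$ equations — specifically the rows with exponents $-\eta, -\eta+1, \dots, -1, 1, \dots, w$ — so the coefficient matrix is exactly $M_{\eta, w}(u_1,\dots,u_{\eta+w})$ restricted appropriately, or more precisely a square submatrix of the $M$-type matrices from Lemma~\ref{le2}. Since the $c_t$ are not all zero, this square matrix is singular, so its determinant vanishes. But by Lemma~\ref{le2}, that determinant equals
$$\frac{\sigma_{\eta+w,\eta}(u_1,\dots,u_{\eta+w})}{\sigma_{\eta+w,\eta+w}^{\,w}(u_1,\dots,u_{\eta+w})} \det V(u_1,\dots,u_{\eta+w}),$$
and the Vandermonde factor is nonzero (distinct $u_t$) and $\sigma_{\eta+w,\eta+w} = \prod u_t \ne 0$, so vanishing forces $\sigma_{\eta+w,\eta}(u_1,\dots,u_{\eta+w}) = 0$ — contradicting the hypothesis. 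Hence no codeword of weight in $\{\eta, \eta+1, \dots, \eta+l-1\}$ exists. Combined with the BCH bound (Lemma~\ref{lemma2}), which already gives $d \ge \delta = \eta+1$, we rule out weights $\eta+1, \dots, \eta+l-1$ as well, so $d \ge \eta + l$.

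**The obstacle.** The delicate point is matching up the matrix dimensions so that Lemma~\ref{le2} applies verbatim: Lemma~\ref{le2} is stated for $M_{\eta,l}$ with exactly $\eta+l$ columns and the constraint $1 \le l \le \eta$, so I must choose the extracted rows and the value of the "$l$" parameter in the lemma to be $w$, with $\eta+w$ support points, and check $1 \le w \le \eta$ (which holds since $w < l \le \eta$). I also need to be careful that the consecutive exponents $-\eta,\dots,-1,1,\dots,w$ are all among the exponents for which $\sum_t c_t u_t^{j} = 0$ is known — the negative ones come from reversibility/self-reciprocity of $g$, the positive ones $1,\dots,w$ come from $w \le \eta$ and the BCH defining set. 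A secondary subtlety is justifying that weight exactly $\eta$ cannot occur either (here $w=0$ is not covered, but the BCH bound already excludes weights $< \delta = \eta+1$, so this is free). Everything else — the determinant identity, the Vandermonde nonvanishing — is supplied by the lemmas already proved, so the write-up is mostly bookkeeping of indices.
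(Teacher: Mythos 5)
Your proposal is correct and follows essentially the same route as the paper: both reduce a hypothetical codeword of weight $\eta+w$ ($1\le w<l$) to the vanishing of $\det(M_{\eta,w}(u_1,\ldots,u_{\eta+w}))$ and then invoke Lemma \ref{le2} to contradict $\sigma_{\eta+w,\eta}\neq0$, the only cosmetic difference being that you obtain the negative-exponent parity checks from self-reciprocity of $g(x)$ while the paper raises the check equations to the $q$-th power (using $u^q=u^{-1}$ on $U_{q+1}$ and $c_t\in\F_q$), which is the same fact. One small index slip: the row set matching $M_{\eta,w}$ and the determinant formula you quote is $\{-w,\ldots,-1,1,\ldots,\eta\}$, not $\{-\eta,\ldots,-1,1,\ldots,w\}$; the latter choice also works but then the relevant ESP is $\sigma_{\eta+w,\eta}$ evaluated at the inverses $u_t^{-1}$, which vanishes iff $\sigma_{\eta+w,\eta}(u_1,\ldots,u_{\eta+w})$ does.
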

\begin{proof}
For convenience, let $n=q+1$ and $\gamma=\beta^{-1}$. Then $\gamma$ is a primitive $n$-th root of unity in $\F_{q^2}^*.$
Clearly, $m_i(x)$ has only roots $\beta^{i}$ and $\beta^{qi}$ with $1\leq i\leq \eta$ as $\beta^{q^2}=\beta$. By the definition of BCH codes, the generator polynomial of $\C_{(q,q+1,\delta,1)}$ is $g(x):=\prod_{i=1}^{\eta}m_{i}(x)$,  $\deg{(g(x))}=2\eta$ and the dimension of $\C_{(q,q+1,\delta,1)}$ is $q-2\eta+1$.
It then follows from Delsarte's theorem \cite{DELSARTE} that the trace expression of $\C^{\perp}_{(q,n,\delta,1)}$ is given by
\begin{equation*}\label{Delsarte}
  \C^{\perp}_{(q,n,\delta,1)}=\{\ccc_{(a_1,\ldots,a_{\eta})}:a_1,\ldots,a_{\eta}\in\F_{q^2}\}
\end{equation*}
where $\ccc_{(a_1,\ldots,a_{\eta})}=({\rm{Tr}}_{q}^{q^2}(a_1\gamma^i+a_2\gamma^{2i}+\cdots+a_{\eta}\gamma^{\eta i}))_{i=0}^{q}$.
 Define
\begin{equation*}\label{H}
  H=
  \left(\begin{array}{ccccc}
    1 & \gamma^1 &   \gamma^2    & \cdots & \gamma^q \\
    1 & \gamma^2 &   \gamma^4    & \cdots & \gamma^{2q}\\
    \vdots&\vdots&\vdots&\vdots&\vdots\\
    1 & \gamma^{\eta}&\gamma^{2\eta}&\cdots&\gamma^{q\eta}
\end{array}\right).
\end{equation*}
It is not difficult to check that $H$ is a parity-check matrix of $\C_{(q,q+1,\delta,1)}.$
 If $\ccc=(a_1,\ldots,a_{q+1})\in \C_{(q,q+1,\delta,1)}$ and $wt(\ccc)=d$, then there exists $\{\gamma^{i_1},\ldots,\gamma^{i_d}\}\in {{U_{q+1}}\choose d}$ such that $\sum_{j=1}^d a_{s_j}\gamma^{i_j}=0$, where $\{s_1,\ldots,s_d\}={supp}(\ccc)$.  The equation $\sum_{j=1}^d a_{s_j}\gamma^{i_j}=0$ is equivalent to $\sum_{j=1}^d a_{s_j}\gamma^{-i_j}\\=0$ follows from ${(\gamma^{i_j})}^q={\gamma^{-i_j}}$.

According to the BCH bound and the Singleton bound, we have that $\eta+1\leq d\leq 2\eta+1$.
Assume that $d=\eta+w<\eta+l$. Then there are $\eta+w$ pairwise distinct elements $u_1,\ldots,u_{\eta+w}\in U_{q+1}$ such that
\begin{equation}\label{import}
  a_{s_1}\left(
    \begin{array}{c}
      u_1 \\
      u_1^2 \\
      \vdots \\
      u_1^{\eta} \\
    \end{array}
      \right)
    +a_{s_2}\left(
    \begin{array}{c}
      u_2 \\
      u_2^2 \\
      \vdots \\
      u_2^{\eta} \\
    \end{array}  \right)
    +\cdots+
    a_{s_{\eta+w}}\left(
    \begin{array}{c}
      u_{\eta+w} \\
      u_{\eta+w}^2 \\
      \vdots \\
      u_{\eta+w}^{\eta} \\
    \end{array}
  \right)=0.
\end{equation} Eq.(\ref{import}) is equivalent to
\begin{equation*}
  a_{s_1}\left(
    \begin{array}{c}
    u_{1}^{-w}\\
    u_{1}^{-w+1}\\
    \vdots\\
    u_{1}^{-1}\\
      u_1 \\
      u_1^2 \\
      \vdots \\
      u_1^{\eta} \\
    \end{array}
      \right)
    +a_{s_2}\left(
    \begin{array}{c}
    u_{2}^{-w}\\
    u_{2}^{-w+1}\\
    \vdots\\
    u_{2}^{-1}\\
      u_2 \\
      u_2^2 \\
      \vdots \\
      u_2^{\eta} \\
    \end{array}  \right)
    +\cdots+
    a_{s_{\eta+w}}\left(
    \begin{array}{c}
    u_{\eta+w}^{-w}\\
    u_{\eta+w}^{-w+1}\\
    \vdots\\
    u_{\eta+w}^{-1}\\
      u_{\eta+w} \\
      u_{\eta+w}^2 \\
      \vdots \\
      u_{\eta+w}^{\eta} \\
    \end{array}
  \right)=0.
\end{equation*}
It then follows since $a_{s_1},\ldots,a_{s_{\eta+w-1}}$ and $a_{s_{\eta+w}}$ are not all zeros (in fact, $a_{s_j}\neq0$ for all $j\in\{1,2,\ldots,\eta+w\}$) that $\det(M_{\eta,w}(u_1,\ldots,u_{\eta+w}))=0$. According to Lemma \ref{le2}, we have $\sigma_{\eta+w,\eta}(u_1,\ldots,u_{\eta+w})=0$, which is contrary to our assumption.  This completes the proof.
\end{proof}
Since $\eta+1\leq d\leq 2\eta+1$, we have the following result, which is a sufficient condition for  the
antiprimitive BCH codes of this type to be AMDS (even MDS).
\begin{corollary}\label{COR8}
If  $\prod_{i=1}^{\eta}\sigma_{\eta+i,i}(u_1,\ldots,u_{\eta+i})\neq0$ for any $\{u_1,\ldots,u_{2\eta}\}\in {U_{q+1}\choose 2\eta}$, then $\C_{(q,q+1,\delta,1)}$ is MDS.
If $\prod_{i=1}^{\eta-1}\sigma_{\eta+i,i}(u_1,\ldots,u_{\eta+i})\neq0$ for any $\{u_1,\ldots,u_{2\eta-1}\}\in {U_{q+1}\choose 2\eta-1}$, then the minimum distance of $\C_{(q,q+1,\delta,1)}$ is no less than $2\eta$.
\end{corollary}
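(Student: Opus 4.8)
The plan is to read off both statements from Theorem~\ref{TH3} by specializing the parameter $l$, and to close the MDS case with the Singleton bound. Recall that Theorem~\ref{TH3} gives $\C_{(q,q+1,\delta,1)}$ the parameters $[q+1,q-2\eta+1,d]$ with $d\geq \eta+l$ provided $\sigma_{\eta+w,\eta}(u_1,\ldots,u_{\eta+w})\neq0$ on every $(\eta+w)$-subset of $U_{q+1}$ for all $1\leq w<l$. For the first claim I would take $l=\eta+1$: this yields $d\geq \eta+(\eta+1)=2\eta+1$, while the Singleton bound gives $d\leq (q+1)-(q-2\eta+1)+1=2\eta+1$, so $d=2\eta+1=n-k+1$ and the code is MDS. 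For the second claim I would take $l=\eta$, which directly gives $d\geq \eta+\eta=2\eta$. In both cases the only real work is to check that the corollary's hypothesis implies the non-vanishing condition required by Theorem~\ref{TH3}.

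The subtlety is that Theorem~\ref{TH3} asks for $\sigma_{\eta+w,\eta}\neq0$, whereas the corollary is stated in terms of $\sigma_{\eta+i,i}$, so the two must be reconciled. The bridge is the complementation identity for elementary symmetric polynomials: for pairwise distinct nonzero $x_1,\ldots,x_k$,
\[
\sigma_{k,r}(x_1,\ldots,x_k)=\sigma_{k,k}(x_1,\ldots,x_k)\,\sigma_{k,k-r}(x_1^{-1},\ldots,x_k^{-1}),
\]
obtained by pairing each $r$-subset with its complement. Applying this with $k=\eta+w$ and $r=\eta$ (so $k-r=w$) and noting that $\sigma_{\eta+w,\eta+w}(u_1,\ldots,u_{\eta+w})=\prod_j u_j\neq0$, we get that $\sigma_{\eta+w,\eta}(u_1,\ldots,u_{\eta+w})\neq0$ exactly when $\sigma_{\eta+w,w}(u_1^{-1},\ldots,u_{\eta+w}^{-1})\neq0$. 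Since $U_{q+1}$ is a multiplicative group it is closed under inversion, so as $\{u_1,\ldots,u_{\eta+w}\}$ ranges over ${U_{q+1}\choose \eta+w}$ so does $\{u_1^{-1},\ldots,u_{\eta+w}^{-1}\}$. Hence ``$\sigma_{\eta+w,\eta}\neq0$ on all $(\eta+w)$-subsets'' is equivalent to ``$\sigma_{\eta+w,w}\neq0$ on all $(\eta+w)$-subsets'', which is precisely the factor with index $i=w$ in the corollary's product.

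It then remains to pass from the product hypothesis over $2\eta$-subsets (resp. $(2\eta-1)$-subsets) down to the single-factor statement over $(\eta+w)$-subsets. Given an arbitrary $(\eta+w)$-subset $S\subseteq U_{q+1}$ with $1\leq w\leq\eta$ (resp. $1\leq w\leq\eta-1$), the inequality $|U_{q+1}|=q+1>2\eta$ inherited from Theorem~\ref{TH3} lets me list the elements of a $2\eta$-tuple (resp. $(2\eta-1)$-tuple) of distinct elements of $U_{q+1}$ so that $S$ occupies the first $\eta+w$ positions; applying the hypothesis to this tuple and extracting the $i=w$ factor gives $\sigma_{\eta+w,w}(S)\neq0$, and the duality identity converts this into $\sigma_{\eta+w,\eta}\neq0$ on $S$. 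This verifies the hypothesis of Theorem~\ref{TH3} for all the relevant $w$, completing both parts. The main obstacle is exactly this index bookkeeping: recognizing that the ``reversed'' polynomial $\sigma_{\eta+w,w}$ of the corollary is the $u_j\mapsto u_j^{-1}$ image of the $\sigma_{\eta+w,\eta}$ produced by Lemma~\ref{le2}, and that it is the group structure of $U_{q+1}$ that makes that substitution harmless at the level of subsets.
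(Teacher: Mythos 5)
Your proof is correct and follows the paper's own route: the paper likewise obtains both claims by specializing Theorem~\ref{TH3} (to $l=\eta+1$ and $l=\eta$) and closing the MDS case with the Singleton bound. The index reconciliation you carry out via the complementation identity is exactly the equivalence $\sigma_{\eta+w,\eta}=0 \Leftrightarrow \sigma_{\eta+w,w}=0$ that the paper records in the remark following the corollary (there phrased through $\sigma_{\eta+w,\eta}=\sigma_{\eta+w,w}^q/\sigma_{\eta+w,\eta+w}$ on $U_{q+1}$), so you have simply made explicit a step the paper leaves implicit.
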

\begin{proof}
According to the Singleton bound, we have that $d\leq 2\eta+1$.
From Theorem \ref{TH3}, we have that $d\geq 2\eta+1$ under the first condition and $d\geq 2\eta$ under the second condition. This completes the proof.
\end{proof}
In other words, $\C_{(q,q+1,\delta,1)}$ is either an MDS code or an AMDS code under the second condition.
\begin{example}
Let $\{u_1,u_2,\ldots,u_6\}\in {U_{2^m+1}\choose 6}.$ From Lemma 9 and Lemma 17 in \cite{Tang1}, we have that $\sigma_{4,1}(u_1,u_2,u_3,u_4)\neq0$ for any $m$ and $\sigma_{5,2}(u_1,u_2,u_3,u_4,u_5)\neq0$ if $m$ is odd. According to Corollary \ref{COR8}, $\C_{(2^m,2^m+1,4,1)}$ is an either MDS code or an AMDS code if $m$ is odd. In fact, $\C_{(2^m,2^m+1,4,1)}$ is NMDS when $m$ is odd.
\end{example}
For $m\geq 3$, a cyclic code $\C$ of length $p^m+1$ can be defined as follows
\begin{equation*}
  N(p^m)=\left\{(c_1,c_2,\ldots,c_{p^m+1})\in \F_p^{p^m+1}:\sum_{i=1}^{p^m+1}c_i\beta^{i-1}=0\right\},
\end{equation*}
where $\beta$ is a generator of $U_{p^m+1}.$
This code $N(p^m)$ is called a {\it $p$-ary Zetterberg code} and its dual code $N(p^m)^{\perp}$ has a very simple trace description
$$N(p^m)^{\perp}=\{({\rm Tr}_p^{p^{2m}}(a\beta^0),{\rm Tr}_p^{p^{2m}}(a\beta^1),\ldots,{\rm Tr}_p^{p^{2m}}(a\beta^{p^m})):a\in\F_{p^{2m}}\}.$$
The weight distributions of binary and ternary Zetterberg codes were considered in \cite{Geer,Schoof} and their low-weight codewords over binary fields and ternary fields were obtained in \cite[Table 6.2]{Schoof} and \cite[Lemma 6]{Xia1}, respectively.
Let $B_i^p$ denote the number of its codewords with Hamming weight $i$ over $\F_p$.
If $B_i^p=0$, then $\sigma_{i,1}(u_1,u_2,\ldots,u_{i})\neq0$ for any $\{u_1,u_2\ldots,u_{i}\}\in {U_{p^m+1}\choose i}$. In order to find infinite families of AMDS or MDS codes, we pay more attention to the case that $B_i^p=0$ with $i\geq3$.

In the binary case, we observe that $B_4^2=0$ for any $m$ and $B_3^2=0$ if $m$ is even from \cite[Table 6.2]{Schoof}. In the ternary case, we see $B_3^3=0$ from \cite[Lemma 6]{Xia1}.  Ding and Tang proved that $\C_{(2^m,2^m+1,3,1)}$, $\C_{(2^m,2^m+1,4,1)}$ and $\C_{(3^m,3^m+1,3,1)}$ are indeed AMDS in some cases (in fact, they are NMDS).


In addition to the construction of AMDS codes, the ESPs can also be applied to the construction of $t$-designs.
Define
\begin{equation}\label{B}
  \mathcal{B}_{\sigma_{\eta+l,\eta,q+1}}=
  \left\{\{u_1,\ldots,u_{\eta+l}\}\in {{U_{q+1}}\choose {\eta+l}}:\sigma_{\eta+l,\eta}(u_1,\ldots,u_{\eta+l})=0\right\}.
\end{equation}
The incidence structure $\mathbb{D}_{\sigma_{\eta+l,\eta,q+1}}=(U_{q+1},\mathcal{B}_{\sigma_{\eta+l,\eta,q+1}})$ may be a $t$-design for some $\lambda$, where $U_{q+1}$ is the point set, and the incidence relation is the set membership. In this case, we say that the ESP $\sigma_{\eta+l,\eta}$ supports a $t$-$(q+1,k,\lambda)$ design. For more information on $t$-designs from linear codes, please refer to \cite{D2}. Unfortunately, all $t$-designs held in MDS codes are complete. So we are more concerned about whether AMDS codes over finite fields hold $t$-designs.

\begin{remark}
Since $\sigma_{\eta+l,\eta}(u_1,\ldots,u_{\eta+l})=\frac{\sigma_{\eta+l,l}^q(u_1,\ldots,u_{\eta+l})}
{\sigma_{\eta+l,\eta+l}(u_1,\ldots,u_{\eta+l})}$ and $\sigma_{\eta+l,\eta+l}(u_1,\ldots,u_{\eta+l})\neq0$, $\sigma_{\eta+l,\eta}(u_1,\ldots,u_{\eta+l})=0$ is equivalent to $\sigma_{\eta+l,l}(u_1,\ldots,u_{\eta+l})=0$. Tang and Ding \cite{Tang1} proposed the idea which is to construct $t$-designs from the ESP $\sigma_{k,l}(u_1,\ldots,u_k)$ with $l\leq\frac{k}{2}$ and $1\leq k\leq q+1$. However, they did not show the relationship between the ESPs and the minimum distance of $C_{(q,q+1,\delta,1)}$ for any $\delta$. Theorem \ref{TH3} explains the reason why we pay attention to the ESPs during the coding-theoretic construction. Ding and Tang studied the cases that $\eta=2$ and $3$, and presented several infinite families of linear codes holding infinite families of $t$-designs for $t=2,3,4$ in \cite{D5,Tang1}.
\end{remark}

Tang and Ding proposed an interesting open problem about the cardinality of $\mathcal{B}_{\sigma_{{k,l},q+1}}$, and considered the cases $(k,l)\in\{(6,3),(5,2),(4,1), (4,2),(3,1)\}.$

{\bf Open Problem in \cite{Tang1}}:
Let $q=2^l$, and $k, l$ be two positive integers with $l\leq \frac{k}{2}$. Determine the cardinality of the block set $\mathcal{B}_{\sigma_{{k,l},q+1}}$ given by (\ref{B}).

In fact, the cardinalities of $\mathcal{B}_{\sigma_{{k,1},2^s+1}}$ are determined by the binary Zetterberg codes, that is, $\#\mathcal{B}_{\sigma_{{k,1},2^s+1}}=B_i^2$.
It is better to revise the above open problem to ``$q$ be a prime power''. Notice that if $p>2$, then $\#\mathcal{B}_{\sigma_{{k,1},p^s+1}}\leq B_i^p.$

For any $q$ and some $\eta$, we have the following result.
\begin{proposition}\label{thif}
If $\eta+1|q+1$, then there exists $(u_1,\ldots,u_{\eta+1})\in{{U_{q+1}}\choose {\eta+1}}$ such that
\begin{equation*}\label{sigma11}
  \sigma_{\eta+1,1}(u_1,\ldots,u_{\eta+1})=\sigma_{\eta+1,\eta}(u_1,\ldots,u_{\eta+1})=0.
\end{equation*}
\end{proposition}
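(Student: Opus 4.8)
The plan is to take for $\{u_1,\ldots,u_{\eta+1}\}$ the set of all $(\eta+1)$-th roots of unity in $\F_{q^2}$; everything in the statement will then fall out of the factorization $\prod_{i=1}^{\eta+1}(x-u_i)=x^{\eta+1}-1$. So the first thing to justify is that this set is a legitimate $(\eta+1)$-subset of $U_{q+1}$, and the second is to read off the two elementary symmetric polynomials from the coefficients of $x^{\eta+1}-1$.

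First I would check the set-theoretic point. Since $q=p^s$ we have $q+1\equiv 1\pmod p$, hence $p\nmid q+1$, and as $\eta+1\mid q+1$ this forces $\gcd(\eta+1,p)=1$. Consequently $x^{\eta+1}-1\in\F_q[x]$ has derivative $(\eta+1)x^{\eta}$, which is nonzero and coprime to $x^{\eta+1}-1$ (their only possible common root is $0$, which is not a root of $x^{\eta+1}-1$), so $x^{\eta+1}-1$ is separable and has exactly $\eta+1$ pairwise distinct roots in $\overline{\F_q}$. Because $\F_{q^2}^{*}$ is cyclic of order $q^2-1$ and $\eta+1\mid q+1\mid q^2-1$, it contains a unique subgroup $V$ of order $\eta+1$, and $V$ consists precisely of those $\eta+1$ roots; moreover, in a cyclic group the subgroup of order $\eta+1$ is contained in the subgroup of order $q+1$, so $V\subseteq U_{q+1}$. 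Writing $V=\{u_1,\ldots,u_{\eta+1}\}$ thus gives an element of $\binom{U_{q+1}}{\eta+1}$.

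Next I would compare coefficients. Since $V$ is exactly the root set of $x^{\eta+1}-1$, we have
\[
\prod_{i=1}^{\eta+1}(x-u_i)=\sum_{r=0}^{\eta+1}(-1)^{r}\sigma_{\eta+1,r}(u_1,\ldots,u_{\eta+1})\,x^{\eta+1-r}=x^{\eta+1}-1,
\]
so $\sigma_{\eta+1,r}(u_1,\ldots,u_{\eta+1})=0$ for every $r$ with $1\le r\le\eta$. Taking $r=1$ and $r=\eta$ yields $\sigma_{\eta+1,1}(u_1,\ldots,u_{\eta+1})=\sigma_{\eta+1,\eta}(u_1,\ldots,u_{\eta+1})=0$, which is the assertion. (As a sanity check, the constant term gives $\sigma_{\eta+1,\eta+1}(u_1,\ldots,u_{\eta+1})=\prod u_i=(-1)^{\eta}\neq 0$, so the chosen $(\eta+1)$-subset is not degenerate.)

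The argument has essentially no hard step; the one place that deserves care is the distinctness of the $u_i$, which is exactly why I isolate $\gcd(\eta+1,p)=1$ at the outset. If one instead wanted the $u_i$ to be \emph{primitive} $(\eta+1)$-th roots of unity (equivalently, to generate $V$), that would require more, but the statement only asserts existence of \emph{some} such $(\eta+1)$-subset, so the subgroup $V$ itself suffices.
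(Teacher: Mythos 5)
Your proof is correct and uses essentially the same witness as the paper: the subgroup of $U_{q+1}$ of order $\eta+1$, i.e.\ the set $\{1,\theta,\theta^2,\ldots,\theta^{\eta}\}$ with $\theta=\gamma^{(q+1)/(\eta+1)}$ for a generator $\gamma$ of $U_{q+1}$. The only difference is cosmetic: the paper verifies $\sigma_{\eta+1,1}=0$ via the geometric sum $\theta^{\eta}+\cdots+\theta+1=0$ and then obtains $\sigma_{\eta+1,\eta}=0$ from the identity $\sigma_{\eta+1,\eta}=\sigma_{\eta+1,1}^{q}/\sigma_{\eta+1,\eta+1}$, whereas you read both (indeed all intermediate) elementary symmetric polynomials off the coefficients of $x^{\eta+1}-1$, which is equally valid and slightly more self-contained.
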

\begin{proof}
Let $\gamma$ be a generator of $U_{q+1}$ and $\theta=\gamma^{\frac{q+1}{\eta+1}}$.
Note that $\theta^{\eta+1}-1=(\theta-1)(\theta^{\eta}+\cdots+\theta+1)=0$. The equality $\theta^{\eta}+\cdots+\theta+1=0$ holds because of $\theta-1\neq0$. It then follows from the above remark that $\sigma_{\eta+1,\eta}(1,\theta,\theta^2,\ldots,\theta^{\eta})=0.$
\end{proof}

\subsection{The lower bounds on the minimum distance of $\C_{(q,q^m+1,\delta,b)}$ for $m\geq1$}
By the Delsarte theorem, we have that $\C_{(q^m,q^m+1,\delta,b)}|_{\F_q}=\C_{(q,q^m+1,\delta,b)}$. This equality is useful for deriving the minimum distance of $\C_{(q,q^m+1,\delta,b)}.$
Let $H$ be a parity-check matrix of $\C_{(q^m,q^m+1,\delta,b)}$. If there exists some vector $\ccc={(c_1,\ldots,c_{q^m+1})}\in \F_q^{q^m+1}$ with Hamming weight $i$ such that $\ccc H^T=\mathbf{0}$, then the minimum distance of  $\C_{(q,q^m+1,\delta,b)}$ is no less than $i$. To some extent, it will help us to determine the minimum distance of some BCH codes with small designed distance $\delta$.

The following theorem improves the lower bound of Lemma \ref{lem:disance} when $\delta\equiv1~({\rm mod}~q)$.
\begin{theorem}\label{TH10}
If $\delta\equiv1~({\rm mod}~q)$ and $n\geq 2\delta$, then
$\C_{(q,q^m+1,\delta,0)}=\C_{(q,q^m+1,\delta+1,0)}=
\C_{(q,q^m+1,2(\delta-1),q^m-(\delta-2))}=
\C_{(q,q^m+1,2\delta,q^m-(\delta-1))}$. Moreover, the code $\C_{(q,q^m+1,\delta,0)}$ has minimum distance $d\geq 2\delta$ in this case.
\end{theorem}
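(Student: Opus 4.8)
The plan is to establish the chain of equalities among defining sets of the four BCH codes, and then invoke Lemma~\ref{lem:disance} on the code with base point $0$ but with a larger string of consecutive zeros. Recall that a BCH code $\C_{(q,n,\delta,b)}$ is completely determined by its defining set $T = \bigcup_{i=b}^{b+\delta-2} C_i$, the union of the $q$-cyclotomic cosets modulo $n=q^m+1$ attached to the exponents $b, b+1, \ldots, b+\delta-2$. So the whole theorem reduces to a purely combinatorial statement about cyclotomic cosets modulo $q^m+1$.

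First I would use the key symmetry modulo $n=q^m+1$, namely $q^m \equiv -1 \pmod n$, which gives $C_{-i} = C_{i q^m} = C_i$ for every $i$; equivalently the cyclotomic cosets are stable under negation. The defining set of $\C_{(q,q^m+1,\delta,0)}$ is $T_1 = C_0 \cup C_1 \cup \cdots \cup C_{\delta-2}$. Using $C_{-i}=C_i$ this equals $C_0 \cup C_{\pm 1} \cup \cdots \cup C_{\pm(\delta-2)} = C_{-(\delta-2)} \cup \cdots \cup C_{-1} \cup C_0 \cup C_1 \cup \cdots \cup C_{\delta-2}$, which as a set is exactly the defining set $\bigcup_{i=-(\delta-2)}^{\delta-2} C_i$ of $\C_{(q,q^m+1,2\delta-3,\,q^m-(\delta-2))}$ — interpreting $-(\delta-2)$ modulo $n$ as $q^m+1-(\delta-2)$ and noting there are $2\delta-3$ consecutive exponents. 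To push this up to designed distance $2(\delta-1)$ (i.e.\ to adjoin $C_{\delta-1}$) and to absorb $C_{\delta}$ ``for free'', I would invoke the hypothesis $\delta\equiv 1\pmod q$: write $\delta-1 = q t$ for some integer $t\ge 1$, so that $C_{\delta-1} = C_{qt} = C_t$; since $1 \le t \le \delta-2$ under $n\ge 2\delta$, the coset $C_{\delta-1}$ is already contained in $T_1$. Hence adding the exponent $\delta-1$ does not change the defining set: $T(\C_{(q,q^m+1,\delta,0)}) = T(\C_{(q,q^m+1,\delta+1,0)})$. Similarly $C_{\delta} = C_{qt+1}$; one checks (again using $\delta\equiv 1\bmod q$ and $C_{-i}=C_i$) that the exponent $\delta$, and more generally all exponents in the range up to $2(\delta-1)-1$ obtained by the reflection trick, lie in $T_1$, so the four defining sets all coincide.

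With the defining-set equalities in hand, the distance bound is immediate: the common defining set contains the consecutive block $C_0, C_1, \ldots, C_{\delta-1}$ and, by the negation symmetry, also $C_{-1}, \ldots, C_{-(\delta-1)}$, so the code has the $2\delta-1$ consecutive exponents $-(\delta-1), -(\delta-2), \ldots, \delta-1$ among its zeros; equivalently it is the code $\C_{(q,q^m+1,2\delta,\,q^m-(\delta-1))}$ whose designed-distance block starts at $0$ after a cyclic relabeling. Applying Lemma~\ref{lem:disance} to this representation — which says that a code of the form $\C_{(q,q^m+1,\delta',0)}$ has minimum distance at least $2(\delta'-1)$ — with $\delta' = \delta$ (the zero block $C_0,\ldots,C_{\delta-1}$ has $\delta$ consecutive exponents) yields $d \ge 2(\delta-1)$; but because the code also equals $\C_{(q,q^m+1,\delta+1,0)}$, whose zero block is $C_0,\ldots,C_\delta$ with $\delta+1$ consecutive exponents starting at $0$, Lemma~\ref{lem:disance} actually gives $d \ge 2((\delta+1)-1) = 2\delta$. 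This last step is where the hypothesis $\delta\equiv 1\pmod q$ does its real work: it is exactly what guarantees the extra exponent $\delta-1$ is absorbed, upgrading the designed distance by one and the resulting bound by two.

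The main obstacle I anticipate is the bookkeeping in the second paragraph: verifying carefully that every exponent in the enlarged consecutive ranges — in particular $\delta-1$, $\delta$, and their negatives modulo $q^m+1$ — actually belongs to $C_0\cup C_1\cup\cdots\cup C_{\delta-2}$. The clean facts $C_{\delta-1}=C_{(\delta-1)/q}$ and $C_{\delta}=C_{q\lfloor(\delta-1)/q\rfloor+1}$ reduce this to checking that the smaller representatives $(\delta-1)/q$ and so on fall in the interval $[0,\delta-2]$, which the condition $n\ge 2\delta$ makes routine but still requires attention to whether these representatives are genuinely the coset leaders and to the edge behavior near $\delta/q$. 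One must also confirm that multiplying an exponent by $q$ and reducing mod $n=q^m+1$ never ``wraps around'' in a way that escapes $[0,\delta-2]$ — here $n\ge 2\delta$ is again the safeguard. Once these containments are nailed down, the chain of code equalities and the distance estimate follow formally.
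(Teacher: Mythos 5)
Your core argument is sound and is, in substance, the same as the paper's: the whole theorem hinges on the cyclotomic-coset identity $C_{\delta-1}=C_{(\delta-1)/q}$ (valid because $\delta-1=qt$ with $1\le t\le\delta-2$), which shows that adjoining the exponent $\delta-1$ does not enlarge the defining set $T_1=C_0\cup\cdots\cup C_{\delta-2}$, whence $\C_{(q,q^m+1,\delta,0)}=\C_{(q,q^m+1,\delta+1,0)}$; applying Lemma~\ref{lem:disance} to the latter code then yields $d\ge 2\delta$. The paper reaches the same coset fact indirectly, by writing the parity-check matrix of the $q^m$-ary parent code, raising the row with exponent $q^{j-1}t$ to the $q$-th power, and passing to subfield subcodes via Delsarte's theorem; your direct defining-set formulation is cleaner and equivalent. (Two bookkeeping slips you should fix: a run of $2\delta-3$ consecutive exponents corresponds to designed distance $2\delta-2=2(\delta-1)$, not $2\delta-3$; and $-(\delta-2)\equiv q^m-(\delta-3)\pmod{q^m+1}$, not $q^m-(\delta-2)$.)

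The genuine gap is precisely the step you deferred with ``one checks'': the claim that the exponent $\delta$ lies in $T_1$. This is false in general. Take $q=2$, $m=3$, $\delta=3$, so $\delta\equiv1\pmod 2$ and $n=9\ge 2\delta$: then $T_1=C_0\cup C_1=\{0\}\cup\{1,2,4,5,7,8\}$, while $C_\delta=C_3=\{3,6\}$ is disjoint from it. The fourth code as indexed in the statement, $\C_{(q,q^m+1,2\delta,q^m-(\delta-1))}$, has exponent window $\{-\delta,-\delta+1,\ldots,\delta-2\}$ and hence defining set $T_1\cup C_{\delta-1}\cup C_{\delta}$; in the example above this is all of $\mathbb{Z}_9$, so $\C_{(2,9,6,6)}$ is the zero code while $\C_{(2,9,3,0)}$ has dimension $2$. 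So no amount of checking will place $C_\delta$ inside $T_1$: the last equality fails as written (the base point should be $q^m-(\delta-2)$, giving the window $\{-(\delta-1),\ldots,\delta-1\}$, which involves only $C_0,\ldots,C_{\delta-1}$). The second and third equalities only require $C_{\delta-1}\subseteq T_1$ and are fine, and the distance bound $d\ge2\delta$ — which uses only the first equality together with the $2\delta-1$ consecutive exponents $-(\delta-1),\ldots,\delta-1$ — is unaffected. You should therefore drop the attempt to absorb $C_\delta$, flag the mis-indexed fourth code, and your proof of the remaining assertions stands.
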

\begin{proof}
The proof of Theorem 17 in \cite{Li1} shows that the generator polynomial $g_{(q,q^m+1,\delta,0)}(x)$ has the zeroes $\beta^i$ for all $i$ in the set
\begin{equation*}
  \{q^m-(\delta-2),\ldots,q^m-2,q^m-1,0,1,2,\ldots,\delta-2\}.
\end{equation*}
It then follows that $\C_{(q,q^m+1,\delta,0)}=\C_{(q,q^m+1,2(\delta-1),q^m-(\delta-2))}$ for any positive integer $\delta\geq 2$.
Next, we only need to prove that $\C_{(q^m,q^m+1,\delta,0)}|_{\F_q}
=\C_{(q^m,q^m+1,\delta+1,0)}|_{\F_q}$.
In a manner similar to the proof of Theorem \ref{TH3}, we need the parity-check matrix of $\C_{(q^l,q^l+1,\delta,0)}$ when $\delta\equiv 1~({\rm mod}~q)$. For convenience, take $\delta=q^jt+1$ for some nonnegative integers $j$ and $t$, where $1<t<q$.
Let $\gamma=\beta^{-1}$. Then $\gamma$ is a primitive $(q^m+1)$-th root of unity.
Define
\begin{equation*}
  H=\left(
      \begin{array}{ccccc}
        1 & 1               &   1                  &   \cdots   &   1                               \\
        1 & \gamma          &   \gamma^2           &   \cdots   &   \gamma^{q^m}                    \\
        1 & \gamma^2        &   \gamma^4           &   \cdots   &   \gamma^{2q^m}                   \\
   \vdots & \vdots          &   \vdots             &   \vdots   &   \vdots                          \\
        1 & \gamma^{q^jt-1} &   \gamma^{2(q^jt-1)} &   \cdots   &   \gamma^{(q^jt-1){q^m}}          \\
      \end{array}
    \right).
\end{equation*}
It is easy to check that $H$ is a parity-check matrix of $\C_{(q^m,q^m+1,q^jt+1,0)}$, and
\begin{equation*}
  \C_{(q^m,q^m+1,q^jt+1,0)}|_{\F_q}
  =\{\ccc=(c_1,\ldots,c_{q^m+1})\in\F_{q}^{q^m+1}:\ccc H^{T}=\mathbf{0}\}.
\end{equation*}
 Raising to the $q$-th power both sides of the equation $c_1+c_2\gamma^{q^{j-1}t}+c_3\gamma^{2q^{j-1}t}
 +\cdots+c_{q^m+1}\gamma^{q^{m}q^{j-1}t}=0$ yields
\begin{equation*}\label{BBB}
 c_1+c_2\gamma^{q^{j}t}+c_3\gamma^{2q^{j}t}+
 \cdots+c_{q^m+1}\gamma^{q^{m}q^{j}t}=0.
\end{equation*}
Let $\mathbf{w}=(1,\gamma^{q^{j}t}, \gamma^{2q^{j}t},\cdots,\gamma^{q^{m}q^{j}t}).$
It then follows that \begin{equation*}
  \C_{(q^m,q^m+1,q^jt+1,0)}|_{\F_q}=
  \{\ccc=(c_1,\ldots,c_{q^m+1})\in\F_{q}^{q^m+1}:\ccc H_1^{T}=\mathbf{0}\},
\end{equation*}
where $H_1:=
\left(
  \begin{array}{c}
    H \\
              \mathbf{w}\\
  \end{array}
\right)
.
$ Since $H_1$ is a parity-check matrix of $\C_{(q^m,q^m+1,q^jt+2,0)}$, the desired result is obtained.
\end{proof}
The following conclusion generalizes the result in Theorem \ref{TH10} and gives a lower bound on the minimum distance of the code $\C_{(q,q^m+1,\delta,b)}$ for any $b$.
\begin{corollary}\label{COR10}
If $\delta\equiv1-b~({\rm mod}~q)$, then
$\C_{(q,q^m+1,\delta,b)}=\C_{(q,q^m+1,\delta+1,b)}$. Moreover, the minimum distance of the $\C_{(q,q^m+1,\delta,b)}$ is no less than $\delta+1.$
\end{corollary}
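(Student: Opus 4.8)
The plan is to reduce Corollary \ref{COR10} to Theorem \ref{TH10} by a substitution that shifts the defining set of zeros. Recall that the generator polynomial $g_{(q,q^m+1,\delta,b)}(x)$ has $\beta^b,\beta^{b+1},\ldots,\beta^{b+\delta-2}$ among its roots. I would first observe that $\gamma=\beta^{-1}$ is again a primitive $(q^m+1)$-th root of unity, so everything can be phrased with $\gamma$ in place of $\beta$; alternatively, I would work directly with the parity-check matrix of $\C_{(q^m,q^m+1,\delta,b)}$ over $\F_{q^m}$, whose rows are $(\gamma^{bi},\gamma^{(b+1)i},\ldots,\gamma^{(b+\delta-2)i})_{i=0}^{q^m}$ in the notation of the proof of Theorem \ref{TH10}. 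As in that proof, the key point is that for a codeword $\ccc\in\F_q^{q^m+1}$ lying in the subfield subcode, raising the syndrome equation attached to the exponent $b+\delta-2$ (the last row) to the $q$-th power produces a new valid linear relation attached to the exponent $q(b+\delta-2)$.

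Next I would compute this new exponent modulo $n=q^m+1$ under the hypothesis $\delta\equiv 1-b\pmod q$, i.e. $b+\delta-1\equiv 0\pmod q$, so write $b+\delta-1=qt$ for some integer $t$. Then $q(b+\delta-2)=q(qt-1)=q^2t-q\equiv (b+\delta-1)q/\,\cdot\,$; more carefully, $q(b+\delta-2)=q(b+\delta-1)-q$, and I want to show this exponent, reduced mod $n$, equals $b+\delta-1$, which would give the new row $(\gamma^{(b+\delta-1)i})_i$ and hence extend the designed interval by one. This is exactly the phenomenon exploited in Theorem \ref{TH10} with $b=0$, $\delta=q^jt+1$; the general-$b$ statement follows by the same linear-algebra argument once the congruence bookkeeping is done, so the honest move is to either invoke Theorem \ref{TH10} after the shift $i\mapsto i+b$ or to reprove the one-step extension verbatim. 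Having established $\C_{(q,q^m+1,\delta,b)}=\C_{(q,q^m+1,\delta+1,b)}$, the bound on the minimum distance is immediate: the enlarged code $\C_{(q,q^m+1,\delta+1,b)}$ has $\delta$ consecutive zeros $\beta^b,\ldots,\beta^{b+\delta-1}$, so by the BCH bound (Lemma \ref{lemma2}) its minimum distance, which equals that of $\C_{(q,q^m+1,\delta,b)}$, is at least $\delta+1$.

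I expect the main obstacle to be the modular arithmetic verifying that the $q$-th power of the last exponent lands back inside (one step past) the designed interval, and in particular keeping track of whether the exponents are taken mod $q^m+1$ and stay within the range $[b,b+\delta-1]$ rather than wrapping around — this is where the hypothesis $\delta\equiv 1-b\pmod q$ must be used precisely, mirroring the role of $\delta\equiv 1\pmod q$ in Theorem \ref{TH10}. A secondary technical point is justifying that the new relation is genuinely a new row of a parity-check matrix of $\C_{(q^m,q^m+1,\delta+1,b)}$ (so that Delsarte's theorem, $\C_{(q^m,q^m+1,\delta+1,b)}|_{\F_q}=\C_{(q,q^m+1,\delta+1,b)}$, applies), which is routine given the structure already set up in the proof of Theorem \ref{TH10}. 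Once those are in hand, no further computation is needed.
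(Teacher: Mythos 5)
The paper's own proof of this corollary is omitted (it only says ``similar to Theorem \ref{TH10}''), so you are really reconstructing a suppressed argument, and your overall strategy --- adjoin the exponent $b+\delta-1$ to the defining set by exhibiting it as the $q$-th power of an exponent already present, then apply Delsarte plus the BCH bound --- is the right one. But the concrete step you propose fails: you raise the relation attached to the \emph{last} exponent $b+\delta-2$ to the $q$-th power and hope that $q(b+\delta-2)\equiv b+\delta-1\pmod{q^m+1}$. Writing $b+\delta-1=qt$, that congruence reads $q^2t-q\equiv qt$, i.e.\ $(q-1)t\equiv 1\pmod{q^m+1}$, which is false in general. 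The row actually used in the proof of Theorem \ref{TH10} is the one attached to the exponent $q^{j-1}t=(\delta-1)/q$, not $\delta-2$; the correct analogue here is the exponent $t=(b+\delta-1)/q$, for which $qt=b+\delta-1$ holds \emph{exactly}, with no reduction modulo $n$ at all. With that substitution the one-step extension goes through, and the BCH bound then gives $d\ge\delta+1$ as you say.

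A second point, which you flag as ``bookkeeping'' but which is substantive: for the argument to work, the exponent $t=(b+\delta-1)/q$ must lie in the existing interval $[b,\,b+\delta-2]$ (equivalently, the coset $C_{b+\delta-1}=C_t$ must meet $\{b,\ldots,b+\delta-2\}$). The upper bound $t\le b+\delta-2$ is essentially automatic, but $t\ge b$ amounts to $\delta\ge(q-1)b+1$, which holds for $b\in\{0,1\}$ (the cases used elsewhere in the paper) but not for arbitrary $b$. Indeed, for $q=3$, $m=2$, $b=5$, $\delta=2$ one has $\delta\equiv 1-b\pmod 3$, yet modulo $10$ the cosets are $C_6=C_2=\{2,4,6,8\}$ and $C_5=\{5\}$, so $\C_{(3,10,2,5)}\neq\C_{(3,10,3,5)}$: the statement itself requires this range restriction (or an equivalent hypothesis), and any complete proof must make it explicit rather than leave it as an unexamined obstacle.
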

\begin{proof}
 The proof is similar to that of Theorem \ref{TH10}, and
is omitted here.
\end{proof}
The following lemma is useful in the proof of the subsequent theorem.
\begin{lemma}\cite[p.247]{Betten}\label{Betten}
Let $\C$ be a narrow-sense BCH code of length $n$ with designed distance $\delta$ over $\F_q$. If $\delta$ divides $n$, then the minimum distance $d=\delta$.
\end{lemma}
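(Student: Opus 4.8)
The final statement to prove is Lemma~\ref{Betten}, attributed to \cite[p.247]{Betten}: for a narrow-sense BCH code $\C$ of length $n$ with designed distance $\delta$ over $\F_q$, if $\delta \mid n$, then $d = \delta$.

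The lower bound $d \ge \delta$ is exactly the BCH bound (Lemma~\ref{lemma2}), so the only real work is the matching upper bound $d \le \delta$, which I would obtain by writing down an explicit codeword of weight $\delta$. Since $\delta \mid n$, put $n = \delta e$ with $e$ a positive integer and consider
\[
  c(x) = 1 + x^{e} + x^{2e} + \cdots + x^{(\delta-1)e} \in \F_q[x],
\]
the ``thinned out'' copy of $\frac{x^{n}-1}{x^{e}-1}$. Its degree is $(\delta-1)e < \delta e = n$ and its $\delta$ nonzero coefficients all equal $1$, so the vector it represents in $\F_q^{n}$ has Hamming weight exactly $\delta$; in particular $c(x) \not\equiv 0 \pmod{x^{n}-1}$.

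The heart of the argument is to show $c(x)$ lies in $\C$, i.e.\ that the generator polynomial $g(x) = {\rm lcm}\big(\mathbb{M}_{\beta}(x),\ldots,\mathbb{M}_{\beta^{\delta-1}}(x)\big)$ divides $c(x)$, where $\beta$ is a primitive $n$-th root of unity. Since each $\mathbb{M}_{\beta^{i}}(x)$ is the minimal polynomial of $\beta^{i}$ over $\F_q$ and $c(x) \in \F_q[x]$, it is enough to check $c(\beta^{i}) = 0$ for $1 \le i \le \delta - 1$: then $\mathbb{M}_{\beta^{i}}(x) \mid c(x)$ for each such $i$, hence $g(x) \mid c(x)$. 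Now $\beta^{e}$ has multiplicative order $n/e = \delta$, and since $1 \le i \le \delta - 1$ we have $\delta \nmid i$, so $\beta^{ei}$ is a $\delta$-th root of unity different from $1$; the geometric sum then gives
\[
  c(\beta^{i}) = \sum_{k=0}^{\delta-1}(\beta^{ei})^{k} = \frac{(\beta^{ei})^{\delta} - 1}{\beta^{ei} - 1} = 0,
\]
because $(\beta^{ei})^{\delta} = (\beta^{n})^{i} = 1$. Thus $c(x)$ is a codeword of weight $\delta$, so $d \le \delta$, and combining with the BCH bound we conclude $d = \delta$.

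There is no genuine obstacle here; the two points deserving a moment's care are (i) that $c(x)$ is really a nonzero codeword, which is precisely the degree bound $(\delta-1)e < n$ together with all coefficients being $1 \neq 0$; and (ii) that checking $c(\beta^{i}) = 0$ only for the designed exponents $i = 1,\ldots,\delta-1$ already suffices — which is where one uses $c(x) \in \F_q[x]$, so its root set is closed under $x \mapsto x^{q}$ and hence contains every full $q$-cyclotomic coset $C_{i}$ with $1 \le i \le \delta-1$. Equivalently, one may note $\gcd(q,\delta) = 1$ (from $\gcd(q,n) = 1$ and $\delta \mid n$), which shows directly that no exponent occurring in $\bigcup_{i=1}^{\delta-1} C_{i}$ is a multiple of $\delta$, so the geometric-sum vanishing applies to every root of $g(x)$.
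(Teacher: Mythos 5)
Your proof is correct. The paper states this lemma as a cited result from the textbook of Betten et al.\ and gives no proof of its own, so there is nothing to compare against; your argument --- the BCH bound for $d\geq\delta$ together with the explicit weight-$\delta$ codeword $c(x)=\frac{x^{n}-1}{x^{e}-1}=1+x^{e}+\cdots+x^{(\delta-1)e}$, shown to lie in the code because $\beta^{ie}$ is a nontrivial $\delta$-th root of unity for $1\leq i\leq\delta-1$ --- is the standard textbook proof of this fact, and all the details (the degree bound, the geometric-sum vanishing, and the sufficiency of checking only the designed roots) are handled correctly.
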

Inspired by Lemma \ref{Betten}, we give the following results.
\begin{theorem}\label{TH11}
Let $Char(\F_q)=2$, $m=l_0(2t+1)$ with $l_0$ and $t$ being positive integers and $\delta=q^{l_0}+1$. Then the minimum distance of $C_{(q,q^m+1,\delta,0)}$ is $2\delta.$
\end{theorem}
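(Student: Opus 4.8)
The plan is to combine the identity $\C_{(q,q^m+1,\delta,0)}=\C_{(q,q^m+1,2(\delta-1),q^m-(\delta-2))}$ (which holds for every $\delta\geq 2$ by the computation in the proof of Theorem \ref{TH10}, originally from \cite{Li1}) with the divisibility hypothesis $m=l_0(2t+1)$, and then to invoke Lemma \ref{Betten} after passing to a suitable narrow-sense BCH code whose designed distance divides the length. First I would record that $\delta=q^{l_0}+1$ satisfies $\delta\equiv 1\pmod q$, so Theorem \ref{TH10} applies and gives $\C_{(q,q^m+1,\delta,0)}=\C_{(q,q^m+1,2\delta,q^m-(\delta-1))}$; in particular the true minimum distance is already known to be $d\geq 2\delta$ by the last sentence of Theorem \ref{TH10}, so the whole problem is to prove the matching upper bound $d\leq 2\delta$.

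Next I would establish the key divisibility $\delta=q^{l_0}+1 \mid q^{l_0(2t+1)}+1 = q^m+1 = n$. This is a standard fact: since $2t+1$ is odd, $q^{l_0}+1$ divides $q^{l_0(2t+1)}+1$ (factor $y^{2t+1}+1=(y+1)(y^{2t}-y^{2t-1}+\cdots+1)$ with $y=q^{l_0}$). Alternatively one can cite Lemma \ref{lemma5} with $u=l_0$, $v_2(v)$ large enough — the point is simply that $\delta\mid n$. Having this, I would like to reduce the code $\C_{(q,q^m+1,\delta,0)}$ to a narrow-sense BCH code of designed distance $\delta$ so that Lemma \ref{Betten} produces a codeword of weight exactly $\delta$. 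The cleanest route: the cyclic code $\C_{(q,n,\delta,1)}$ is a narrow-sense BCH code of length $n$ with $\delta\mid n$, so by Lemma \ref{Betten} its minimum distance equals $\delta$; a minimum-weight codeword is (a scalar multiple of) the all-ones vector supported on a sub-coset, i.e. corresponds to the polynomial $\frac{x^n-1}{x^{n/\delta}-1}$ up to the usual identification. I would then exhibit an explicit weight-$2\delta$ codeword of $\C_{(q,q^m+1,\delta,0)}$ by taking the polynomial $f(x)=(x^{n/\delta}-1)\cdot\frac{x^n-1}{\,x^{\delta}-1\,}$ type product, or more directly the codeword associated to $\frac{x^n-1}{x^{n/\delta}-1}$ shifted and combined so that $\beta^i$ for $i\in\{q^m-(\delta-2),\ldots,q^m-1,0,1,\ldots,\delta-2\}$ are all roots; since that set has only $2(\delta-1)$ consecutive-in-pairs elements while $f$ has $2\delta$ terms, one checks the evaluation vanishes at exactly the required powers.

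The main obstacle I anticipate is the second half of the previous paragraph: producing the weight-$2\delta$ codeword cleanly. Lemma \ref{Betten} gives $d=\delta$ for the narrow-sense code $\C_{(q,n,\delta,1)}$, but $\C_{(q,n,\delta,0)}$ has the extra zero $\beta^0$ and the symmetrized zero set $\{q^m-(\delta-2),\dots,\delta-2\}$, so its minimum distance is genuinely larger and I must show it does not exceed $2\delta$. The trick I would use: if $g_1(x)$ denotes the generator polynomial of $\C_{(q,n,\delta,1)}$ and $c(x)$ a minimum-weight (weight $\delta$) codeword, then $c(x)\cdot c^{*}(x)$ — the product with its reciprocal, which lies in the reversible code $\C_{(q,n,\delta,0)}$ because that code's defining set is exactly the symmetric closure $\{-(δ-2),\dots,δ-2\}$ and is therefore closed under $i\mapsto -i$ — has Hamming weight at most $\delta^2$, but by choosing $c(x)=\frac{x^n-1}{x^{n/\delta}-1}$ explicitly (a "thin" codeword supported on an arithmetic progression of $\delta$ points with step $n/\delta$) one computes that $c(x)c^{*}(x)$ collapses to weight exactly $2\delta$ after reduction mod $x^n-1$, using $\mathrm{Char}(\F_q)=2$ so that cross terms pair up and cancel. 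Verifying this collapse is the routine-but-delicate computation at the heart of the proof; once it is in place, $d\leq 2\delta$ follows, and combined with $d\geq 2\delta$ from Theorem \ref{TH10} we conclude $d=2\delta$.
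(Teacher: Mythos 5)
Your overall plan is the right one and matches the paper's in outline: the lower bound $d\geq 2\delta$ comes from Theorem \ref{TH10}, the key arithmetic input is $\delta=q^{l_0}+1\mid q^{l_0(2t+1)}+1=n$ (because $2t+1$ is odd), and what remains is to exhibit an explicit codeword of weight $2\delta$. The gap is in that last step, which you yourself flag as "routine-but-delicate": the construction you propose does not work. With $c(x)=\frac{x^n-1}{x^{n/\delta}-1}=1+x^{n/\delta}+\cdots+x^{(\delta-1)n/\delta}$, the comb $c(x)$ is self-reciprocal, so in characteristic $2$ your product is $c(x)c^*(x)=c(x)^2=c(x^2)$, and since $\delta=q^{l_0}+1$ is odd, multiplication by $2$ permutes $\mathbb{Z}/\delta\mathbb{Z}$, whence $c(x)^2\equiv c(x)\pmod{x^n-1}$. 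The cross terms do not "pair up and cancel" into a weight-$2\delta$ vector; the product collapses back to the weight-$\delta$ comb. Worse, that comb is not even in $\C_{(q,n,\delta,0)}$, because the defining set contains $0$ and $c(1)=\delta=q^{l_0}+1\equiv 1\neq 0$ in characteristic $2$. So the proposed codeword is neither of the right weight nor in the right code.

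The repair — and this is exactly what the paper does, phrased multiplicatively — is to take the indicator vector of the union of \emph{two distinct cosets} of the order-$\delta$ subgroup $\langle\xi\rangle\leq U_{q^m+1}$: in polynomial language, $(1+x^a)\,c(x)$ for any $a$ not divisible by $n/\delta$. The two shifted combs have disjoint supports, so the weight is exactly $2\delta$; the factor $c(x)$ kills $\beta^i$ for all $i$ with $\delta\nmid i$, in particular for $1\leq i\leq \delta-2$; and the factor $(1+x^a)$ kills $\beta^0$ precisely because ${\rm Char}(\F_q)=2$ (equivalently, in the paper's notation, $\sum_j u_j^0=2(q^{l_0}+1)=0$). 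This is where the characteristic-$2$ hypothesis is actually used, not in a cancellation of cross terms of $cc^*$. With this replacement your argument closes; without it, the upper bound $d\leq 2\delta$ is not established.
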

\begin{proof}
According to Theorem \ref{TH10}, $d\geq 2\delta.$ Then $q^m+1$ can be written in the form $(q^{l_0}+1)((q^{l_0})^{2t}+(-1)
\cdot(q^{{l_0}})^{2t-1}+\cdots+(-1)^{2t}\cdot1)$ as $m=l_0(2t+1)$.
Then there exists $\xi\in U_{q^m+1}$, where the order of $\xi$ is $q^{l_0}+1$. Take $u_i=\xi^i$ and $u_{i+q^{l_0}+1}=\xi^i\cdot \alpha$ with $i=1,\ldots,q^{l_0}+1$ and $\alpha\in U_{p^m+1}\backslash \langle\xi\rangle.$ Notice that $\langle\xi\rangle$ denotes the subgroup generated by $\xi$. We have then
$$\left\{
  \begin{array}{ll}
    \sum_{j=1}^{2(q^{l_0}+1)}u_j^0=0,\\
    \sum_{j=1}^{2(q^{l_0}+1)}u_j= 0, \\
    \sum_{j=1}^{2(q^{l_0}+1)}u_j^2=0, \\
    \cdots \\
    \sum_{j=1}^{2(q^{l_0}+1)}u_j^{q^{l_0}}= 0.
  \end{array}
\right.$$
Therefore, there exists a vector $\ccc\in \F_q^{q^m+1}$ with $wt(\ccc)=2\delta$ such that $\ccc H^T=\mathbf{0}$, where $H$ is a parity-check matrix of $\C_{(q^m,q^m+1,\delta,0)}.$
This completes the proof.
\end{proof}

Note that if $Char(\F_q)\neq2$, Theorem \ref{TH11} is not necessarily true since $b=0$.
Compared with Lemma \ref{Betten}, the designed distance $\delta$ in Theorem \ref{TH11} divides the length $q^m+1$, but $2\delta$ does not divide $q^m+1$. When $b$ is not equal to $0$, we have the following result.
\begin{corollary}\label{COR12}
Let $m=l_0(2t+1)$ with $l_0$ and $t$ being positive integers and let $\delta=q^{l_0}+1-b$ with $b\neq0$. Then the minimum distance of $C_{(q,q^m+1,\delta,b)}$ is $\delta+1.$
\end{corollary}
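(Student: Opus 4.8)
The plan is to establish $d=\delta+1$ by proving the two inequalities $d\ge\delta+1$ and $d\le\delta+1$ separately, where $d$ denotes the minimum distance of $\C_{(q,q^m+1,\delta,b)}$.

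For the lower bound I would start from the congruence $\delta=q^{l_0}+1-b\equiv 1-b\pmod q$, which holds because $l_0\ge 1$ forces $q^{l_0}\equiv 0\pmod q$. Corollary \ref{COR10} then applies directly and yields $\C_{(q,q^m+1,\delta,b)}=\C_{(q,q^m+1,\delta+1,b)}$ together with $d\ge\delta+1$. The useful by-product is that the defining set of the code contains the $q$-cyclotomic closure of the block $\{b,b+1,\ldots,q^{l_0}\}$ of $\delta=q^{l_0}+1-b$ consecutive integers; these are precisely the exponents I would feed into the parity-check matrix in the second half.

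For the matching upper bound I would exhibit a single codeword of Hamming weight exactly $\delta+1$, imitating the argument in the proof of Theorem \ref{TH11}. Since $m=l_0(2t+1)$, the integer $q^{l_0}+1$ divides $q^m+1$, so $U_{q^m+1}$ has a cyclic subgroup $\langle\xi\rangle$ of order $q^{l_0}+1$. Using $\C_{(q,q^m+1,\delta+1,b)}=\C_{(q^m,q^m+1,\delta+1,b)}|_{\F_q}$ and the parity-check matrix of $\C_{(q^m,q^m+1,\delta+1,b)}$ (rows indexed by the exponents $k=b,b+1,\ldots,q^{l_0}$, with $\gamma=\beta^{-1}$), it suffices to find pairwise distinct $u_1,\ldots,u_{\delta+1}\in U_{q^m+1}$ and nonzero $c_1,\ldots,c_{\delta+1}\in\F_q$ with $\sum_{s}c_su_s^k=0$ for every $k$ in that range. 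The driving fact is that $\sum_{u\in\langle\xi\rangle}u^k=0$ whenever $(q^{l_0}+1)\nmid k$, which holds for all $b\le k\le q^{l_0}$; I would take the $u_s$ from inside $\langle\xi\rangle$ (deleting $b-1$ of its elements, or combining $\langle\xi\rangle$ with one of its cosets as in Theorem \ref{TH11}) and recover the forced coefficients by a Lagrange/Vandermonde inversion, namely $c_s$ proportional to $u_s^{-b}/\prod_{t\ne s}(u_s-u_t)$.

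The main obstacle is exactly this last step. One must arrange simultaneously that (i) the weight is precisely $\delta+1$ --- the obvious all-ones vector supported on $\langle\xi\rangle$ has weight $q^{l_0}+1$, which equals $\delta+1$ only when $b=1$, so for $b\ge 2$ a genuinely sparser combination is needed --- and (ii) the forced coefficients $c_s$ lie in $\F_q$ and not merely in $\F_{q^{m}}$. I expect that to pin these down one has to choose the deleted points (or the coset representative) to be stable under $x\mapsto x^q$, so that the elementary symmetric data governing $\prod_{t\ne s}(u_s-u_t)$ together with the factors $u_s^{-b}$ is $\F_q$-rational in the right combined way; here the precise shape $\delta=q^{l_0}+1-b$ and the structure of the $q$-cyclotomic cosets of $\{b,\ldots,q^{l_0}\}$ modulo $q^{l_0}+1$ must be exploited. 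Once such a codeword is produced, it meets the lower bound coming from Corollary \ref{COR10}, giving $d=\delta+1$. I would also check the construction against Magma computations for small $q$ and $l_0$ to make sure the weight and the $\F_q$-rationality both come out as required.
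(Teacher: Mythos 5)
Your lower bound is exactly the paper's: $\delta=q^{l_0}+1-b\equiv 1-b\pmod q$, so Corollary \ref{COR10} gives $\C_{(q,q^m+1,\delta,b)}=\C_{(q,q^m+1,\delta+1,b)}$ and $d\ge\delta+1$. For the upper bound the paper offers only the sentence ``similar to Theorem \ref{TH11}'': since $q^{l_0}+1\mid q^m+1$, one takes the subgroup $\langle\xi\rangle$ of order $q^{l_0}+1$ with all coefficients equal to $1$; every check exponent $k\in\{b,b+1,\ldots,q^{l_0}\}$ is nonzero modulo $q^{l_0}+1$, so $\sum_{u\in\langle\xi\rangle}u^k=0$ with no characteristic restriction (that is the point of the remark that there is no limit on ${\rm Char}(\F_q)$ because $b\neq0$). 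This produces a codeword of weight $q^{l_0}+1=\delta+b$.

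You correctly observed that this weight equals $\delta+1$ only when $b=1$, and your proposal leaves the case $b\ge2$ open: the Lagrange/Vandermonde step you sketch (forcing $c_s$ proportional to $u_s^{-b}\prod_{t\ne s}(u_s-u_t)^{-1}$ to be nonzero elements of $\F_q$) is never carried out, so as written your argument only proves the corollary for $b=1$. That is a genuine gap in the proposal. It is, however, a gap that cannot be filled in general, because for $b\ge2$ the statement itself fails: take $q=2$, $l_0=2$, $t=1$ (so $m=6$, $n=65$), $b=2$, $\delta=3$. Then the defining set of $\C_{(2,65,3,2)}$ is $C_2\cup C_3=C_1\cup C_3\supseteq\{1,2,3,4\}$, so the BCH bound already forces $d\ge5>\delta+1=4$. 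Your instinct that ``a genuinely sparser combination is needed'' for $b\ge2$ --- and your inability to produce one --- therefore reflects a defect of the corollary (whose proof the paper omits), not of your method. The case actually used later in the paper (Proposition \ref{PPP}) is $b=1$, i.e.\ $\delta=q^{l_0}$, and there your construction coincides with the paper's and is complete: the all-ones word supported on $\langle\xi\rangle$ has weight $q^{l_0}+1=\delta+1$, matching the lower bound.
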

\begin{proof}
The proof is similar to that of Theorem \ref{TH11}, and
is omitted here. Notice that there is no limit on $Char(\F_q)$ because $b\neq0$.
\end{proof}
Combining Theorem \ref{TH11} and Corollary \ref{COR12} with  Lemma \ref{Li11} and Lemma \ref{Liu11}, the parameters of two families of antiprimitive BCH codes are completely determined.
\begin{proposition}\label{PPP}
Let $m=l_0(2t+1)$ with $l_0$ and $t$ being positive integers and let $\delta=q^{l_0}+1\geq3$. If $Char(\F_q)=2$, then the reversible code $\C_{(q,q^m+1,\delta,0)}$
has parameters
\begin{equation*}
  [q^m+1,q^m-2m(q^{l_0}-q^{l_0-1}),2q^{l_0}+2].
\end{equation*}
If $\delta=q^{l_0}\geq2$, then
the reversible code $\C_{(q,q^m+1,\delta,1)}$
has parameters
\begin{equation*}
  [q^m+1,k,\delta+1],
\end{equation*}
where $k$ is given by Lemma \ref{Liu11}.
\end{proposition}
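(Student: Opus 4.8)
The plan is to obtain the claimed parameters by combining the minimum-distance results already established in this section (Theorem~\ref{TH11} and Corollary~\ref{COR12}) with the dimension formulas of Lemma~\ref{Li11} and Lemma~\ref{Liu11}. The only genuine work is to verify that the hypothesis $m=l_0(2t+1)$ puts the designed distances $\delta=q^{l_0}+1$ and $\delta=q^{l_0}$ inside the ranges where those dimension formulas are valid, together with a short floor evaluation.

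First I would treat the case $\mathrm{Char}(\F_q)=2$ with $\delta=q^{l_0}+1$. Since $m=l_0(2t+1)\geq 3l_0$, one checks that $q^m+1\geq 2\delta$, so Theorem~\ref{TH11} applies and gives the exact minimum distance $d=2\delta=2q^{l_0}+2$. For the dimension I would invoke the second half of Lemma~\ref{Li11}, whose hypothesis is $3\leq\delta\leq q^{\lfloor(m-1)/2\rfloor}+3$; since $\lfloor(m-1)/2\rfloor\geq\lfloor(3l_0-1)/2\rfloor\geq l_0$ (tight only at $l_0=1,2$), one has $q^{\lfloor(m-1)/2\rfloor}+3\geq q^{l_0}+3>\delta\geq 3$, so this holds and the dimension equals $q^m-2m\!\left(\delta-2-\left\lfloor\frac{\delta-2}{q}\right\rfloor\right)$. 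Substituting $\delta-2=q^{l_0}-1$ and $\left\lfloor\frac{q^{l_0}-1}{q}\right\rfloor=q^{l_0-1}-1$ collapses this to $q^m-2m(q^{l_0}-q^{l_0-1})$, which is the asserted value.

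Next I would handle the second statement, with $\delta=q^{l_0}\geq 2$ and the narrow-sense code $\C_{(q,q^m+1,\delta,1)}$. Writing $\delta=q^{l_0}+1-b$ with $b=1\neq 0$, Corollary~\ref{COR12} (which imposes no restriction on the characteristic) yields the exact minimum distance $\delta+1=q^{l_0}+1$. For the dimension I would appeal to Lemma~\ref{Liu11}: its standing hypotheses ($m\geq 3$, and $m\geq 4$ when $m$ is even) are automatic from $m=l_0(2t+1)$, and the range condition $2\leq\delta\leq q^{h+1}$ with $h=\lfloor\frac{m-1}{2}\rfloor$ holds because $h\geq l_0$ gives $q^{h+1}\geq q^{l_0+1}>q^{l_0}=\delta$; hence $k$ is the value displayed in Lemma~\ref{Liu11} (and since moreover $\delta=q^{l_0}\leq q^{h+1}-q$, both the even-$m$ formula and the relevant branch of the odd-$m$ formula give $k=q^m+1-2m(q^{l_0}-q^{l_0-1})$). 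Finally, since $q^m\equiv-1\pmod{q^m+1}$, Lemma~\ref{LCD} shows that both codes are reversible, which completes the proof.

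The only obstacle is bookkeeping: one must check the inequality $\lfloor(m-1)/2\rfloor\geq l_0$ in the small cases $l_0=1,2$ where it is tight, and carry out the floor evaluation $\left\lfloor\frac{q^{l_0}-1}{q}\right\rfloor=q^{l_0-1}-1$ that simplifies the dimension expressions. No new idea beyond the cited lemmas is needed.
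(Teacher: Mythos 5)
Your proposal is correct and follows exactly the route the paper intends: the paper offers no explicit proof beyond the sentence that the result follows by combining Theorem~\ref{TH11} and Corollary~\ref{COR12} with Lemma~\ref{Li11} and Lemma~\ref{Liu11}, and you carry out precisely that combination, supplying the range checks and the floor evaluation $\left\lfloor\frac{q^{l_0}-1}{q}\right\rfloor=q^{l_0-1}-1$ that the paper leaves implicit.
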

The second result in  Proposition \ref{PPP} is trivial as $\delta+1$ divides $q^m+1$, which is a special case of Lemma \ref{Betten}.

\begin{theorem}\label{COR11}
Let $q=2$. Then the following hold:
\begin{itemize}
  \item $\C_{(2,2^m+1,3,0)}=\C_{(2,2^m+1,4,0)}$ if $m\geq3$, and the minimum distance of $\C_{(2,2^m+1,3,0)}$ is $6$.
  \item $\C_{(2,2^m+1,5,0)}=\C_{(2,2^m+1,6,0)}$ if $m\geq4$, and the minimum distance of $\C_{(2,2^m+1,5,0)}$ is $10$ when $m
      \equiv2~({\rm mod}~4)$ and $m\geq 6$.
  \item $\C_{(2,2^m+1,9,0)}=\C_{(2,2^m+1,10,0)}$ if $m\geq5$, and the minimum distance of $\C_{(2,2^m+1,9,0)}$ is $18$ when $m\equiv3~({\rm mod}~6)$ and $m\geq 9$.
\end{itemize}
\end{theorem}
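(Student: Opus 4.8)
The plan is to split each of the three items into two assertions --- the coincidence $\C_{(2,2^m+1,\delta,0)}=\C_{(2,2^m+1,\delta+1,0)}$ and the exact value $d=2\delta$ --- and to dispatch the coincidences together. Since $3\equiv 5\equiv 9\equiv 1\pmod 2$, Theorem~\ref{TH10} applied with $q=2$ and $\delta\in\{3,5,9\}$ gives $\C_{(2,2^m+1,\delta,0)}=\C_{(2,2^m+1,\delta+1,0)}$ whenever $n=2^m+1\ge 2\delta$, and the three hypotheses $m\ge 3$, $m\ge 4$, $m\ge 5$ are precisely the cases $2^m+1\ge 6$, $2^m+1\ge 10$, $2^m+1\ge 18$. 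The same application of Theorem~\ref{TH10} simultaneously yields $d\ge 2\delta$, i.e.\ $d\ge 6$, $d\ge 10$, $d\ge 18$ (improving Lemma~\ref{lem:disance}). So it remains only to exhibit a codeword of weight $2\delta$ in each case.

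For $\delta=5$ and $\delta=9$ I would simply invoke Theorem~\ref{TH11}. Writing $5=2^2+1$ and $9=2^3+1$, Theorem~\ref{TH11} with $\mathrm{Char}(\F_q)=2$, $l_0=2$ (respectively $l_0=3$) and $\delta=q^{l_0}+1$ applies precisely when $m=l_0(2t+1)$ with $t\ge 1$, that is, when $m\equiv 2\pmod 4$ and $m\ge 6$ (respectively $m\equiv 3\pmod 6$ and $m\ge 9$), and it gives $d=2\delta$ outright. Thus the congruence conditions in the statement are just the divisibility conditions needed to run Theorem~\ref{TH11}, and these two items are immediate.

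The item $\delta=3$ is the only one requiring a construction, because $d=6$ is asserted for \emph{every} $m\ge 3$, while Theorem~\ref{TH11} (with $l_0=1$) covers only odd $m$. First I would record that the generator polynomial of $\C_{(2,2^m+1,3,0)}$ is $(x+1)\,m_1(x)$, with $m_1$ the minimal polynomial over $\F_2$ of a primitive $(2^m+1)$-th root of unity; hence a word of weight $w$ lies in the code iff $w$ is even (the zero $\beta^0$) and some $w$ distinct elements of $U_{2^m+1}$ sum to $0$ in $\F_{2^{2m}}$. The lower bound $d\ge 6$ already prevents four distinct elements of $U_{2^m+1}$ from summing to $0$, so it suffices to find six of them with zero sum. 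For this I would count: $\binom{2^m+1}{3}=\frac{1}{6}(2^{2m}-1)\,2^m>2^{2m}$ for all $m\ge 3$, so by pigeonhole two distinct $3$-subsets $A\neq B$ of $U_{2^m+1}$ have equal sum. They must be disjoint --- sharing two points would force $A=B$, and sharing exactly one point would leave four distinct points summing to $0$, contradicting $d\ge 6$ --- so $A\cup B$ is a $6$-subset of $U_{2^m+1}$ whose sum is $2\sum_{a\in A}a=0$. This yields a weight-$6$ codeword, hence $d\le 6$ and $d=6$. (For odd $m$ one may instead use $3\mid 2^m+1$ and take $\F_4^{*}\cup g\,\F_4^{*}$ with $g\in U_{2^m+1}\setminus\F_4^{*}$, both cosets having zero sum.)

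The bulk of the argument is bookkeeping with Theorems~\ref{TH10} and~\ref{TH11}; the one genuinely new point is the even-$m$ half of the $\delta=3$ item, where $U_{2^m+1}$ has no convenient subfield or multiplicative coset to supply a weight-$6$ word. I expect the small case analysis showing that the two equal-sum $3$-subsets are disjoint, together with the inequality $\binom{2^m+1}{3}>2^{2m}$ for $m\ge 3$, to be the only steps that are not pure citation.
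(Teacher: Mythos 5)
Your proposal is correct, and it follows the paper's skeleton exactly for everything except one subcase: the coincidences $\C_{(2,2^m+1,\delta,0)}=\C_{(2,2^m+1,\delta+1,0)}$ and the bounds $d\ge 2\delta$ come from Theorem~\ref{TH10}, and the matching upper bounds for $\delta=5,9$ (and, in the paper, for $\delta=3$ with $m$ odd) come from Theorem~\ref{TH11} via the divisibilities $5\mid 2^{4t+2}+1$ and $9\mid 2^{6t+3}+1$, just as you say. The genuine divergence is the $\delta=3$ item when $m$ is even (so $3\nmid 2^m+1$ and Theorem~\ref{TH11} is unavailable): the paper builds an explicit weight-$6$ word by writing $1+\beta+\beta^h=\beta^t(1+\beta^{h-1}+\beta^h)$ for a trinomial $1+x+x^h$, which requires importing from the Zetterberg-code weight data that $B_3^2=0$ for even $m$ and $B_4^2=0$, and then checking that the six resulting exponents are distinct. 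Your pigeonhole argument --- $\binom{2^m+1}{3}>2^{2m}$ forces two distinct equal-sum $3$-subsets of $U_{2^m+1}$, which must be disjoint because a $2$-point overlap forces equality and a $1$-point overlap would yield four distinct elements of $U_{2^m+1}$ summing to zero, contradicting $d\ge 6$ --- is self-contained, needs no external weight-distribution facts, avoids the distinctness bookkeeping, and in fact handles all $m\ge 3$ uniformly, so the odd/even split becomes unnecessary. What the paper's route buys in exchange is an explicit codeword rather than a nonconstructive existence proof. Both arguments are sound.
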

\begin{proof}
Recall that $\beta$ is a primitive $(2^m+1)$-th root of unity in $\F_{2^{2m}}$, $f^*(x)$ denotes the reciprocal polynomial of $f(x)$ and $B_i^2$ denotes the number of codewords with Hamming weight $i$ of the binary Zetterberg code. Referring to \cite[Table 6.2]{Schoof}, we have that $B_2^2=B_4^2=0$ for any $l$ and $B_3^2=0$ if $l$ is even.

In the first case, from Theorem \ref{TH10}, we have $d\geq6.$ If $m$ is odd, then $3|2^m+1$. According to Theorem \ref{TH11}, $d=6$. If $m$ is even, then $3$ does not divide $2^m+1$. Let $f_1(x):=1+x+x^h$ with $3\leq h<2^{m}$ and $h\neq \frac{2^m+2}{2}$.
Note that $f_1(\beta)$ can not be zero because $B_3^2=0$ when $m$ is even. We can verify that $\frac{f_1^*(\beta)}{f_1(\beta)}$ is an element of $U_{2^m+1}$.
Then there exists $\beta^t\in U_{2^m+1}$ such that
\begin{equation}\label{beta}
  \frac{f_1(\beta)}{f_1^*(\beta)}=\frac{1+\beta+\beta^h}
  {1+\beta^{h-1}+\beta^h}=\beta^t.
\end{equation}
 We claim that $t\neq 0,1,h$. Assume that $t=0$. Then $\beta^{h-2}=1$, which is contrary to $3\leq h<2^{l}$. Assume that $t=1$. Then Eq.(\ref{beta}) is equivalent to  $\beta^{h+1}=1$, which is contrary to $h<2^{l}$.
 Assume that $t=h$. Then Eq.(\ref{beta}) is equivalent to $1+\beta+\beta^{2h-1}+\beta^{2h}=0$, which is contrary to $B_4^2=0.$ Notice that $\beta^{2h-1},\beta^{2h}\notin\{1,\beta\}$ because $h\neq \frac{2^l+2}{2}$. Therefore, $\#\{1,\beta,\beta^h,\beta^t,\beta^{h-1+t},\beta^{h+t}\}=6.$ Take $u_1=1$, $u_2=\beta$, $u_3=\beta^h$, $u_4=\beta^t$, $u_5=\beta^{h-1+t}$ and $u_6=\beta^{h+t}$,
 then $\sum_{i=0}^6u_i=0$, $d=6$.

In the second case, from Theorem \ref{TH10}, we have that $d\geq10.$ If $l=4t+2$, then $2^{4t+2}+1=(4+1)(4^{2t}+(-1)^1\cdot4^{2t-1}+\cdots+(-1)^{2t}\cdot1)$. According to Theorem \ref{TH11}, $d=10$.

In the third case, from Theorem \ref{TH10}, we have that $d\geq18.$ If $s=6t+3$, then $9|2^l+1$. According to Theorem \ref{TH11}, $d=18$.
\end{proof}
\begin{example}\label{example-03}
The following numerical examples were calculated by Magma programs, which coincide with Theorem \ref{COR11}.
\begin{itemize}
\item Let $(q,\delta)=(2,3)$ and $3\leq m\leq 14$, then the minimum distance of the code $\mathcal{C}_{(2,q^m+1,3,0)}$ is $6$.
\item Let $(q,\delta)=(2,5)$ and $m=6,10,14$, then the minimum distance of the code $\mathcal{C}_{(2,q^m+1,5,0)}$ is $10$.
\item Let $(q,\delta)=(2,9)$ and $m=9$, then the minimum distance of the code $\mathcal{C}_{(2,q^m+1,9,0)}$ is $18$.
\end{itemize}
\end{example}
\begin{remark}Since the length $n=q^m+1$ grows exponentially, there are few numerical examples we can verify. When $m=7,8,9,11,12,13$, the minimum distance of $\C_{(2,q^m+1,5,0)}$ is still $10$ but it is hard to prove that the minimum distance of $\C_{(2,q^m+1,5,0)}$ is $10$ for any $m$. Similarly, the minimum distance of $\mathcal{C}_{(2,q^m+1,9,0)}$ is $18$ if $m=10, 11$. There are too few numerical examples to conjecture that
the minimum distance of the code $\mathcal{C}_{(2,q^m+1,9,0)}$ is $18$ for any $m$.
\end{remark}

The next theorem partially proves Conjecture 23 in \cite{Li1}.
\begin{theorem}\label{COR144}
Let $q=3$. We have the following results.
\begin{itemize}
  \item The minimum distance of $\C_{(3,3^m+1,3,0)}$ is $4$ for any $m$.
  \item  $\C_{(3,3^m+1,4,0)}=\C_{(3,3^m+1,5,0)}$, and the minimum distance of $\C_{(3,3^m+1,4,0)}$ is $8$ if $m$ is odd or $m\equiv4~({\rm mod}~8)$ with $m\geq3$.
\end{itemize}
\end{theorem}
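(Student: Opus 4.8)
The plan is to keep the lower bounds that already follow from the paper's tools and to match them by exhibiting explicit weight-$4$, resp.\ weight-$8$, codewords. For the first claim, Lemma~\ref{Li11} (equivalently Lemma~\ref{lem:disance}) gives $d\ge 2(\delta-1)=4$; for the second, Corollary~\ref{COR10} with $b=0$ gives $\C_{(3,3^m+1,4,0)}=\C_{(3,3^m+1,5,0)}$ (since $4\equiv 1\pmod 3$), and then Theorem~\ref{TH10} gives $d\ge 2\cdot4=8$ whenever $n=3^m+1\ge 8$. So in both cases it remains to produce a codeword of the stated weight. For this I would use the subfield-subcode identity $\C_{(3^m,3^m+1,\delta,0)}|_{\F_3}=\C_{(3,3^m+1,\delta,0)}$ together with the description (read off from the proof of Theorem~17 in \cite{Li1}) of the zeroes of $g_{(3,3^m+1,\delta,0)}(x)$ as the $\beta^i$ with $-(\delta-2)\le i\le\delta-2$: a weight-$w$ codeword is exactly a choice of $w$ pairwise distinct $u_1,\dots,u_w\in U_{3^m+1}$ and $c_1,\dots,c_w\in\F_3^{*}$ with $\sum_t c_t u_t^{\,j}=0$ for every such exponent $j$, the $c_t$ being placed on the $w$ coordinates indexed by the $u_t$; this forces $d\le w$.

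For $\delta=3$ the relevant exponents are $\{-1,0,1\}$, and the key point is that $-1\in U_{3^m+1}$ because $3^m+1$ is even. I would fix any $u\in U_{3^m+1}$, pick $v\in U_{3^m+1}\setminus\{u,-u\}$ (possible since $|U_{3^m+1}|\ge 4$), and take the four distinct elements $u,-u,v,-v$ with coefficients $1,1,-1,-1$. The three required identities then read $1+1-1-1=0$, $u-u-v+v=0$ and $u^{-1}-u^{-1}-v^{-1}+v^{-1}=0$, all obvious. This gives a weight-$4$ codeword for every $m$, so $d=4$.

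For $\delta=4$ the common code has zero-exponent set $\{-3,\dots,3\}$; but since the $c_t$ lie in $\F_3$, the Frobenius relation $(\sum_t a_t)^3=\sum_t a_t^3$ gives $\sum_t c_t u_t^{\,3}=(\sum_t c_t u_t)^3$ and likewise with all exponents negated, so the constraints at $j=\pm3$ follow from those at $j=\pm1$, and it suffices to kill $j\in\{-2,-1,0,1,2\}$. When $m$ is odd, $4\mid 3^m+1$, so $U_{3^m+1}$ has a subgroup $H$ of order $4$; I would take the support $H\cup\alpha H$ for some $\alpha\in U_{3^m+1}\setminus H$ (which exists as $|U_{3^m+1}|>4$ for $m\ge 3$), weighting $H$ by $+1$ and $\alpha H$ by $-1$. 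For each $j$ the resulting sum equals $(1-\alpha^{\,j})\sum_{h\in H}h^{\,j}$, which vanishes because $\sum_{h\in H}h^{\,j}=0$ when $4\nmid j$, while $1-\alpha^{0}=0$ when $j=0$. This is a weight-$8$ codeword, so $d=8$ for odd $m\ge 3$.

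The case $m\equiv 4\pmod 8$ is the main obstacle, since then $3^m+1\equiv 2\pmod 4$, no subgroup of order $4$ exists, and the construction above collapses. Writing a candidate weight-$8$ codeword as $A\sqcup B$ with coefficient $+1$ on a $4$-set $A$ and $-1$ on a $4$-set $B$ (the $(4,4)$ split permitted by the $j=0$ condition), Newton's identities in characteristic $3$ reduce the conditions at $j\in\{-2,-1,0,1,2\}$ to $\sigma_{4,2}(A)=\sigma_{4,2}(B)=0$ together with $\sigma_{4,1}(A)=\sigma_{4,1}(B)$ and $\sigma_{4,3}(A)\,\sigma_{4,4}(B)=\sigma_{4,3}(B)\,\sigma_{4,4}(A)$. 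To realise these I would exploit the hypothesis: $m\equiv 4\pmod 8$ makes $m/4$ odd, so $3^4+1\mid 3^m+1$ and $U_{3^m+1}$ contains an element $\xi$ of order $3^4+1=82$; since $82\mid 3^8-1$, the group $\Xi=\langle\xi\rangle$ lies in $\F_{3^8}$, so the whole problem becomes a finite search inside $\F_{3^8}$, uniform in $m$ (for $m=4$ one already has $U_{3^4+1}=\Xi$). One natural route is to take $B=A^{-1}$, which collapses the last two conditions, given $\sigma_{4,2}(A)=0$, to the single identity $\sigma_{4,3}(A)=\sigma_{4,1}(A)\sigma_{4,4}(A)$; one is then left to display an explicit $4$-subset $A\subseteq\Xi$ with $\sigma_{4,2}(A)=0$ and $\sigma_{4,3}(A)=\sigma_{4,1}(A)\sigma_{4,4}(A)$, $\pm1\notin A$ and $A\cap A^{-1}=\emptyset$, so that $A\cup A^{-1}$ really has $8$ elements. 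Pinning down such an $A$ and checking these symmetric-polynomial identities in $\F_{3^8}$ is the technical heart of the matter; once it is done, $d=8$ for all $m\equiv 4\pmod 8$, and the rest of the proof is routine bookkeeping.
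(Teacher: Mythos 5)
Your lower bounds and the identification $\C_{(3,3^m+1,4,0)}=\C_{(3,3^m+1,5,0)}$ are exactly the paper's (Lemma~\ref{Li11}, Theorem~\ref{TH10}, Corollary~\ref{COR10}), and your two explicit constructions are, up to notation, the paper's as well: the weight-$4$ word on $\{u,-u,v,-v\}$ is the paper's quadrinomial $(x^{(3^m+1)/2}+1)(x^{(3^m-1)/2}-1)$ in multiplicative disguise, and for odd $m$ the word supported on $H\cup\alpha H$ with $H$ the order-$4$ subgroup is the paper's octanomial $(x^{(3^m+1)/2}+1)(x^{(3^m+1)/4}+1)(x^{(3^m-3)/4}-1)$. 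Those parts are complete and correct, including the observation that the conditions at $j=\pm3$ follow from those at $j=\pm1$ because $c^3=c$ for $c\in\F_3$.

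The genuine gap is the case $m\equiv4\pmod 8$, which you explicitly leave unfinished. Your reduction to the order-$82$ subgroup $\langle\xi\rangle\subseteq\F_{3^8}$ (valid because $m/4$ odd gives $82=3^4+1\mid 3^m+1$ and $82\mid 3^8-1$) is the same reduction the paper makes and does turn the problem into a finite one independent of $m$; but the theorem is not proved until a concrete $8$-subset satisfying all five power-sum conditions is actually produced, and you stop exactly at that point. (A secondary issue: your Newton-identity ``reduction'' to $\sigma_{4,2}(A)=\sigma_{4,2}(B)=0$ is only a sufficient branch, not an equivalence --- the conditions at $j=\pm1,\pm2$ yield $\sigma_2(A)=\sigma_2(B)$ and $\sigma_2(A)\sigma_4(B)=\sigma_2(B)\sigma_4(A)$, which also admits $\sigma_2(A)=\sigma_2(B)\neq0$ with $\sigma_4(A)=\sigma_4(B)$; this does not invalidate your search strategy, but it should not be stated as a reduction of the conditions.) The paper closes the missing step by exhibiting the explicit octanomial $g_4(x)=x^{34}+x^{33}-x^{27}-x^{20}+x^{14}+x^7-x-1$, supported on $82$nd roots of unity and verified by Magma to vanish at $1$, $\beta^{(3^m+1)/82}$ and $\beta^{(3^m+1)/41}$; some such explicit witness (or an equivalent finite verification in $\F_{3^8}$) is indispensable to complete your argument.
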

\begin{proof}
Let $\beta$ be a generator of $U_{3^l+1}$.
In the first case, from Lemma \ref{Li11}, we have  $d\geq4.$  
There exists a quadrinomial $g_1(x)=(x^{\frac{3^l+1}{2}}+1)\cdot(x^{\frac{3^l-1}{2}}-1)
=x^{3^l}-x^{\frac{3^l+1}{2}}
+x^{\frac{3^l-1}{2}}-1$ which satisfies $g_1(1)=1-1+1-1=0$ and $g_1(\beta)=\beta^{-1}-(-1)+(-\beta^{-1})-1=0$. Take $\{u_1,u_2,u_3,u_4\}=\{\beta^{-1},-1,-\beta^{-1},1\}$ and $\{c_1,c_2,c_3,c_4\}=\{1,-1,1,-1\}$. Then we have
\begin{equation*}
  \left\{
      \begin{array}{ll}
      c_1u_1^0+c_2u_2^0+c_3u_3^0+c_4u_4^0=0,\\
      c_1u_1+c_2u_2+c_3u_3+c_4u_4=0.
      \end{array}
    \right.
\end{equation*}
Therefore, the minimum distance of $\C_{(3,3^l+1,3,0)}$ is $4$.


In the second case, from Theorem \ref{TH10}, we have that $d\geq8.$
If $l$ is odd, then $4|3^l+1$. There exists an eight-term polynomial $g_2(x)=(x^{\frac{3^l+1}{2}}+1)\cdot(x^{\frac{3^l+1}{4}}+1)
\cdot(x^{\frac{3^l-3}{4}}-1)=x^{3^l}-x^{\frac{3(3^l+1)}{4}}
+x^{\frac{3\cdot3^l-1}{4}}-x^{\frac{3^l+1}{2}}+x^{\frac{3^l-1}{2}}
-x^{\frac{3^l+1}{4}}+x^{\frac{3^l-3}{4}}-1$ which satisfies $g_2(1)=0$ and $g_2(\beta)=g_2(\beta^2)=0$.


If  $l\equiv4~({\rm mod}~8)$, then $x^{82}-1|x^{3^l+1}-1$. There exists an eight-term polynomial $g_4(x)=x^{34}+x^{33}-x^{27}-x^{20}+x^{14}+x^7-x-1$ which satisfies $g(x)|x^{82}-1$, $g_4(1)=0$, $g_4({\beta^{\frac{q+1}{82}}})=0$ and $g_4({\beta^{\frac{q+1}{41}}})=0$. We can verify that $\beta^{\frac{q+1}{82}}$ and $\beta^{\frac{q+1}{41}}$ are roots of $g_4(x)$ by Magma.
\end{proof}
\begin{remark}
Computing by Magma programs, the minimum distance of $\C_{(3,3^m+1,5,0)}$ is $8$ when $m=6$. There are too few numerical examples to conjecture that
the minimum distance of the code $\mathcal{C}_{(3,q^m+1,4,0)}$ is $8$ for any $m$.
\end{remark}
With similar discussion as in Theorem \ref{COR144}, it is easy to get the following result.
\begin{corollary}\label{xiu3}
The minimum distance of $\C_{(q,q^m+1,3,0)}$ is $4$ if $q>2$ and $m\geq3$.
\end{corollary}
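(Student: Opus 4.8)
The plan is to combine the lower bound already available with one explicit weight-$4$ codeword. By Lemma~\ref{Li11} (equivalently Lemma~\ref{lem:disance}) the code $\C_{(q,q^m+1,3,0)}$ has minimum distance $d\ge 2(\delta-1)=4$, so the whole task is to produce a single codeword of Hamming weight exactly $4$. Writing $n=q^m+1$, such a codeword amounts to four pairwise distinct elements $u_1,u_2,u_3,u_4\in U_{n}$ and coefficients $c_1,\dots,c_4\in\F_q^*$ with $\sum_j c_j u_j^{\,k}=0$ for $k\in\{-1,0,1\}$, exactly as in the proof of Theorem~\ref{COR144}. Since the $c_j$ lie in $\F_q$, I would first observe that the case $k=-1$ follows from the cases $k=0,1$: applying the Frobenius $x\mapsto x^{q^m}$, which fixes $\F_q$ and inverts every element of $U_n$, turns $\sum_j c_j u_j=0$ into $\sum_j c_j u_j^{-1}=0$. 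So only $\sum_j c_j=0$ and $\sum_j c_j u_j=0$ have to be arranged, and since a weight-$\le 3$ codeword would contradict $d\ge 4$, any such $4$-tuple then automatically gives weight exactly $4$.

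For $q$ odd I would simply generalise the quadrinomial from the first bullet of Theorem~\ref{COR144}. Because $\beta$ has order $n=q^m+1$, the element $\beta^{(q^m+1)/2}$ has order $2$ and hence equals $-1$, so
\[ g(x)=\bigl(x^{(q^m+1)/2}+1\bigr)\bigl(x^{(q^m-1)/2}-1\bigr)=x^{q^m}-x^{(q^m+1)/2}+x^{(q^m-1)/2}-1 \]
satisfies $g(1)=0$ and $g(\beta)=g(\beta^{-1})=0$; its four exponents $q^m>(q^m+1)/2>(q^m-1)/2>0$ are distinct and smaller than $n$, and its coefficients $\pm1$ are nonzero in $\F_q$, so $g$ represents a weight-$4$ codeword and $d\le 4$.

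For $q$ even the order of $U_n$ is odd, so there is no analogue of $-1$ and the quadrinomial fails; here I would argue by counting. First I would show that no four distinct elements of $U:=U_n$ sum to $0$ in $\F_{q^{2m}}$: if $u+u'+w+w'=0$ with $u,u',w$ distinct, then applying $x\mapsto x^{q^m}$ gives $u^{-1}+u'^{-1}+w^{-1}+w'^{-1}=0$ as well, and eliminating $w'$ between the two relations and clearing denominators yields $(u+u')(u+w)(u'+w)=0$, contradicting distinctness. Hence the pair-sum map $\{u,u'\}\mapsto u+u'$ is injective on the $2$-subsets of $U$, so its image $D\subseteq\F_{q^{2m}}^*$ has $\binom{n}{2}=n(n-1)/2$ elements. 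Choosing $\alpha\in\F_q\setminus\{0,1\}$ (possible since $q\ge 4$), the sets $D$ and $\alpha D$ have sizes summing to $n(n-1)>n(n-2)=q^{2m}-1=|\F_{q^{2m}}^*|$, so they intersect: there are pairs $\{u_1,u_2\}$ and $\{u_3,u_4\}$ with $u_1+u_2=\alpha(u_3+u_4)$, and taking $c_1=c_2=1$, $c_3=c_4=\alpha$ gives $\sum_j c_j=0$ and $\sum_j c_j u_j=0$. If the two pairs shared an element, the resulting word would have weight at most $3$ (here one uses $\alpha\ne 1$), contradicting $d\ge 4$; so $u_1,\dots,u_4$ are distinct and we again obtain a weight-$4$ codeword. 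In either parity $d\le 4$, hence $d=4$.

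I expect the $q$ even case to be the only genuine obstacle: for $q$ odd the argument is a one-line generalisation of what is already in Theorem~\ref{COR144}, whereas for $q$ even one must establish the ``no four $n$-th roots of unity sum to zero'' identity and exclude the degenerate overlapping pairs, which is precisely where the reasoning leaves the template of Theorem~\ref{COR144}. A minor additional point to record is that the hypothesis $m\ge 3$ is only there for uniformity with the rest of the section; the construction above in fact works for every $m\ge 1$.
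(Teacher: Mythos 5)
Your proof is correct, and it splits into two halves of very different status relative to the paper. For odd $q$ you are doing exactly what the paper does: the printed justification of Corollary~\ref{xiu3} is only the phrase ``similar discussion as in Theorem~\ref{COR144}'', and your quadrinomial $\bigl(x^{(q^m+1)/2}+1\bigr)\bigl(x^{(q^m-1)/2}-1\bigr)$ is the verbatim generalisation of the $q=3$ construction used there. For even $q\ge 4$ you depart from the paper out of necessity: since $q^m+1$ is odd, $-1\notin U_{q^m+1}$ and the quadrinomial template simply does not exist, so the ``similar discussion'' the paper appeals to has no obvious content in this case (note that both examples following the corollary have $q$ odd). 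Your replacement argument is sound: the identity $(u+u')(u+w)(u'+w)=0$ showing that no four distinct elements of $U_{q^m+1}$ sum to zero in characteristic $2$ is the same computation the paper carries out inside Theorem~\ref{Theorem-05} to exclude weight-$4$ words of $\C_{(2,2^m+1,2,1)}$ (equivalently $B_4^2=0$ for the Zetterberg code); it makes the pair-sum map injective, so $|D|=\binom{q^m+1}{2}$ and $|D|+|\alpha D|>q^{2m}-1$ forces $D\cap\alpha D\ne\emptyset$, yielding the word with coefficients $1,1,\alpha,\alpha$. The remaining checks are all in order: the condition at $\beta^{-1}$ follows from the one at $\beta$ by applying $x\mapsto x^{q^m}$, and overlapping pairs are excluded because they would produce a weight-$3$ codeword against the lower bound $d\ge 4$ from Lemma~\ref{Li11}. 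In short, your write-up matches the paper for odd $q$ and supplies a genuinely new counting argument that fills the gap the paper leaves open for even $q>2$.
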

\begin{example}\label{example-06}
We have the following examples from Corollary \ref{xiu3}.
\begin{itemize}
\item Let $q=3$ and $m=3$, then the code $\mathcal{C}_{(3,28,3,0)}$ has parameters $[28,21,4]$.
\item Let $q=5$ and $m=3$, then the code $\mathcal{C}_{(5,126,3,0)}$ has parameters $[126,119,4]$.
\end{itemize}
These  codes are optimal according to the tables of best known codes in \cite{Gra}. These results are verified by Magma programs.
\end{example}


Let us focus on the case where $b=1$ in the last part.
From \cite{Dobbertin06}, it is easy to get the following lemma.
\begin{lemma}\label{lambda}
Let $q=2^t$, $t\geq 1$ and $\mathcal{H}=\{x \in \F_{q^m}^*\,:\,{\rm Tr}_2^{q^m}(x^{-1})=1\}$,  then $\mathcal{H}$ can be written as
$$\mathcal{H}=\{\lambda+{\lambda}^{-1}\,:\,\lambda \in U_{q^m+1}\setminus \{1\}\}.$$
\end{lemma}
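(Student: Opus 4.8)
The plan is to study the map $\Phi\colon\lambda\mapsto\lambda+\lambda^{-1}$ on $U_{q^m+1}$ directly and identify its image, using the classical solvability criterion for quadratic equations in characteristic two. First I would record the elementary facts I need. Since $q$ is even, $\gcd(q^m-1,q^m+1)=1$, so $U_{q^m+1}\cap\F_{q^m}^*=\{1\}$. For $\lambda\in U_{q^m+1}\subseteq\F_{q^{2m}}^*$ we have $\lambda^{q^m}=\lambda^{-1}$, hence $\Phi(\lambda)=\lambda+\lambda^{q^m}$ is fixed by the Frobenius automorphism $x\mapsto x^{q^m}$ of $\F_{q^{2m}}$ over $\F_{q^m}$ and therefore lies in $\F_{q^m}$; moreover, in characteristic two $\Phi(\lambda)=0$ is equivalent to $\lambda^2=1$, i.e.\ to $\lambda=1$, so $\Phi$ restricts to a map from $U_{q^m+1}\setminus\{1\}$ into $\F_{q^m}^*$.

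Next I would invoke two standard facts about the polynomial $X^2+yX+1$ over $\F_{q^m}$ with $y\in\F_{q^m}^*$: (i) it has both roots in $\F_{q^m}$ exactly when $\mathrm{Tr}_2^{q^m}(1/y^2)=0$, and otherwise it is irreducible over $\F_{q^m}$ with its two roots forming a conjugate pair in $\F_{q^{2m}}\setminus\F_{q^m}$ interchanged by $x\mapsto x^{q^m}$; (ii) $\mathrm{Tr}_2^{q^m}(1/y^2)=\mathrm{Tr}_2^{q^m}(y^{-1})$, since the trace is invariant under squaring. Note also that $\lambda$ is a root of $X^2+yX+1$ precisely when $\lambda+\lambda^{-1}=y$: by Vieta the roots multiply to $1$, so they are $\lambda$ and $\lambda^{-1}$, and they sum to $y$.

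With these in hand the two inclusions are short. For $\subseteq$: take $\lambda\in U_{q^m+1}\setminus\{1\}$ and set $y=\Phi(\lambda)\in\F_{q^m}^*$; then $\lambda$ is a root of $X^2+yX+1$, but this polynomial cannot split over $\F_{q^m}$, for otherwise $\lambda\in\F_{q^m}^*\cap U_{q^m+1}=\{1\}$, a contradiction; hence it is irreducible, so $\mathrm{Tr}_2^{q^m}(y^{-1})=\mathrm{Tr}_2^{q^m}(1/y^2)=1$ and $y\in\mathcal H$. For $\supseteq$: take $y\in\mathcal H$, so $\mathrm{Tr}_2^{q^m}(1/y^2)=\mathrm{Tr}_2^{q^m}(y^{-1})=1$; then $X^2+yX+1$ is irreducible over $\F_{q^m}$, its roots form a conjugate pair $\{\lambda,\lambda^{q^m}\}$ with product $1$, whence $\lambda^{q^m+1}=1$, so $\lambda\in U_{q^m+1}$, and $\lambda\notin\F_{q^m}$ forces $\lambda\neq1$; finally $\lambda+\lambda^{-1}=\lambda+\lambda^{q^m}=y$, so $y$ lies in the right-hand set.

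The argument is essentially bookkeeping, and I do not anticipate a genuine obstacle; the only points needing care are the degenerate case $\lambda=1\leftrightarrow y=0$ (handled by $\gcd(q^m-1,q^m+1)=1$ and by working inside $\F_{q^m}^*$) and keeping track of which ambient field ($\F_{q^m}$ or $\F_{q^{2m}}$) each element belongs to. If one preferred not to quote the quadratic solvability criterion, the substitution $\mu=\lambda/y$ turns $X^2+yX+1=0$ into the Artin--Schreier equation $\mu^2+\mu=1/y^2$, from which the same conclusions follow directly from elementary properties of $\mathrm{Tr}_2^{q^m}$.
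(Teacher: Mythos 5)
Your proof is correct. The paper itself offers no argument here --- it simply states that the lemma ``is easy to get'' from the cited reference \cite{Dobbertin06} --- so your write-up supplies the missing details, and it does so by exactly the standard route underlying that citation: the observation that $\lambda\mapsto\lambda+\lambda^{-1}=\lambda+\lambda^{q^m}$ maps $U_{q^m+1}$ into $\F_{q^m}$, the identification of $y=\lambda+\lambda^{-1}$ with the statement that $\lambda,\lambda^{-1}$ are the roots of $X^2+yX+1$, and the Artin--Schreier criterion $\mathrm{Tr}_2^{q^m}(1/y^2)=\mathrm{Tr}_2^{q^m}(y^{-1})$ deciding whether that quadratic splits over $\F_{q^m}$ or has its roots forming a conjugate pair in $U_{q^m+1}\setminus\{1\}$. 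All the delicate points (the case $\lambda=1\leftrightarrow y=0$, the fact that $U_{q^m+1}\cap\F_{q^m}^*=\{1\}$ because $q$ is even, and the invariance of the absolute trace under squaring) are handled correctly.
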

\begin{theorem}\label{Theorem-05}
Let $m>3$, then the code $\C_{(q,q^m+1,2,1)}$ has parameters $[q^m+1,q^m+1-2m,d]$, where
\begin{eqnarray*}
d=\begin{cases}
2,& \text{ if $q$ is odd},\\
3,& \text{ if $Char(\F_q)=2$ and $m$ is odd},  \\
4,& \text{ if $Char(\F_q)=2$, $q>2$ and $m$ is even}, \\
5,& \text{ if $q=2$ and $m\equiv 2 \pmod 4$},  \\
5\,\,\text{or}\,\, 6, & \text{ if $q=2$ and $m\equiv0 \pmod 4$}.
\end{cases}
\end{eqnarray*}
\end{theorem}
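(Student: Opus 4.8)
The plan is to identify the code explicitly, translate "weight-$w$ codeword" into a statement about additive relations in the circle group $U_{q^m+1}$, and then run through the five regimes by producing matching lower and upper bounds on the shortest nontrivial such relation. First I would note that for $\delta=2$ and $b=1$ the generator polynomial is simply $g_{(q,q^m+1,2,1)}(x)=m_1(x)$; since the order of $q$ modulo $q^m+1$ is $2m$, one checks $|C_1|=2m$, so $\deg g=2m$ and the dimension is $q^m+1-2m$ in every case. A vector $\ccc\in\F_q^{q^m+1}$ lies in the code iff the associated polynomial vanishes at $\beta$, so a weight-$w$ codeword corresponds exactly to a choice of pairwise distinct $u_1,\dots,u_w\in U_{q^m+1}$ together with $c_1,\dots,c_w\in\F_q^*$ satisfying $\sum_{j=1}^w c_ju_j=0$, and conversely.

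With this dictionary I would handle the lower bounds. No weight-$1$ relation exists, so $d\ge 2$ always. A weight-$2$ relation forces $u_1/u_2\in U_{q^m+1}\cap\F_q^*$, a group whose order divides $\gcd(q^m+1,q-1)=\gcd(2,q-1)$; when $q$ is odd this is $2$ and $-1\in U_{q^m+1}\setminus\{1\}$ (as $q^m+1$ is even), so $1+(-1)=0$ gives $d=2$, settling the odd case, while for $q$ even the gcd is $1$ and hence $d\ge 3$. Assuming moreover that $m$ is even, I would rule out weight $3$: normalizing a relation $c_1u_1+c_2u_2+c_3u_3=0$ to $av_1+bv_2=1$ with $a,b\in\F_q^*$ and distinct $v_1,v_2\in U_{q^m+1}\setminus\{1\}$, and applying the Frobenius $x\mapsto x^{q^m}$ (which fixes $\F_q$ and inverts the elements of $U_{q^m+1}$), one also gets $av_1^{-1}+bv_2^{-1}=1$; substituting $v_2=b^{-1}(1+av_1)$ into $v_2v_2^{q^m}=1$ collapses to $v_1+v_1^{-1}=a^{-1}(1+a+b)^2\in\F_q$ (note $a+b\ne 1$, else $v_1=1$), so $v_1$ is a root of a quadratic over $\F_q$, whence $v_1\in U_{q^m+1}\cap\F_{q^2}^*$; but $\gcd(q^m+1,q^2-1)=1$ for $q$ even and $m$ even, forcing $v_1=1$, a contradiction, so $d\ge 4$ there. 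For $q=2$ and $m$ even I would instead quote that the binary Zetterberg code has $B_3^2=B_4^2=0$ (Schoof's table, exactly as used in the proof of Theorem \ref{COR11}), giving $d\ge 5$.

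The matching upper bounds come from explicit codewords. If $q$ is even and $m$ is odd then $q+1\mid q^m+1$, so $U_{q+1}\subseteq U_{q^m+1}$; for any $v\in U_{q+1}\setminus\{1\}$ we have $t:=v+v^{-1}=v+v^q\in\F_q^*$ and $t\cdot 1+v+v^{-1}=0$ is a weight-$3$ relation among the distinct elements $1,v,v^{-1}$, so $d=3$. If $q=2^t$ with $t\ge 2$ and $m$ is even, I would use Lemma \ref{lambda}: the set $\mathcal H=\{\lambda+\lambda^{-1}:\lambda\in U_{q^m+1}\setminus\{1\}\}$ has $q^m/2$ elements, which exceeds the number $(q^m-1)/(q-1)$ of multiplicative $\F_q^*$-cosets of $\F_{q^m}^*$ once $q\ge 4$; hence two distinct elements $\theta+\theta^{-1}\ne\mu+\mu^{-1}$ of $\mathcal H$ share a coset, giving $e\in\F_q^*$ with $(\theta+\theta^{-1})+e(\mu+\mu^{-1})=0$, a weight-$4$ relation among the four distinct elements $\theta,\theta^{-1},\mu,\mu^{-1}$; with $d\ge 4$ this yields $d=4$. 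If $q=2$ and $m\equiv 2\pmod{4}$ then $5\mid 2^m+1$, so $U_{q^m+1}$ has an order-$5$ subgroup whose five elements sum to $0$, a weight-$5$ codeword, and with $d\ge 5$ we get $d=5$. Finally, if $q=2$ and $m\equiv 0\pmod{4}$ then $\C_{(2,2^m+1,3,0)}\subseteq\C_{(2,2^m+1,2,1)}$ (the former's generator polynomial $(x+1)m_1(x)$ is a multiple of $m_1(x)$) and has minimum distance $6$ by Theorem \ref{COR11}, so $d\le 6$; together with $d\ge 5$ this gives $d\in\{5,6\}$.

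I expect the main obstacle to be the weight-$3$ exclusion for $q$ even and $m$ even: the normalization and the Frobenius substitution must be arranged so that the identity $v_1+v_1^{-1}=a^{-1}(1+a+b)^2$ drops out cleanly, and one must check that all degenerate possibilities (some $c_i=0$, or $a+b=1$, or $v_1$ lying in $\F_q$) are covered. The other delicate point is that for $q=2$ with $m\equiv 0\pmod{4}$ there is no visible order-$5$ subgroup of $U_{2^m+1}$ and hence no obvious weight-$5$ codeword, so deciding between $d=5$ and $d=6$ there appears to require a finer control of the weight-$5$ codewords of the binary Zetterberg code, which is why the statement only asserts $d\in\{5,6\}$ in that subcase.
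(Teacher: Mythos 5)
Your proposal is correct, and in several sub-cases it follows a genuinely different route from the paper's proof, so a comparison is worthwhile. The paper obtains its upper bounds $d\le 4$ (for $q>2$ even, $m$ even) and $d\le 6$ (for $q=2$) from the sphere-packing bound, and it establishes $d=3$ for even $q$ and odd $m$, as well as the exclusion of weight-$3$ words for even $m$, by reducing to the equation $u_1+u_1^{-1}=(1+a_0^2+a_1^2)/(a_0a_1)$ and applying the trace criterion of Lemma \ref{lambda}; for $q=2$ and even $m$ it also rules out weight-$4$ words by a direct computation with $1+u_1+u_2+u_3=0$. You instead produce explicit codewords throughout: the relation $(v+v^{-1})\cdot 1+v+v^{-1}=0$ with $v\in U_{q+1}\subseteq U_{q^m+1}$ for the weight-$3$ word when $m$ is odd (which is cleaner than the paper's appeal to ${\rm Tr}_2^{q^m}(1)=1$, an identity that as written only holds when $q=2^t$ with $tm$ odd, whereas your construction covers all even $q$ uniformly); a pigeonhole argument on the $q^m/2$ elements of $\mathcal H$ versus the $(q^m-1)/(q-1)$ multiplicative $\F_q^*$-cosets to manufacture a weight-$4$ word when $q\ge 4$; and the containment $\C_{(2,2^m+1,3,0)}\subseteq\C_{(2,2^m+1,2,1)}$ together with Theorem \ref{COR11} to get $d\le 6$. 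Your weight-$3$ exclusion for even $q$ and even $m$ — forcing $v_1+v_1^{-1}\in\F_q^*$, hence $v_1\in U_{q^m+1}\cap\F_{q^2}^*$, which is trivial since $\gcd(q^m+1,q^2-1)=1$ — reaches the same conclusion as the paper's trace computation by a slightly more structural path. The one place you lean on an external fact where the paper is self-contained is the bound $d\ge 5$ for $q=2$ and even $m$: you cite $B_4^2=0$ from Schoof's table for the Zetterberg code (which is legitimate, and indeed the paper itself invokes exactly this fact in Theorem \ref{COR11}), whereas the paper's proof of this theorem eliminates weight-$4$ words by hand. All degenerate cases you flag (coefficients vanishing, $a+b=1$, $v_1\in\F_q$) are correctly disposed of, so there is no gap.
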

\begin{proof}
The dimension of the code follows from the fact $|C_1|=2m$. It is easily seen that $d>1$. Assume that there is a
 codeword $\mathbf{c}=(c_0,c_1,\ldots,c_{q^m})\in \C_{(q,q^m+1,2,1)}$, then
$$\sum_{i=0}^{n-1}c_i\beta^i=0.$$
When $q$ is odd, it is obvious that $d=2$ since $1+\beta^{\frac{q^m+1}{2}}=0$. We next prove the results for $q$ being
 an even prime power from the following two cases.

\noindent{\bf Case 1:} $q=2$. From the BCH bound and the sphere-packing bound, we obtain that
$3\leq d \leq 6.$
If $d=3$, then there exists $\{u_1,u_2\}\in{U_{2^m+1}\choose 2}\setminus\{1\}$ such that
\begin{equation}\label{xiu1}
1+u_1+u_2=0.
\end{equation}
Raising both sides of Eq.(\ref{xiu1})
to $(2^m+1)$-th power, we have
\begin{equation*}
1+u_1+u_1^{-1}=0.
\end{equation*} From Lemma \ref{lambda}, there exists $u_1\in U_{2^m+1}\setminus\{1\}$ such that $u_1+u_1^{-1}=1$ if and only if ${\rm Tr}_2^{2^m}(1)=1$. Hence,
 $d=3$ if and only if $m$ is odd. Otherwise, we have $d>3$.

If there is a codeword in $\C_{(2,2^m+1,2,1)}$ with Hamming weight $4$, then there exists $\{u_1,u_2,u_3\}\\ \in{U_{2^m+1}\choose 3}\setminus\{1\}$ such that
\begin{equation}\label{beta2}
1+u_1+u_2+u_3=0.
\end{equation}
Raising both sides of Eq.(\ref{beta2})
to $(2^m+1)$-th power, then
\begin{equation*}
\begin{split}
1&=\left(1+u_1^{-1}+u_2^{-1} \right)\left(1+u_1+u_2\right)\\
&=1+{\rm Tr}_{2^{\frac{m}{2}}}^{2^m}\left(u_1+u_2\right)+u_1u_2^{-1}+u_1^{-1}u_2,
\end{split}
\end{equation*}
which means that
\begin{equation}\label{beta3}
{\rm Tr}_{2^{\frac{m}{2}}}^{2^m}\left(u_1+u_2\right)={\rm Tr}_{2^{\frac{m}{2}}}^{2^m}\left(u_1u_2^{-1}\right).
\end{equation}
 From Eq.(\ref{beta2}) we have
\begin{equation}\label{beta4}
{\rm Tr}_{2^{\frac{m}{2}}}^{2^m}\left(u_1+u_2\right)={\rm Tr}_{2^{\frac{m}{2}}}^{2^m}\left(1+u_3\right)={\rm Tr}_{2^{\frac{m}{2}}}^{2^m}\left(u_3\right).
\end{equation}
Combining Eqs.(\ref{beta3}) and (\ref{beta4}), it is easily seen that
$$u_1u_2^{-1}+u_1^{-1}u_2=u_3+u_3^{-1},$$
which implies that $u_3=u_1u_2^{-1}$ or $u_3=u_1^{-1}u_2$. Without loss of generality, we assume that $u_3=u_1u_2^{-1}$. Then Eq.(\ref{beta2}) becomes
$\left(u_1+u_2\right)\left(1+u_2^{-1}\right)=0,$
contradicting our assumption. Hence, $d=5$ or $6$ if $m$ is even.

If $m\equiv 2 \pmod 4$, it is clear that $5 \, | \, 2^m+1$. Then
$$0=1-\beta^{2^m+1}=\left(1-\beta^{\frac{2^m+1}{5}}\right)\left(1+ \beta^{\frac{2^m+1}{5}}+ \beta^{\frac{2^{m+1}+2}{5}}+ \beta^{\frac{3\cdot2^m+3}{5}}+ \beta^{\frac{2^{m+2}+4}{5}}\right),$$
which means that $1+ \beta^{\frac{2^m+1}{5}}+ \beta^{\frac{2^{m+1}+2}{5}}+ \beta^{\frac{3\cdot2^m+3}{5}}+
\beta^{\frac{2^{m+2}+4}{5}}=0$ since $1-\beta^{\frac{2^m+1}{5}}\neq 0$. Hence, $d= 5$ if $m\equiv 2 \pmod 4$.

\noindent{\bf Case 2:} $q\neq 2$ is an even prime power. From the BCH bound and the sphere-packing bound, we can check that
\begin{equation}\label{eq:d34}
3\leq d \leq 4.
\end{equation}
If $d=3$, then there exist $a_0$, $a_1\in \F_q^*$ and $\{u_1,u_2\}\in{U_{q^m+1}\choose 2}\setminus\{1\}$ such that
\begin{equation}\label{eq:beta-0}
a_0+a_1{u_1}+{u_2}=0.
\end{equation}
If $m$ is odd, let $a_0=a_1=1$, then Eq.(\ref{eq:beta-0}) becomes $1+u_1={u_2}.$ Repeating the steps above, we have
\begin{equation*}
1+{u_1}+{u_1}^{-1}=0.
\end{equation*} From Lemma \ref{lambda}, there exists $u_1\in U_{q^m+1}\setminus\{1\}$
such that $u_1+{u_1}^{-1}=1$. Hence, $d=3$.

If $m$ is even, from Eq.(\ref{eq:beta-0}) we have
$$\left(a_0+a_1{u_1}\right)^{q^m+1}=\left({u_2}\right)^{q^m+1}=1,$$
which implies that
\begin{equation}\label{eq:beta}
{u_1}+{u_1}^{-1}=\frac{1+a_0^2+a_1^2}{a_0a_1}.
\end{equation}
It is clear that for any $a_0,a_1\in \F_q^*$, let $q=2^t$, we have
$${\rm Tr}_2^{2^{tm}}\left(\frac{1+a_0^2+a_1^2}{a_0a_1}\right)=m\cdot{\rm Tr}_2^{2^t}\left(\frac{1+a_0^2+a_1^2}{a_0a_1}\right)=0.$$
From Lemma \ref{lambda}, there does not exist $u_1\in U_{q^m+1}\setminus\{1\}$ such that Eq.(\ref{eq:beta}) holds.
Combining this with Eq.(\ref{eq:d34}), we have $d=4$ if $m$ is even.
\end{proof}

\begin{example}\label{example-04}
The following numerical examples were calculated by Magma programs, which coincide with Theorem \ref{Theorem-05}.
\begin{itemize}
\item Let $q=2$ and $m=4$, then the code $\mathcal{C}_{(2,17,2,1)}$ has parameters $[17,9,5]$.
\item Let $q=2$ and $m=6$, then the code $\mathcal{C}_{(2,65,2,1)}$ has parameters $[65,53,5]$.
\item Let $q=4$ and $m=2$, then the code $\mathcal{C}_{(4,17,2,1)}$ has parameters  $[17,13,4]$.
\end{itemize}
\end{example}
\begin{remark}
In \cite[Theorem 18]{Gra}, the authors showed that the minimum distance of $\C_{(q,q^m+1,2,1)}$ is at least $2$.
 In Theorem \ref{Theorem-05}, we give the more precise Hamming distance of $\C_{(q,q^m+1,2,1)}$.
\end{remark}

The next corollary partially proves the open problem about the minimum distance of $\C_{(q,q+1,3,1)}|_{\F_3}$ which is mentioned in the last paragraph in \cite[Section 4]{D5}.
\begin{theorem}\label{COR15}
Let $m$ be a positive integer. Then the code $\C_{(3,3^m+1,3,1)}$ has parameters $[3^m+1,3^m+1-4m,d]$, where
\begin{equation*}
  d=\left\{
      \begin{array}{ll}
        4, & \hbox{$m$~~is~~odd,} \\
        5, & \hbox{$m\equiv2~({\rm mod}~4)$,} \\
        6, & \hbox{$m\equiv4~({\rm mod}~8)$,}\\
        \geq6, & \hbox{$m\equiv0~({\rm mod}~8)$}.
      \end{array}
    \right.
\end{equation*}
\end{theorem}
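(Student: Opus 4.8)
The plan is to read off the dimension from the two relevant cyclotomic cosets and then to pin down $d$ by a weight‑by‑weight analysis of the codewords, viewed inside the cyclic group $U:=U_{3^{m}+1}\subseteq\F_{3^{2m}}^{*}$. Since $3^{m}\equiv-1\pmod{3^{m}+1}$, one checks $|C_{1}|=|C_{2}|=2m$ and $C_{1}\cap C_{2}=\emptyset$ (the only way $C_{1}=C_{2}$ could hold is $3^{j}=3^{m}-1$ for some $j$, which is impossible), so $\deg g_{(3,3^{m}+1,3,1)}=4m$ and $\dim\C_{(3,3^{m}+1,3,1)}=3^{m}+1-4m$. A weight‑$w$ codeword is exactly a choice of pairwise distinct $u_{1},\dots,u_{w}\in U$ and signs $\epsilon_{1},\dots,\epsilon_{w}\in\{1,-1\}$ with $\sum_{i}\epsilon_{i}u_{i}=0$ and $\sum_{i}\epsilon_{i}u_{i}^{2}=0$. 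The recurring tool is \emph{conjugation}: because $u^{3^{m}}=u^{-1}$ for $u\in U$ and $c^{3^{m}}=c$ for $c\in\F_{3}$, raising both relations to the $3^{m}$‑th power also gives $\sum_{i}\epsilon_{i}u_{i}^{-1}=0$ and $\sum_{i}\epsilon_{i}u_{i}^{-2}=0$; equivalently, with $A=\{i:\epsilon_{i}=1\}$, $B=\{i:\epsilon_{i}=-1\}$, one has $\sum_{A}u_{i}^{k}=\sum_{B}u_{i}^{k}$ for all $k\in\{-2,-1,1,2\}$. Since $\delta=3\equiv1-b\pmod{3}$, Corollary~\ref{COR10} already yields $d\ge4$; alternatively, no weight‑$3$ codeword exists because $|A|=3$ forces the elementary symmetric functions $\sigma_{1}=\sigma_{2}=0$, so the $u_{i}$ are a triple root of $t^{3}-\sigma_{3}$ in characteristic $3$, while $|A|=2,\,|B|=1$ forces $2u_{1}u_{2}=0$.

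\emph{Lower bounds $d\ge5$ and $d\ge6$.} I translate each relation into elementary symmetric functions of the $u_{i}$ involved and exploit the characteristic‑$3$ facts that $2$ is invertible, $t^{2}+t+1=(t-1)^{2}$, and $t^{3}-a=(t-a^{1/3})^{3}$. For $m$ even I rule out weight $4$: $|A|=3,\,|B|=1$ forces $\sigma_{2}(u_{1},u_{2},u_{3})=0$, hence $u_{4}^{-1}=u_{1}^{-1}+u_{2}^{-1}+u_{3}^{-1}=\sigma_{2}/\sigma_{3}=0$, impossible; $|A|=|B|=2$ forces $u_{1}u_{2}=u_{3}u_{4}$ and $u_{1}+u_{2}=u_{3}+u_{4}$, so $\{u_{1},u_{2}\}=\{u_{3},u_{4}\}$, impossible; $|A|=4$ forces $\sigma_{1}=\sigma_{2}=\sigma_{3}=0$, so $u_{1},\dots,u_{4}$ are the roots of $t^{4}+\sigma_{4}$, i.e.\ a coset of the group of $4$th roots of unity, which lies in $U$ only if $4\mid3^{m}+1$, i.e.\ $m$ odd. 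Thus $d\ge5$ for $m$ even. For $m\equiv0\pmod{4}$ I further rule out weight $5$: $|A|=4,\,|B|=1$ forces $\sigma_{2}=0$ and, by conjugation, $\sigma_{3}=u_{5}^{-1}\sigma_{4}$, whence $\prod_{i=1}^{4}(t-u_{i})=(t-u_{5})(t^{3}-\sigma_{4}/u_{5})$ has $u_{5}$ as a root, forcing $u_{5}\in\{u_{1},\dots,u_{4}\}$, impossible; $|A|=3,\,|B|=2$, with $p:=\sum_{A}u_{i}=\sum_{B}u_{i}$, reduces — after feeding in the conjugated power‑sum identities and eliminating $u_{1}u_{2}u_{3}$ — to $p^{2}=u_{4}u_{5}$, i.e.\ $(u_{4}/u_{5})^{2}+(u_{4}/u_{5})+1=0$, forcing $u_{4}=u_{5}$, impossible (the degenerate case $p=0$ is disposed of directly); $|A|=5$ forces $\sigma_{1}=\sigma_{2}=\sigma_{3}=\sigma_{4}=0$, so $u_{1},\dots,u_{5}$ are a coset of the $5$th roots of unity, which lies in $U$ only if $5\mid3^{m}+1$, i.e.\ $m\equiv2\pmod{4}$. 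Hence $d\ge6$ whenever $m\equiv0\pmod{4}$, in particular for $m\equiv0$ and $m\equiv4\pmod{8}$.

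\emph{Matching constructions.} If $m$ is odd then $4\mid3^{m}+1$, so $\{1,-1,i,-i\}\subseteq U$ (with $i^{2}=-1$) is a weight‑$4$ codeword (all $\epsilon_{i}=1$; both power sums vanish), giving $d=4$. If $m\equiv2\pmod{4}$ then $5\mid3^{m}+1$, so the group of $5$th roots of unity sits inside $U$ and is a weight‑$5$ codeword, giving $d=5$. If $m\equiv4\pmod{8}$ then $m=4\cdot(\text{odd})$, so $82=3^{4}+1\mid3^{m}+1$; substituting $x\mapsto x^{(3^{m}+1)/82}$ embeds $\C_{(3,82,3,1)}$ into $\C_{(3,3^{m}+1,3,1)}$ weight‑preservingly, and I would exhibit — by a short computer search, exactly as for $g_{4}(x)$ in the proof of Theorem~\ref{COR144} — an explicit $6$‑term polynomial $g(x)\in\F_{3}[x]$ of degree $<82$ with $g(\xi)=g(\xi^{2})=0$ for $\xi=\beta^{(3^{m}+1)/82}$; then $c(x)=g\bigl(x^{(3^{m}+1)/82}\bigr)$ is a weight‑$6$ codeword, so $d\le6$ and hence $d=6$. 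For $m\equiv0\pmod{8}$ nothing beyond the bound $d\ge6$ already proved is asserted.

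\emph{Anticipated main obstacle.} The delicate part is the weight‑$5$ nonexistence for $m\equiv0\pmod{4}$, precisely the subcases $|A|=4,\,|B|=1$ and $|A|=3,\,|B|=2$: these require exactly the right combination of the four power‑sum identities (original and conjugated) followed by a characteristic‑$3$ collapse, and one must verify that none of the divisions performed along the way is by a quantity that could vanish. The only non‑self‑contained ingredient is the explicit $6$‑term polynomial behind the $m\equiv4\pmod{8}$ construction, verified by Magma just as in Theorem~\ref{COR144}; note that for $m\equiv0\pmod{8}$ one has $41\nmid3^{m}+1$, so this construction is unavailable — which is exactly why only the lower bound $d\ge6$ is claimed there.
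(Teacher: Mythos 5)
Your proposal is correct and follows essentially the same route as the paper: the BCH-type bound $d\ge 4$, a sign-pattern case analysis via elementary symmetric polynomials in characteristic $3$ (using the conjugated power sums $u\mapsto u^{-1}$) to rule out weights $4$ and $5$ in the relevant congruence classes, explicit weight-$4$ and weight-$5$ codewords from the subgroups of order $4$ and $5$ of $U_{3^m+1}$, and a computer-found sparse polynomial supported on the order-$82$ subgroup for $m\equiv 4\pmod 8$. The only piece left implicit is that explicit six-term polynomial, which the paper supplies as $f(x)=x^{16}-x^{15}+x^{11}+x^{5}-x+1$ (verified by Magma), so your argument is complete once that search is carried out.
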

\begin{proof}
The dimension of $\C_{(3,3^m+1,3,1)}$ has been given in \cite[Theorem 28]{D5}. According to Theorem \ref{TH10}, the minimum distance $d$ of
$\C_{(3,3^m+1,3,1)}$ is no less than $4$. For convenience, we use the notation $\sigma_{i,j}$ to denote the $\sigma_{i,j}(u_1,\ldots,u_i)$ in this proof.

If $m$ is odd, then $3^m+1=3^{2t+1}+1=(3+1)(3^{2t}-3^{2t-1}+\cdots-3+1)$. From Corollary \ref{COR12}, we have $d=4.$

If $m$ is even, then there is no element with order $4$ in $U_{3^m+1}$. For the sake of convenience, put $q=3^m$ in this proof. Firstly, we claim that $d\geq5$ in this case. Assume that $d=4$. Then there exists $\ccc^{\prime}=(c_1,\ldots,c_{q+1})\in\F_3^{q+1}$ such that $\ccc^{\prime}H^T=\mathbf{0}$, where $wt(\ccc^{\prime})=4$. Let $supp(\ccc^{\prime})=\{i_1,i_2,i_3,i_4\}$, then $c_{i_1},c_{i_2},c_{i_3},
c_{i_4}\in \{1,-1\}\subseteq U_{q+1}$. It then follows that
\begin{equation*}\label{ddd}
  \left\{
    \begin{array}{ll}
      c_{i_1}u_1+c_{i_2}u_2+c_{i_3}u_3+c_{i_4}u_4=0,\\
      c_{i_1}u_1^2+c_{i_2}u_2^2+c_{i_3}u_3^2+c_{i_4}u_4^2=0.
    \end{array}
  \right.
\end{equation*}
By the symmetry of $c_{i_1},c_{i_2},c_{i_3}$ and $c_{i_4}$, we only need to consider the following cases:
\begin{eqnarray*}
  {\rm(i)}~\{c_{i_1},c_{i_2},c_{i_3},c_{i_4}\}&=&\{1,1,1,1\}; \\
 {\rm(ii)}~ \{c_{i_1},c_{i_2},c_{i_3},c_{i_4}\}&=&\{1,1,1,-1\}; \\
 {\rm(iii)}~\{c_{i_1},c_{i_2},c_{i_3},c_{i_4}\}&=&\{1,1,-1,-1\}.
\end{eqnarray*}
In case {(i)}, we have that
\begin{equation*}\label{ddd}
  \left\{
    \begin{array}{ll}
      u_1+u_2+u_3+u_4=0,\\
      u_1^2+u_2^2+u_3^2+u_4^2=0.
    \end{array}
  \right.
\end{equation*}
Then $\sigma_{4,1}=\sigma_{4,2}=\sigma_{4,3}=0$. Let $\sigma_{x}=(x-u_1)(x-u_2)(x-u_3)(x-u_4)$. It is easy to obtain that $\sigma{(x)}=x^4-\sigma_{4,4}$, and $\sigma{(x)}$ can not have four zeroes in $U_{q+1}$ as $\gcd(3^m+1,4)=2$.

In case (ii), we obtain that $\sum_{i=1}^3u_i=u_4$ and $\sum_{i=1}^3u_i^2=u_4^2$. Combining the previous two equations, we have that $u_3\sigma_{2,1}+\sigma_{2,2}=0$. It then follows from $u_3\in U_{q+1}$ that $u_1^2+u_2^2+u_1u_2=(u_1-u_2)^2=0$, a contradiction!

In case (iii), we need to check that there is no $\{u_1,u_2,u_3,u_4\}\in {U_{q+1}\choose 4}$ such that $u_1+u_2-(u_3+u_4)=0$ and $u_1^2+u_2^2-(u_3^2+u_4^2)=0$.
Combining the previous two equations, we have $u_3^2-(u_1+u_2)u_3+u_1u_2=(u_3-u_2)(u_3-u_1)=0$, which is contrary to our assumption that $u_1,u_2$ and $u_3$ are pairwise distinct.

In summary, $d>4$ when $m$ is even. In the even case, we have two more subcases.

\begin{itemize}
  \item If $m=4t+2$, then $3^{4t+2}+1=(9+1)(9^{2t}-9^{2t-1}+\cdots-9+1)$. It then follows from $5|q+1$ that there exists $\xi_1$ in $U_{q+1}$ which has order $5$. Let $u_1=\xi_1$, $u_2=\xi_1^2$, $u_3=\xi_1^3$, $u_4=\xi_1^4$ and $u_5=\xi_1^5$. It is not difficult to check that $\sum_{i=1}^5u_i=0$ and $\sum_{i=1}^5u_i^2=0.$
  \item If $m=4t$, then there is no element with order $5$ in $U_{q+1}$. We claim that $d> 5$ in this subcase. Assume that $d=5$.
Then there exist $\{u_1,u_2,u_3,u_4,u_5\}\in U_{q+1}$ and $(c_{i_1},c_{i_2},c_{i_3},c_{i_4},c_{i_5})\in(\F_3^*)^5$ such that
\begin{equation*}\label{fff}
  \left\{
    \begin{array}{ll}
      c_{i_1}u_1+c_{i_2}u_2+c_{i_3}u_3+c_{i_4}u_4+c_{i_5}u_5=0,\\
      c_{i_1}u_1^2+c_{i_2}u_2^2+c_{i_3}u_3^2+
c_{i_4}u_4^2+c_{i_5}u_5^2=0.
    \end{array}
  \right.
\end{equation*}
By the symmetry of $c_{i_1},c_{i_2},c_{i_3},c_{i_4}$ and $c_{i_5}$, we only need to consider the following three cases:
\begin{eqnarray*}
  {\rm(i)}~\{c_{i_1},c_{i_2},c_{i_3},c_{i_4},c_{i_5}\}&=&\{1,1,1,1,1\}; \\
 {\rm(ii)}~ \{c_{i_1},c_{i_2},c_{i_3},c_{i_4},c_{i_5}\}&=&\{1,1,1,1,-1\}; \\
 {\rm(iii)}~\{c_{i_1},c_{i_2},c_{i_3},c_{i_4},c_{i_5}\}&=&\{1,1,1,-1,-1\}.
\end{eqnarray*}
In case (i), we have that

\begin{equation*}\label{ggg}
\left\{
  \begin{array}{lll}
    u_3+u_4+u_5=-\sigma_{2,1}, \\
    u_3u_4+u_3u_5+u_4u_5={\sigma_{2,1}}^2-\sigma_{2,2},\\
    u_3u_4u_5=-(\sigma_{2,1}^3+\sigma_{2,1}\sigma_{2,2}).
  \end{array}
\right.
\end{equation*}

It then follows from $u_3u_4u_5\in U_{q+1}$ that $(u_3u_4u_5)^q=(u_3u_4u_5)^{-1}$. Then $(\sigma_{2,1}^3+\sigma_{2,1}\sigma_{2,2})^2=\sigma_{2,2}^3$, which is the same as
\begin{equation}\label{zzz}
  (u_1-u_2)(u_1^5-u_2^5)=0.
\end{equation} Eq.(\ref{zzz}) holds only if $u_1^5=u_2^5$.
Since $m=4t$ and $\gcd(5,q+1)=1$, it then follows from  $u_1^5=u_2^5$ that $u_1=u_2$, a contradiction!

In case (ii), we observe that
\begin{equation*}\label{ggg}
\left\{
  \begin{array}{lll}
    u_1+u_2+u_3+u_4=u_5=\sigma_{4,1}(u_1,u_2,u_3,u_4), \\
    u_1^2+u_2^2+u_3^2+u_4^2=u_5^2=\sigma_{4,1}^2(u_1,u_2,u_3,u_4)-
    2\sigma_{4,2}(u_1,u_2,u_3,u_4), \\
    u_1^{-1}+u_2^{-1}+u_3^{-1}+u_4^{-1}=u_5^{-1}=\sigma_{4,1}^q
    (u_1,u_2,u_3,u_4)=\frac{\sigma_{4,3}(u_1,u_2,u_3,u_4)}
    {\sigma_{4,4}(u_1,u_2,u_3,u_4)}.
  \end{array}
\right.
\end{equation*}
Then $\sigma_{4,2}(u_1,u_2,u_3,u_4)=0$ and $\sigma_{4,2}(u_1,u_2,u_3,u_4)=u_5\sigma_{4,3}(u_1,u_2,u_3,u_4).$
Let $\sigma{(x)}=x^4-u_5x^3-\sigma_{4,3}(u_1,u_2,u_3,u_4)x+u_5
\sigma_{4,3}(u_1,u_2,u_3,u_4)$. From the above fact, we can obtain $\sigma(x)=(x^3-\sigma_{4,3}(u_1,u_2,u_3,u_4))(x-u_5)$, which can not have $4$ zeroes in $U_{q+1}$.

In case (iii), we have that
\begin{equation*}\label{ggg}
\left\{
  \begin{array}{lll}
    u_1+u_2+u_3+u_4+u_5=0=\sigma_{5,1}, \\
    u_1^2+u_2^2+u_3^2+u_4^2+u_5^2=0=\sigma_{5,1}^2-
    2\cdot\sigma_{5,2}, \\
    u_1^{-1}+u_2^{-1}+u_3^{-1}+u_4^{-1}+u_5^{-1}=0=
    \frac{\sigma_{4,1}}{\sigma_{5,1}}.
  \end{array}
\right.
\end{equation*}
Hence, $\sigma_{5,1}=\sigma_{5,2}=\sigma_{5,3}=\sigma_{5,4}=0$.
Let $\sigma(x)=(x-u_1)(x-u_2)(x-u_3)(x-u_4)(x-u_5)$. Similarly, we have $\sigma(x)=x^5+\sigma_{5,1},$ which can not have $4$ zeroes in $U_{q+1}$ since $\gcd(q+1,5)=1.$

Therefore, $d>5$ when $m=4t$.
\end{itemize}

If $m\equiv4~({\rm mod}~8)$, then $x^{82}-1|x^{3^m+1}-1$. There exists a six-term factor $f(x)=x^{16}-x^{15}+x^{11}+x^5-x+1$ which satisfies $f(x)|x^{82}-1$ and $f(\beta^{\frac{5(q+1)}{82}})=f(\beta^{\frac{5(q+1)}{41}})=0$, where $\beta$ is a generator of $U_{q+1}$. The fact that $\beta^{\frac{5(q+1)}{82}}$ and $\beta^{\frac{5(q+1)}{41}}$ are roots of $f(x)$ was checked by Magma.

This completes the proof.
\end{proof}
\begin{example}\label{example-04}
We have the following examples for the code of Theorem \ref{COR15}.
\begin{itemize}
\item Let $m=2$, then the code $\mathcal{C}_{(3,10,3,1)}$ has parameters $[10,2,5]$.
\item Let $m=3$, then the code $\mathcal{C}_{(3,28,3,1)}$ has parameters $[28,16,4]$.
\item Let $m=4$, then the code $\mathcal{C}_{(3,82,3,1)}$ has parameters  $[82,66,6]$.
\item Let $m=5$, then the code $\mathcal{C}_{(3,244,3,1)}$ has parameters  $[244,244,4]$.
\item Let $m=6$, then the code $\mathcal{C}_{(3,730,3,1)}$ has parameters  $[730,706,5]$.
\end{itemize}
The code $\mathcal{C}_{(3,244,3,1)}$ are distance-optimal cyclic codes when $m=2$ and $m=3$.
These results are verified by Magma programs.
\end{example}




\begin{remark}
 (i) From the above examples in this paper, the codes $\C_{(q,q^l+1,\delta,0)}$ and $\C_{(q,q^l+1,\delta,1)}$ are sometimes dimension-optimal or best possible for cyclic codes according to \cite[Appendix A]{D1} and \cite{Gra} when $m$ and $\delta$ are small. Some examples which have nice parameters were given in \cite[Example 20, Example 22]{Li1}.

 (ii) Conjecture 23 in \cite{Li1} said that the minimum distance of $\C_{(3,3^l+1,5,0)}$ is $8$, which is supported by Magma experiments. In this paper, we have proved that $d=8$ when $l$ is odd or $l\equiv4~({\rm mod}~8)$ in Theorem \ref{COR144}, the remaining case is still open.
When $l$ is odd, $l\equiv2~({\rm mod}~4)$ or $l\equiv4~({\rm mod}~8)$, the minimum distance of $\C_{(3,3^l+1,3,1)}$ are given in Theorem  \ref{COR15}.  We do not have enough numerical examples to conjecture that $d=6$ as Magma program is hard to compute the minimum distance of $\C_{(3,3^l+1,3,1)}$ when $l\equiv0~({\rm mod}~8)$. The reader is cordially invited to attack these two remaining cases.
\end{remark}

\section{Conclusion}
The main contributions of this paper are the following:
\begin{enumerate}
  \item [(1)] We provided a necessary and sufficient condition for $0\leq a\leq q^m$ being a coset leader modulo $n$ for any prime power $q$.
  \item [(2)] A lower bound on the minimum distance of $\C_{(q,q+1,\delta,1)}$ is given. 
 Some improved lower bounds on the minimum distance of $\C_{(q,q^m+1,\delta,0)}$ and $\C_{(q,q^m+1,\delta,b)}$ are given. 
  \item [(3)] Two open problems in \cite{D5,Li1} are partially solved in this paper. The method of using the ESPs $\sigma_{i,j}(u_1,\ldots,u_j)$ to study the minimum distance of BCH codes is novel and innovative.
  \item [(4)] According to the tables of best codes known in~\cite{Gra}, some of the proposed linear codes are optimal.
\end{enumerate}

\end{document}